\newclass{\paraNP}{paraNP}
\def\R{\mathbb R} 
\def\dist{\operatorname{d}}
\def\cP{\mathcal P}
\def\cI{\mathcal I}
\def\cY{\mathcal Y}
\def\Z{\mathbb Z}
\newlength{\dhatheight}
\renewcommand{\leq}{\leqslant}
\renewcommand{\geq}{\geqslant}
\renewcommand{\epsilon}{\varepsilon}
\newcommand{\eps}{\varepsilon}
\def\symdiff{\mathop{\vartriangle}}
\def\angled#1{\langle #1 \rangle}
\DeclareMathOperator{\opt}{\mathbf{opt}}
\DeclareMathOperator{\pack}{pack}
\DeclareMathOperator{\cost}{\mathbf{cost}}
\DeclareMathOperator{\ar}{\mathbf{ar}}
\DeclareMathOperator{\midup}{\mathbin{\curvearrowleft}}
\DeclareMathOperator{\middown}{\mathbin{\curvearrowright}}
\DeclareMathOperator{\lca}{\mathbin{\uparrow}}
\DeclareMathOperator{\lcaskew}{\mathbin{\nearrow}}
\DeclareMathOperator{\closest}{C_{min}}
\DeclareMathOperator{\furthest}{C_{max}}
\newenvironment{tightcenter}
 {\parskip=0pt\par\nopagebreak\centering}
 {\par\noindent\ignorespacesafterend}
\newlength{\RoundedBoxWidth}
\newsavebox{\GrayRoundedBox}
\newenvironment{GrayBox}[1]%
   {\setlength{\RoundedBoxWidth}{\textwidth-4.5ex}
    \def\boxheading{#1}
    \begin{lrbox}{\GrayRoundedBox}
       \begin{minipage}{\RoundedBoxWidth}%
   }{%
       \end{minipage}
    \end{lrbox}%
    \begin{tightcenter}%
    \begin{tikzpicture}%
       \node(Text)[draw=black!20,fill=white,rounded corners,%
             inner sep=2ex,text width=\RoundedBoxWidth]%
             {\usebox{\GrayRoundedBox}};
        \coordinate(x) at (current bounding box.north west);
        \node [draw=white,rectangle,inner sep=3pt,anchor=north west,fill=white] 
        at ($(x)+(6pt,.75em)$) {\boxheading};
    \end{tikzpicture}
    \end{tightcenter}\vspace{0pt}%
    \ignorespacesafterend
}    
\newenvironment{problem}[2][]{\noindent\ignorespaces%
                                \FrameSep=6pt%
                                \parindent=0pt%
                \vspace*{-.5em}
                \ifthenelse{\isempty{#1}}{%
                  \begin{GrayBox}{\textsc{#2}}%
                }{%
                  \begin{GrayBox}{\textsc{#2} parametrised by~{#1}}%
                }
                \newcommand\Prob{Problem:}%
                \newcommand\Input{Input:}%
                \begin{tabular*}{\textwidth}{@{\hspace{.1em}} >{\itshape} p{1.6cm} p{0.8\textwidth} @{}}%
            }{
                \end{tabular*}%
                \end{GrayBox}%
                \vspace*{-.5em}
                \ignorespacesafterend
            }
\newcommand{\Problem}[1]{\textsc{#1}}
\newtheorem{theorem}{Theorem}[section]
\newtheorem{lemma}[theorem]{Lemma}
\newtheorem{definition}[theorem]{Definition}
\newtheorem{corollary}[theorem]{Corollary}
\newtheorem{proposition}[theorem]{Proposition}
\newtheorem*{claim}{Claim}
\def\Wahlstrom{Wahlstr{\"o}m\xspace}
\def\plog{\log^{\kern-.1pt{\scriptscriptstyle O(1)}}\kern-2pt}
\begin{document}

\title{Parameterized Algorithms for Zero Extension and Metric Labelling Problems}

\author[1]{Felix Reidl}
\author[1]{Magnus \Wahlstrom}
\affil[1]{Royal Holloway, University of London, TW20 0EX, UK \\
         \texttt{(Felix.Reidl|Magnus.Wahlstrom)@rhul.ac.uk}}
\date{}
\maketitle

\begin{abstract}
  We consider the problems \Problem{Zero Extension} and \Problem{Metric Labelling}  
  under the paradigm of parameterized complexity. These are natural, well-studied problems
  with important applications, but have previously not received much attention from
  parameterized complexity. 
  
  Depending on the chosen cost function~$\mu$, we find that different
  algorithmic approaches can be applied to design \FPT-algorithms: for arbitrary~$\mu$
  we parameterized by the number of edges that cross the cut (not the cost)
  and show how to solve \Problem{Zero Extension} in time~$O(|D|^{O(k^2)} n^4 \log n)$ using
  randomized contractions. We improve this running time with respect to both parameter and
  input size to~$O(|D|^{O(k)} m)$ in the case where~$\mu$ is a metric.
  We further show that the problem admits a polynomial \emph{sparsifier}, 
  that is, a kernel of size~$O(k^{|D|+1})$ that is \emph{independent of the metric}~$\mu$.

  With the stronger condition that~$\mu$ is described by the distances of
  leaves in a tree, we parameterize by a \emph{gap parameter}~$(q - p)$ between
  the cost of a true solution~$q$ and a `discrete relaxation'~$p$
  and achieve a running time of~$O(|D|^{q-p} |T|m + |T|\phi(n,m))$
  where~$T$ is the size of the tree over which~$\mu$ is defined and~$\phi(n,m)$
  is the running time of a max-flow computation. We achieve a similar running
  for the more general \Problem{Metric Labelling}, while also allowing~$\mu$
  to be the distance metric between an arbitrary subset of nodes in a
  tree using tools from the theory of VCSPs. We expect the methods used
  in the latter result to have further applications.
\end{abstract}

\section{Introduction}\label{sec:intro}
The task of extending a partial labelling of a few data points to a full data
set while minimizing some error function is a natural computational step for
scientific and engineering tasks. For the particular case of data imposed with
a (binary) relationship, we find that the problems \Problem{Zero Extension}
and \Problem{Metric Labelling} are well-suited for optimization in image
processing\footnote{The recent advent of convolutional neural networks seems
to have lessened the importance here, but \Problem{Metric Labelling} is still
used (for a recent example see \eg
\cite{ApplInpainting})}~\cite{DirtyPictures}, Markov Random
Fields~\cite{MetricLabellingApprox}, social network
classification~\cite{ApplSocNetClassifier,ApplHypertextClassifier}, or sentiment
analysis in natural language processing~\cite{ApplRatingInference}.

The problem settings are as follow. For \Problem{Zero Extension}, we are given
a graph~$G$ and a partial labelling~$\tau\colon S \to D$, for some set of
terminals~$S \subseteq V(G)$, alongside a cost function~$\mu\colon D \times D
\to \mathbb R^+$. Our task is to compute a labelling~$\lambda\colon V(G) \to
D$ which agrees with~$\tau$ on~$S$, subject to the following cost: 
for each edge~$uv \in G$ we pay the cost~$\mu(\lambda(u),\lambda(v))$. 
In \Problem{Metric Labelling}, we are given a graph~$G$, a cost function~$\mu$ as
above, and a \emph{labelling cost}~$\sigma\colon V(G) \times D \to \mathbb R^+$.
Again we are asked to compute a labelling~$\lambda$ and in addition to the
above edge-costs we now also pay~$\sigma(v, \lambda(v))$ for each
vertex. Note that this model allows us to emulate terminals, 
by making the cost~$\sigma(v, \lambda(v))$ prohibitive for all but the required label $\lambda(v)$. 
Both problems are generalizations of the \Problem{Multiway Cut} problem
(we simply let~$\mu$ be identically one for all distinct pairs), which has
garnered considerable attention from the $\FPT$ community in the past and
formed a crystallization nucleus for the very fruitful research 
of cut-based problems (see \eg \cite{Marx06cuts,KratschW12FOCS,RandomContraction,IwataWY16,MarxR14MCalg}).

We will find application for most of these tools in our results listed
below, but we wish to particularly highlight the use of tools and relaxations 
from \emph{Valued CSPs} (VCSPs) for designing $\FPT$ algorithms under gap parameters.
VCSPs are a general framework for expressing optimisation problems,
via the specification of a set $\Gamma$ of \emph{cost functions}
(also referred to as a \emph{constraint language}).
Many important problems correspond to VCSP for a specific language $\Gamma$,
including every choice of a specific metric for the problems above.
Thapper and \v{Z}ivn\'y~\cite{ThapperZivnyCSP} characterized the languages $\Gamma$ for
which the resulting VCSP is tractable. \looseness-1

The use of a tractable VCSP as a \emph{discrete relaxation} of an
NP-hard optimisation problem has previously been used for the design of 
surprisingly powerful $\FPT$ algorithms~\cite{IwataWY16} 
(see also related improvements~\cite{IwataYY17,Wahlstrom17bias}).
In this paper, we advance this research in two ways. 
First, previous algorithms of this type have required the relaxation
to have a \emph{persistence} property, 
which allows an optimum to be found by sequentially fixing variables. In this paper,
we relax this condition to a weaker \emph{domain consistency} property.
Second, we use a folklore result from VCSP research to restrict
the behaviour of optimal solutions to a VSCP instance
in order to facilitate the proof that the domain consistency 
property holds for the relevant VCSPs.
See Section~\ref{section:tree-metrics} for details.

\textbf{Related work.}
So far, \Problem{Zero Extension} and \Problem{Metric Labelling} have been researched primarily from the perspective of
efficient and approximation algorithms (for a more complete overview and
hardness results we refer to the paper by Manokaran, Naor, Raghavendra and
Schwartz \cite{ManokaranNRS08STOC}).
Kleinberg and Tardos~\cite{MetricLabellingApprox} introduced \Problem{Metric Labelling} 
and provided a $O(\log |S| \log\log |S|)$ approximation. A result by Fakcharoenphol,
Rao, and Talwar regarding embedding general metrics into tree metrics~\cite{TreeMetricApprox}
improves the ratio of this algorithm to~$O(\log |S|)$ and a lower bound
of~$O((\log |S|)^{\sfrac{1}{2}-\eps})$ was proved by Chuzhoy and Naor~\cite{MetricLabellingHardness}.
Karzanov~\cite{Karzanov980Ext} introduced \Problem{Zero Extension} with the
specific case of~$\mu$ being a graph metric, that is, equal to the distance
metric of some graph~$H$. The central question of his work---for which
graphs~$H$ the problem is tractable---was just recently fully answered by
Hirai~\cite{Hirai13SODA}. Picard and Ratliff much earlier showed that an
equivalent problem is tractable on trees~\cite{ZeroExtOnTrees}.
Fakcharoenphol, Harrelson, Rao, and Talwar showed that the problem can be
approximation to within a factor of~$O(\log |S| / \log\log |S|)$
\cite{ZeroExtApprox}. Karloff, Khot, Mehta, Rabani used the approach by
Chuzhoy and Naor to show that no factor of~$O((\log |S|)^{\sfrac{1}{4}-\eps})$
for any~$\eps > 0$ is possible unless~$\NP \subseteq
\QP$~\cite{ZeroExtHardness}. More recently, Hirai and
Pap~\cite{Hirai14MOR,SPaths} studied the problem from a more structural
angle and we make use of their duality result in the
following.

\textbf{Our results.}
In this paper we study both problems from the perspective of parameterized
complexity. As the choice of metric has a strong effect on the complexity of 
the problem, we give a range of results, ranging from the more generally applicable
to the algorithmically stronger, both in terms of running time and parameterization. 
In the most general setting, when~$\mu$ is a general cost function or a metric, we will 
parametrize not by the \emph{cost} of a solution but by the number of 
\emph{crossing edges} which are precisely the bichromatic edges under a
labelling~$\lambda$. This in particular allows us to include the case of
zero cost pairs under~$\mu$. For general cost functions, we employ the technique
of randomized contractions~\cite{RandomContraction} and prove the following:

\begin{restatable}{theorem}{thmgeneral}
  \Problem{Zero Extension} can be solved in time~$O(|D|^{O(k^2)} n^4 \log n)$ where~$k$
  is a given upper bound on the number of crossing edges in the solution.
\end{restatable}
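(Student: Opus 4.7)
The plan is to adapt the randomized-contractions framework of Chitnis, Cygan, Hajiaghayi, Pilipczuk and Pilipczuk to iteratively peel off label classes of an optimal solution. Any labelling $\lambda\colon V(G) \to D$ with at most $k$ bichromatic edges partitions $V(G)$ into classes $V_d := \lambda^{-1}(d)$, and each such class is a union of connected components of $G - F$, where $F$ is the crossing-edge set of size at most $k$. In particular every non-trivial class (one with an outgoing crossing edge) has edge-boundary of size at most $k$, and the number of such classes is at most $2k$.

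First I would argue that it suffices to find, in some optimal solution, a single pair $(C, d)$ where $C \subseteq V(G)$ is connected in $G$, has edge-boundary of size at most $k$, has $\tau(s) = d$ for every terminal $s \in C$, and coincides with $V_d$. Given such a pair, we fix $\lambda(v) := d$ on $C$, contract $C$ into a single terminal of label $d$, decrease the budget $k$ by $|\partial C|$, and recurse; since the number of non-trivial classes is at most $2k$, the recursion has depth $O(k)$. Discovering $(C, d)$ is exactly what randomized contractions delivers: colour each vertex independently with a random label from $D$ and, for each candidate $d \in D$, extract a minimum edge-cut separating the ``$d$-coloured'' vertices together with $\tau^{-1}(d)$ from everything else. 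This recovers the desired $C$ with probability at least $|D|^{-O(k)}$ per attempt. Repeating $|D|^{O(k)}$ times per level over $O(k)$ recursive levels, and then running a binary search over the $O(\log n)$ cost scales, yields the advertised $|D|^{O(k^2)} n^4 \log n$ bound, with each max-flow contributing $O(n^3)$.

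The main obstacle I foresee is the combinatorial, rather than cost-based, nature of the parameter. Because $\mu$ may vanish on some label pairs, classical arguments based on total edge cost do not apply, and care must be taken with classes $V_d$ that contain no terminal: such a class could in principle be merged into a neighbour at zero $\mu$-cost, but doing so could alter the crossing-edge count either up or down. A preliminary normalisation is therefore needed, arguing that an optimal solution can be chosen so that every non-trivial class we attempt to peel is terminal-anchored, without inflating the total number of crossing edges; this is where the lack of a triangle inequality on $\mu$ bites hardest. A secondary technical point is to verify that the $|D|^{-O(k)}$ success probability of the randomized-contraction subroutine transfers cleanly from the binary formulation in which it was originally established to the $|D|$-ary colouring used here. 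I expect this to amount to re-running the underlying probability computation with the larger alphabet and re-accounting for the ``separator on one side, its shadow on the other'' event, but it must be done carefully in order to exhibit precisely the exponent $O(k^2)$ rather than something larger.
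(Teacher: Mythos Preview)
Your proposal misidentifies what the randomized-contractions framework actually provides, and the concrete procedure you describe does not have the success probability you claim. You propose to ``colour each vertex independently with a random label from $D$'' and then, for each $d$, extract a minimum edge-cut around the $d$-coloured vertices together with $\tau^{-1}(d)$, asserting this recovers some class $V_d$ with probability $|D|^{-O(k)}$. But a class $V_d$ may contain $\Theta(n)$ vertices, and nothing in a single random vertex-colouring forces a min-cut computation to output $\delta(V_d)$ after only $O(k)$ vertices receive the ``right'' colour; the event you need is not determined by the $2k$ cut endpoints alone. The randomized-contractions machinery of Chitnis~\emph{et al.}\ is not a colour-coding of vertices followed by a min-cut, and the $|D|^{-O(k)}$ bound you hope to salvage from a ``$|D|$-ary'' version of their argument does not exist in the form you sketch. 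Your recursion is also unclear: contracting a discovered class into a terminal does not decrease the crossing-edge budget by $|\partial C|$, since those boundary edges remain crossing in the residual instance.

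The paper takes an entirely different route. It uses \emph{recursive understanding}: repeatedly find a $(\sigma,k)$-good separation with $\sigma=|D|^k+1$, enumerate all $|D|^{\leq k}$ labellings of the $\leq k$ boundary vertices on the smaller side, recursively solve each resulting subinstance, collect the union $E^*$ of their crossing edges (so $|E^*|<\sigma$), and contract everything in that side outside $E^*$. This shrinks the instance until it is $(\sigma,k)$-connected, at which point a separate lemma handles the base case: in a $(\sigma,k)$-connected graph any solution has one large component and at most $k$ small ones of size $\leq \sigma$, and an \emph{edge} splitter (not vertex colouring) of parameters $(2\sigma k,k)$ is used to colour the $\leq k$ crossing edges red while preserving blue spanning trees inside each component. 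The $|D|^{O(k^2)}$ factor arises because $\sigma=|D|^k$ enters the $(\sigma+k)^{O(k)}$ bound for both finding good separations and for the edge splitter. None of this is the peel-one-class-at-a-time scheme you outline; if you want to rescue your approach you would need a genuinely new argument for isolating a single label class with the stated probability, and the paper does not supply one.
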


\noindent
When~$\mu$ is a metric, we are able to give a linear-time
$\FPT$ algorithm, while also improving the dependency on the parameter, 
using important separators~\cite{Marx06cuts}:


\begin{restatable}{theorem}{thmmetric}\label{theorem:pushing-algorithm}
  \Problem{Zero Extension} with metric cost functions can be 
  solved in time~$O( |D|^{O(k)} \cdot m)$ where~$k$ 
  is a given upper bound on the number of crossing edges in the solution.
\end{restatable}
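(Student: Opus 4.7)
The plan is to adapt Marx's important separator technique~\cite{Marx06cuts} to \Problem{Zero Extension} with metric cost. Given any labelling $\lambda \colon V(G) \to D$ with at most $k$ crossing edges, the clusters $V_d := \lambda^{-1}(d)$ satisfy $\sum_d |\partial V_d| \leq 2k$ (each crossing edge lies in exactly two cluster boundaries) and each $V_d$ separates $S_d := \tau^{-1}(d)$ from $S \setminus S_d$. Parameterising by $k$ rather than by cost, and exploiting the triangle inequality, lets us move from randomised contractions to a deterministic enumeration of important separators.

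The core step is a \emph{pushing lemma}: there exists an optimal solution in which every cluster boundary $\partial V_d$ is an important $(S_d, S \setminus S_d)$-separator. I would prove it by a local exchange. Given $V_d' \supsetneq V_d$ with $|\partial V_d'| \leq |\partial V_d|$ and $V_d' \cap S = S_d$, relabel $V_d' \setminus V_d$ to $d$ and compare costs edge by edge. Edges inside $V_d' \setminus V_d$ and edges between $V_d$ and $V_d' \setminus V_d$ become monochromatic, saving cost outright. The only potential losses are on edges $(u,v)$ with $u \in V_d' \setminus V_d$ and $v \notin V_d'$, where the triangle inequality bounds the per-edge increase by $\mu(d, \lambda(u))$. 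The crucial accounting is that the boundary inequality $|\partial V_d'| \leq |\partial V_d|$ forces a surplus of saved ``inner boundary'' edges (those of $\partial V_d$ now interior to $V_d'$) that can absorb the penalties via a flow-style charging along paths inside $V_d'$ from each $u$ back to $V_d$.

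Given the lemma, the algorithm enumerates tuples $(\Delta_d)_{d \in D}$ of important separators, one per label, with $\sum_d |\Delta_d| \leq 2k$. Marx's bound of $4^s$ on the number of important $(S_d, S \setminus S_d)$-separators of size at most $s$, and their enumeration in $O(4^s \cdot m)$ time, together with a stars-and-bars count $\sum_{S=0}^{2k} \binom{|D|+S-1}{S} \cdot 4^S = |D|^{O(k)}$, bounds the total number of tuples. For each tuple, the connected components of $G - \bigcup_d \Delta_d$ yield a candidate partition: each terminal-containing component receives its forced label (inconsistent tuples are discarded) and each free component the label minimising its $\mu$-cost on incident boundary edges. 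A single BFS and a linear scan through the at most $2k$ cut edges costs $O(m)$ per candidate, giving $O(|D|^{O(k)} \cdot m)$ in total.

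The principal obstacle is the pushing lemma itself: unlike in \Problem{Multiway Cut}, enlarging a cluster here alters $\mu$-costs on every relabelled vertex's incident edges, so the amortisation must combine the triangle inequality with the edge-count inequality $|\partial V_d'| \leq |\partial V_d|$ via a careful charging argument, and the bound becomes tight only because the metric property lets boundary edges carry labels at distance zero. A secondary difficulty is that pushes for different labels interact; this is handled by performing the pushes in a fixed order and re-verifying that each individual push, given the currently fixed clusters, is still cost-non-increasing.
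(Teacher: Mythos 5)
Your high-level strategy---a pushing lemma driven by the triangle inequality, followed by branching on important separators---is the same as the paper's, and your bound is in the right ballpark. But the single-shot ``enumerate a tuple of separators, then greedily label free components'' algorithm skips over the complication the paper singles out explicitly: after pushing, an optimal $\lambda$ can still have \emph{free clusters}---connected sets of vertices carrying a label used by no terminal, or a label $d$ whose class $\lambda^{-1}(d)$ is disconnected from $t_d$. For a label $d$ with $S_d = \emptyset$ the phrase ``important $(S_d, S \setminus S_d)$-separator'' is vacuous, and for $d$ used by a terminal the pushed boundary $\delta(V_d)$ is a furthest min-cut between the (possibly disconnected) set $V_d$ and $S - t_d$, not the boundary of the set reachable from $S_d$, so it need not be an important $(S_d, S - S_d)$-separator in the formal sense. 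Consequently $\bigcup_d \Delta_d$ over your tuple does not necessarily cover the crossing edges of any optimal solution, and a connected ``free component'' of $G - \bigcup_d \Delta_d$ can correspond to several distinct label classes in every optimum. A concrete instance where pushing alone does not merge adjacent free clusters: two non-terminal vertices $a,b$ joined by a single edge, with $c$ parallel edges from $a$ to each of $t_1,t_2$ and $c$ parallel edges from $b$ to each of $t_3,t_4$; with a metric in which $\mu(5,1)=\mu(5,2)=1$, $\mu(6,3)=\mu(6,4)=1$, $\mu(5,6)=10$ and the remaining distances taken tightly, the labelling $\lambda(a)=5$, $\lambda(b)=6$ is strictly better than any single label for $\{a,b\}$, and all three classes $V_5, V_6, V_i$ are stable under pushing. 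The paper addresses exactly this with a second branching stage that grows important separators around \emph{non-terminal} vertices, and a third brute-force stage over $|D|^{O(k)}$ labellings of the $O(k)$ remaining vertices; you need some analogue of both. (Your per-component greedy labelling is also under-specified whenever two free components are joined by a removed edge, since its optimal label depends on the other component's.)

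A secondary gap is in the pushing lemma itself. You prove it for an arbitrary $V_d' \supsetneq V_d$ with $|\partial V_d'| \le |\partial V_d|$ via a ``flow-style charging along paths inside $V_d'$''. That charging needs $|\partial V_d'|$ edge-disjoint paths from $V_d$ through $V_d'$ saturating $\partial V_d'$; but $|\partial V_d'| \le |\partial V_d|$ does not force the min $(V_d, V\setminus V_d')$-cut to equal $|\partial V_d'|$---there can be a smaller cut between them, so the paths need not exist. The paper avoids this by pushing only to the set reachable after a furthest min-cut from $V_d$ to $S - t_\ell$, where max-flow/min-cut hands you exactly the family of paths you want; its proof of Lemma~4.2 then performs precisely your path-by-path triangle-inequality argument. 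For the algorithm you never need the stronger form, so you should formulate the pushing lemma as the paper does.
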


\noindent
For the general metric setting, we also have our most surprising result, 
demonstrating that \Problem{Zero Extension} admits a \emph{sparsifier};
that is, we prove that it admits a polynomial kernel \emph{independent of the metric~$\mu$}. This
result crucially builds on the technique of \emph{representative
sets}~\cite{KratschW12FOCS,Marx09-matroid,Lovasz1977}. The exact formulation
of the result is somewhat technical and we defer it to Section~\ref{section:kernel},
but roughly, we obtain a kernel of size~$O(k^{|S|+1})$, independent of~$\mu$,
where~$k$ is again the number of crossing edges.
This result is a direct, seemingly far-reaching generalization of 
the polynomial kernel for \Problem{$s$-Multiway Cut}~\cite{KratschW12FOCS}.

Next, we consider the case when~$\mu\colon D \to \mathbb Z^+$ is induced by
the distance in a tree~$T$ with~$D \subset V(T)$. Here, relaxing the problem
to allow all labels~$V(T)$ as vertex values defines a tractable
discrete relaxation, in the sense discussed above. 
Using techniques from VCSP, we design a gap-parameter algorithm: 

\begin{restatable}{theorem}{thmtrees}
  Let $I=(G, \tau, \mu, q)$ be an instance of \textsc{Zero Extension}  
  where $\mu$ is an induced tree metric on a set of labels $D$ 
  in a tree $T$, and let $\hat I=(G, \tau, \hat \mu, q)$ be the
  relaxed instance. Let $p = \cost(\hat I)$. Then we can solve $I$
  in time $O(|D|^{q-p} |T||D|nm)$. 
\end{restatable}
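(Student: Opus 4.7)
The plan is a branching algorithm whose recursion depth is controlled by the gap $q - p$. At the root I solve the relaxed instance $\hat I$: because $\hat\mu$ is the full distance metric of the tree $T$, this is a tractable VCSP and an optimal assignment $\hat\lambda\colon V(G)\to V(T)$ together with its cost $p$ can be computed in polynomial time---concretely, by a cascade of $O(|T|)$ min-cut computations on $G$, at a total cost of $O(|T||D|nm)$ per call. If $p > q$ the instance is infeasible, and if the optimal $\hat\lambda$ already takes only values in $D$ it is already a valid $D$-solution and we return it. Otherwise I pick a single vertex $v \in V(G)$ whose value I branch on: for each $d \in D$, form the instance $I_d$ by setting $\tau(v) := d$, update $\hat I$ accordingly, and recurse. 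Since any feasible solution assigns \emph{some} label of $D$ to $v$, the minimum across the $|D|$ branches is the true optimum of $I$.

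For this scheme to fit in the claimed running time, each recursive call must decrease the gap by at least $1$. To achieve this I would pick $v$ so that \emph{every} optimum of $\hat I$ assigns $v$ a label in $V(T) \setminus D$. Then for any $d \in D$, forcing $v = d$ must strictly raise the relaxed optimum---otherwise some optimum of $\hat I$ would satisfy $\hat\lambda(v) = d \in D$, contradicting the choice of $v$. Since $\hat\mu$ is integer-valued this increase is at least $1$, so the gap drops by at least $1$ in every branch. The recursion tree therefore has $O(|D|^{q-p})$ nodes, each performing $O(|T||D|nm)$ work, which matches the stated bound.

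The main technical obstacle is justifying that such a vertex $v$ always exists whenever the relaxation does not already concentrate on $D$. Equivalently, one must show: if for every vertex $u$ \emph{some} optimum of $\hat I$ assigns $u$ a label in $D$, then there is a single optimum all of whose values lie in $D$. This is where the folklore VCSP result and the \emph{domain consistency} notion advertised in Section~\ref{section:tree-metrics} enter: the optima of a tree-metric VCSP form a set closed under the pointwise median polymorphism on $V(T)$, so the per-vertex witnesses can be iteratively merged via median operations into a single $D$-valued optimum, contradicting the assumption. Setting this closure property up carefully---and verifying that it is preserved under the variable-fixings performed along the recursion, so that the desired vertex $v$ remains available at every branch node---is where I expect the bulk of the proof to lie.
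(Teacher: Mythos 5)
Your algorithmic skeleton is the same as the paper's: use the relaxation value (computed via a family of min-cuts across edges of $T$) as an oracle, stop when $p>q$ or when the relaxed optimum is already $D$-valued, and otherwise find a vertex $v$ that is non-integral in every relaxed optimum and branch on its $|D|$ possible values, arguing that each branch strictly increases $p$ and hence decreases the gap. The per-node work of $O(|T||D|nm)$ and the overall bound are justified the same way.

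Where your sketch is imprecise is in the mechanism behind domain consistency. You assert that because $\opt$ is closed under the pointwise median on $V(T)$, per-vertex integral witnesses ``can be iteratively merged via median operations into a single $D$-valued optimum.'' That step does not work as stated: the median of three labellings that are each $D$-valued at one chosen vertex need not be $D$-valued anywhere you care about, and naively iterating medians will not preserve the integrality you accumulated so far. What the median/majority polymorphism actually buys you (Lemma~\ref{lemma:majority}) is only that $\opt$ is determined by its binary projections $R(u,v)$. The real content is a separate structural lemma (Lemma~\ref{lemma:all-non-shared}): using the weak tree submodularity multimorphism $\langle\lca,\lcaskew\rangle$ and the ``preserved under equality'' analysis of Corollary~\ref{corollary:edge-supports}, one shows that $R(u,v)$ can only exclude pairs of \emph{distinct shared} values in $D(u)\cap D(v)$. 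Only with that characterization can one exhibit a globally consistent $D$-valued partial labelling (pick a total order on $D$ and take the minimum available integral value at each vertex, as in Lemma~\ref{lemma:unary-support-suffices}). You correctly flag this as the bulk of the proof, but the attribution to median closure alone is the wrong mechanism and would leave a genuine gap if pursued directly.

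Two smaller points worth noting. First, you need the domain-consistency property to remain valid after the variable fixings made along a branch; the paper handles this by phrasing Theorem~\ref{theorem:unary-support-suffices} for a VCSP instance built from $I$ plus arbitrary additional weakly tree-submodular constraints, which covers the effect of adding terminal assignments. Second, your justification that each branch strictly increases $p$ is the same as the paper's and is correct, given that $\hat\mu$ is integer-valued.
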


\noindent
For the further restriction when~$\mu$ corresponds to the distances 
of the leaves~$D$ of a tree~$T$, we obtain an algorithm with a slightly
better polynomial dependence. Moreover, it uses only elementary operations
like computing cuts and flows:

\begin{restatable}{theorem}{thmleaves}
  Let $I=(G, \tau, \mu, q)$ be an instance of \textsc{Zero Extension}  
  where $\mu$ is a leaf metric on a set of labels $D$ 
  in a tree $T$, and let $\hat I=(G, \tau, \hat \mu, q)$ be the
  relaxed instance. Let $p = \cost(\hat I)$. Then we can solve $I$
  in time $O(|D|^{q-p} |T| m + |T| \phi(n,m))$, where~$\phi$ is the
  time needed to run a max-flow algorithm.
\end{restatable}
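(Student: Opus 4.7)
The plan is to solve the relaxed instance $\hat I$ via elementary $s$-$t$ max-flow computations and then round the resulting solution into a leaf labelling using a branching tree of depth at most $q-p$. Because $\hat\mu$ is the distance metric of a tree, the cost of any labelling $\hat\lambda\colon V(G)\to V(T)$ decomposes as
\[
  \cost_{\hat\mu}(\hat\lambda) = \sum_{e\in E(T)} w_T(e)\cdot \bigl|\{uv\in E(G): e\text{ separates }\hat\lambda(u)\text{ from }\hat\lambda(v)\text{ in }T\}\bigr|,
\]
and the summands can be minimised independently: for each tree edge $e$, removing $e$ splits $V(T)$ into two sides, each terminal is assigned to one side by $\tau$, and the minimum contribution of $e$ is $w_T(e)$ times the value of an $s$-$t$ min-cut in $G$ whose source/sink sets aggregate the terminal positions. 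The sum of these weighted values equals $p=\cost(\hat I)$, computed in total time $O(|T|\,\phi(n,m))$; if $p>q$ we reject.

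To extract a witness labelling I pick, for each tree edge $e$, the source-minimal min-cut $(A_e,B_e)$, namely the set of vertices reachable from the super-source in the residual graph of a max flow. I claim the cuts fit together: for every vertex $v$, the pattern of sides $v$ occupies across all tree edges corresponds to a single node $\hat\lambda(v)\in V(T)$, yielding a relaxed labelling of cost exactly $p$. This is where the tree structure of $\hat\mu$ is essential and where I expect most of the technical effort; a rooted induction along $T$ together with flow-augmentation arguments should establish consistency, and in the worst case any surviving inconsistencies can be absorbed into an additional branching case without worsening the asymptotic bound.

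If $\hat\lambda(V(G))\subseteq D$ I output $\hat\lambda$. Otherwise pick $v$ with $\hat\lambda(v)=x\notin D$ and let $c_1,\dots,c_k$ be the neighbours of $x$ in $T$. Any leaf labelling places $\lambda(v)$ in exactly one of the subtrees $T_i$ obtained from $T$ by deleting $e_i=xc_i$ and keeping the side containing $c_i$, so I branch on $i\in\{1,\dots,k\}$, adding in branch $i$ the constraint that $v$ must lie on the $c_i$-side of $e_i$. Since $v\in A_{e_i}$ was in the source-minimal side, moving $v$ to the opposite side strictly increases the min-cut value on $e_i$ by at least $1$ (by the maximality of residual reachability characterising $A_{e_i}$), while no other min-cut value can decrease. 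Thus the new relaxed optimum is at least $p+1$, so the recursion has depth at most $q-p$ and branching factor $\deg_T(x)\leq|D|$.

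At each branch node I do not recompute max-flows from scratch; adding the new terminal constraint merely flips $v$'s source/sink status for each tree edge, so a single augmenting path per edge updates the relaxed cost in $O(|T|\,m)$. Combined with the $O(|T|\,\phi(n,m))$ initialisation and at most $|D|^{q-p}$ leaves, this yields the claimed running time. The main obstacle is the consistency claim for the source-minimal cuts in the second paragraph above; everything else is a careful but routine combination of flow updates and bookkeeping against the gap parameter.
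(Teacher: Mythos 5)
Your plan differs from the paper's in one structural respect that turns out to be fatal: you try to stitch the per-tree-edge min-cuts together into a single relaxed labelling $\hat\lambda$ before branching, whereas the paper never constructs such a labelling. That stitching step is not a routine loose end — it is false as you have set it up. Take $T$ a star with centre $c$ and leaves $a,b,\ldots$, and root at $c$, so that for the edge $e_a=ca$ the ``source'' is the $c$-side and $B_{e_a}$ is the $a$-maximal side, i.e.\ the furthest isolating min-cut $\furthest(a)$. Already for the path $G=a\!-\!v\!-\!b$ we have $v\in\furthest(a)\cap\furthest(b)$, so the side-patterns for $v$ across the two tree edges place $v$ simultaneously at $a$ and at $b$: no tree node is consistent with them, and $\hat\lambda(v)$ is undefined. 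Pairwise overlaps of furthest cuts around distinct terminals are the generic situation, not an edge case, and your fallback of ``absorb[ing] surviving inconsistencies into an additional branching case'' does not salvage it — in the example $v$ demonstrably has integral support (any optimum labels it $a$ or $b$), so branching on $v$ is wasteful at best and breaks the parameter-decrease argument at worst.

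The paper's proof (Section~\ref{section:leaf-metric}) avoids this entirely. It uses the Hirai--Pap duality with half-integral $\tau$-path packings to establish three structural lemmas about the furthest isolating cuts $\furthest(x)$: Lemma~\ref{lemma:packing-saturates-cuts} (every optimal packing saturates every minimum isolating cut), Lemma~\ref{lemma:no-cut-branch} (a vertex lying in \emph{no} $\furthest(x)$ has non-integral value in every relaxed optimum), and Lemma~\ref{lemma:three-cuts-empty} (three or more of the $\furthest(x)$ sets have empty intersection). With these in hand, the algorithm never needs $\hat\lambda$: if some vertex lies in no $\furthest(x)$, it is a valid variable to branch on and Lemma~\ref{lemma:no-cut-branch} guarantees each branch increases the flow lower bound; otherwise the $\furthest(x)$ sets cover $V(G)$ with only pairwise overlaps $U_{xy}$, the set $U_{xy}$ has edges only to $V_x$ and $V_y$ by the saturation lemma, and an integral optimum is obtained by assigning each $V_x$ to $\tau(x)$ and each overlap region $U_{xy}$ \emph{arbitrarily} to $\tau(x)$ or $\tau(y)$. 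This explicit handling of the two-way overlaps is precisely the point you would need to address. Your branching analysis (the claim that forcing $v$ across an edge $e_i$ from the residual-reachable side strictly increases the per-edge min-cut, hence the lower bound) is sound and mirrors the paper's accounting, but it only applies once you know $v$ has no integral support, which is what the paper's Lemma~\ref{lemma:no-cut-branch} supplies and your construction of $\hat\lambda$ cannot.
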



\noindent
Finally, we apply the VCSP toolkit to \Problem{Metric Labelling} and obtain
a similar gap algorithm (see Section~\ref{section:tree-metrics} for undefined terms). 

\begin{restatable}{theorem}{thmtreesmetric}
  Let $I=(G, \sigma, \mu, q)$ be an instance of 
  \textsc{Metric Labelling} where $\mu$ is an induced tree metric 
  for a tree $T$ and a set of nodes $D \subseteq V(T)$, 
  and where every unary cost $\sigma(v, \cdot)$ admits an
  interpolation on $T$. Let $\hat I=(G, \hat \sigma, \hat \mu, q)$
  be the relaxed instance, and let $\rho=\cost(\hat I)$.
  Then the instance $I$ can be solved in time
  $O^*(|D|^{q-\rho})$. In particular, this applies for any $\sigma$
  if $D$ is the set of leaves of $T$. 
\end{restatable}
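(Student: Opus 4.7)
The plan is to follow the VCSP-based strategy of Section~\ref{section:tree-metrics}, extended from \textsc{Zero Extension} to \textsc{Metric Labelling} by absorbing the unary labelling costs into the VCSP. I would view the relaxed instance $\hat I$ as a VCSP over the domain $V(T)$ with binary constraints $\hat\mu$ and unary constraints $\hat\sigma(v,\cdot)$ given by the interpolations. The interpolation hypothesis is precisely the condition that each $\hat\sigma(v,\cdot)$ is preserved by the median polymorphism of $T$; together with $\hat\mu$, which has the same invariance, this produces a tractable VCSP language by Thapper and \v{Z}ivn\'y~\cite{ThapperZivnyCSP}. We can therefore compute an optimal relaxed solution of cost $\rho$ in polynomial time and reject immediately when $\rho > q$.

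Assuming $\rho \le q$, I would invoke the domain consistency property established in Section~\ref{section:tree-metrics}: for every variable $v$ the set $C_v \subseteq V(T)$ of values that $v$ attains in some optimal relaxed solution is computable in polynomial time and, by the folklore median-closure result for the optimal solutions of a median-invariant VCSP, forms a convex subtree of $T$. The recursion then alternates two operations. \emph{(i)} While some unfixed $v$ satisfies $C_v \cap D \neq \emptyset$, fix $\lambda(v)$ to any element of $C_v \cap D$; this preserves the relaxed optimum because $C_v$ is exactly the set of consistent values. \emph{(ii)} If afterwards some unfixed $v$ still has $C_v \cap D = \emptyset$, branch on the $|D|$ choices of $\lambda(v) \in D$. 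In each such branch a value $d \notin C_v$ is fixed, so by integrality of costs the relaxed optimum of the residual instance strictly increases by at least $1$ and the gap $q - \rho$ decreases by at least $1$. The recursion therefore has depth at most $q - \rho$ and branching factor $|D|$, giving the $O^*(|D|^{q-\rho})$ bound once the polynomial work per node is absorbed, and each leaf of the recursion either rejects or yields a complete assignment in $D^{V(G)}$ of cost at most $q$.

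For the leaf case the interpolation hypothesis is automatic: given $\sigma(v,\cdot) \colon D \to \R^+$ with $D$ the leaves of $T$, a median-invariant extension to $V(T)$ always exists and can be constructed explicitly, for instance by a convex propagation of the leaf values along the tree. I would verify this in a short separate lemma.

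The main obstacle is the domain consistency argument itself: proving that $C_v$ is a convex subtree of $T$ and giving an efficient procedure to compute it. Both parts rely on the folklore VCSP result highlighted in the introduction, which constrains the structure of the optimal solutions of a median-invariant VCSP. Once this is in hand, the gap decrease in the branching step is immediate from integrality, and the leaf-case interpolation is routine.
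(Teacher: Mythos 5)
Your proposal follows the same high-level route as the paper (tractable relaxation plus domain consistency plus gap branching), but it misses a step the paper needs and uses: the paper's domain consistency result, Theorem~\ref{theorem:unary-support-suffices}, requires every connected component to contain \emph{at least two terminals}. This hypothesis is what drives Lemma~\ref{lemma:path-support-everywhere}, which shows that $D(v)$ is a \emph{path} in $T$, not merely a convex subtree: the presence of two distinctly labelled terminals forces every vertex to be tied, through zero-cost edges, to a crossing edge, and the binary $\dist_T$ constraint on that crossing edge, via Lemma~\ref{lemma:opt-local} and Corollary~\ref{corollary:edge-supports}, is what yields the path structure. That path structure is in turn essential to Lemma~\ref{lemma:all-non-shared}, whose proof opens by declaring ``Let $P_u$ be the path on vertices $D(u)$''. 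For \Problem{Metric Labelling} with only unary costs and the interpolation hypothesis, $D(v)$ can genuinely be a non-path subtree (take $\sigma(v,\cdot)$ identically zero on a branching subtree of $T$), and Lemma~\ref{lemma:all-non-shared} does not carry over. Your ``$C_v$ forms a convex subtree'' claim is therefore too weak to support the domain consistency conclusion you rely on.

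The paper's proof plugs exactly this gap with a preliminary guessing phase: guess labels for pairs of vertices, identifying any pair guessed to share a label, until two vertices receive distinct labels and can be regarded as terminals; the instance is then an instance of \Problem{Zero Extension} on those two terminals augmented with unary cost functions $\sigma(v',\cdot)$, to which Theorem~\ref{theorem:unary-support-suffices} applies. This terminal-creation step is the missing ingredient in your proposal. Two secondary points: ``preserved by the median polymorphism'' is not the correct invariance to invoke --- the tractability result and the equality-preservation structure follow from the commutative idempotent binary \emph{multimorphisms} $\angled{\lca,\lcaskew}$ and $\angled{\midup,\middown}$ via Lemma~\ref{lemma:opt-local}, while a median-style operation appears only in Lemma~\ref{lemma:majority} to establish that $\opt$ is determined by its binary projections. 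And your step~(i) (fix $\lambda(v)$ to an arbitrary element of $C_v\cap D$ and recompute) is a plausible variant of the paper's ordered construction in Lemma~\ref{lemma:unary-support-suffices}, but justifying that the recomputed supports stay integral again uses Lemma~\ref{lemma:all-non-shared}, so it too depends on the path structure that your argument has not established.
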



\section{Preliminaries}\label{sec:prelims}
\noindent
For a graph~$G=(V,E)$ we will use~$n_G = |V|$ and~$m_G = |E|$
to denote the number of vertices and edges, respectively.
We write~$\dist_G$ for the distance-metric induced by~$G$, that is,
$\dist_G(u,v)$ is the length of a shortest path between vertices $u,v
\in V(G)$. We denote by~$N_G(v)$ and~$N_G[v]$ the open and closed
neighbourhood of a vertex. For a vertex set~$S \subseteq V(G)$ we write
$\delta(S)$ to denote the set of edges with exactly one endpoint in~$S$.
We omit the subscript~$G$ if clear from the context  all these notations.

For a tree~$T$ we call a sequence of
nodes~$x_1x_2\ldots x_p$ a \emph{monotone sequence} if~$x_1 \leq_P x_2 \leq_P
\ldots \leq_P x_p$ where $P$ is a path in~$T$ and~$\leq_P$ is the linear order
induced by~$P$. Note that~$x_i = x_{i+1}$ is explicitly allowed. For two
nodes~$x,y \in T$ we will denote the unique~$x$-$y$-path in~$T$ by~$T[x,y]$.
For a vertex set~$S$, an \emph{$S$-path packing} is a
collection of edge-disjoint paths~$\cP$ that connect pairs of vertices
in~$S$. We will also consider \emph{half-integral} path packings, here every
edge of the graph is allowed to be used by up to two paths.

Let~$D$ be a set of labels. For a graph~$G$ we
call a function~$\tau \colon S \to D$ for $S \subseteq V(G)$ a \emph{partial
labelling} and a function~$\lambda \colon V(G) \to D$ a \emph{labelling}. The
labelling~$\lambda$ is an \emph{extension} of~$\tau$ if~$\lambda$ and~$\tau$
agree on~$S$, that is, for every vertex~$u\in S$ we have that~$\lambda(u) = \tau(u)$. Given a graph~$G$ and a labelling~$\lambda$
we call an edge~$uv \in E(G)$ \emph{crossing} if~$\lambda(u) \neq \lambda(v)$.
A \emph{$\tau$-path packing} is a
collection~$\cP$ of edge-disjoint paths such that every path~$P \in
\cP$ connects to vertices that receive distinct labels under~$\tau$
(and both are labelled).

\subsection{Cost functions, metrics, and extensions}

\noindent
A \emph{cost function} over~$D$ is a symmetric positive function~$\mu\colon D
\times D \to \mathbb R^+$. We call it \emph{simple} if~$\mu(x,x) = 0$. A cost
function is a  a \emph{metric} if further it obeys the triangle inequality and
it is a \emph{tree metric} if it corresponds to the distance metric of a tree.
We derive an \emph{induced tree metric}
from a tree metric by restricting its domain to a subset $D$ of the
nodes of the underlying tree. A \emph{leaf metric} is an induced tree
metric where $D$ is the set of leaves of the tree. 
Given a cost function~$\mu$, we define the cost of a labelling~$\lambda$ of a
graph~$G$ under a cost function~$\mu$ as
\[
  \cost_\mu(\lambda, G) = \sum_{uv \in G} \mu(\lambda(u), \lambda(v)).
\]
With these definitions in place, we can now define the problem in question:

\begin{problem}{Zero Extension}
  \Input & A graph~$G$ with a partial labelling~$\tau$ over a finite domain~$D$,
           a simple cost function~$\mu$ over~$D$ and an integer~$q$. \\
  \Prob  & Does~$G$ admit an extension~$\lambda$ of~$\tau$ such that~$\cost_\mu(\lambda, G) \leq q$?
\end{problem}

\noindent
For cost functions~$\mu$ that are uniform on all non-diagonal values we
recover (up to some constant scaling of the parameter) the problem
\Problem{Multiway Cut}. Picard and
Ratliff proved that for tree metrics, the problem\footnote{They prove it for a
variant of the \Problem{Facility Location} problem.} is solvable in polynomial
time~\cite{ZeroExtOnTrees}. We will call the special case in which the
distance function is restricted to a leaf metric \Problem{Zero Leaf
Extension}.

\section{Cost functions: Randomized Contractions}\label{sec:general}
\noindent
We will apply the framework by Chitnis \etal~\cite{RandomContraction} to show
that the most general case of \Problem{Zero Extension} is in $\FPT$ when parameterized
by the number of crossing edges. Note that in this setting crossing edges could
incur an arbitrary cost, including zero. However, the stronger parameterization
of only counting the number of crossing edges at non-zero cost makes for 
an intractable problem: With such zero-cost edges, we can express the
problem \Problem{$H$-Retraction} for reflexive graphs $H$,
which asks us to compute a retraction of a graph~$G$ into the fixed graph~$H$.
This problem is already \NP-complete for~$H$ being the reflexive
$4$-cycle~\cite{RetractNPHard} and thus \Problem{Zero Extension}
is \paraNP-complete for parameter $k=0$ if parameterized by the number
of non-zero-cost crossing edges (or indeed if parameterized by the
total cost).



A~\emph{$(\sigma,\kappa)$-good}
separation is a partition~$(L, R)$ of~$V(G)$ such that~$|L|, |R| > \sigma$,
$|E(L,R)| \leq \kappa$, and both~$G[L]$ and~$G[R]$ are connected. There exists
an algorithm that finds a $(\sigma,\kappa)$-good separation
in time $O((\sigma+\kappa)^{O(\min(\sigma,\kappa))} n^3 \log n)$ (Lemma 2.2
in~\cite{RandomContraction}) or concludes that the graph is
\emph{$(\sigma,\kappa)$-connected}, that is, no such separation exists.
The following lemma is a slight reformulation of Lemma~1.1 in~\cite{RandomContraction} 
which in turn is based on splitters as defined by Noar \etal~\cite{Splitters}:\looseness-1    

\begin{lemma}[Edge splitter]\label{lemma:edge-splitter}
  Given a set~$E$ of size~$m$ and integers~$0 \leq a, b \leq m$ one can 
  in time~$O((a+b)^{O(\min\{a,b\})} m \log m)$ construct a set family~$\mathcal F$ over~$E$ 
  of size at most~$O((a+b)^{O(\min\{a,b\})} \log m)$
  with the following property: for any disjoint sets~$A, B \subseteq E$ with~$|A| \leq a$ and~$|B| \leq b$
  there exists a set~$H \in \mathcal F$ with~$A \subseteq H$ and~$B \cap H = \emptyset$.
\end{lemma}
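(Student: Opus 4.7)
The plan is to reduce this to the perfect-hash-family / splitter machinery of Naor, Schulman, and Srinivasan~\cite{Splitters} and then to enumerate, for each hash function, the possible colour classes that ``belong'' to~$A$. Concretely, I would first invoke the NSS construction to build a family $\mathcal H$ of hash functions $f \colon E \to [a+b]$ such that for every disjoint $A, B \subseteq E$ with $|A| \leq a$ and $|B| \leq b$, at least one $f \in \mathcal H$ is injective on $A \cup B$. With a splitter-style construction tailored to the smaller side, such a family can be built in the stated time with size only a $\log m$ factor above $(a+b)^{O(\min\{a,b\})}$. The point of using a splitter rather than a plain perfect hash family is exactly that only the $\min\{a,b\}$-sized side needs dedicated colour classes; the larger side may be scattered arbitrarily over the remaining classes.

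Given $\mathcal H$, I would produce $\mathcal F$ by enumerating every way the desired ``inside'' colour classes for $A$ can sit in $[a+b]$: for each $f \in \mathcal H$ and each $S \subseteq [a+b]$ with $|S| = a$, put $H_{f,S} := f^{-1}(S)$ into $\mathcal F$. This adds a $\binom{a+b}{a} = \binom{a+b}{\min\{a,b\}} \leq (a+b)^{\min\{a,b\}}$ factor, which is absorbed into the same $(a+b)^{O(\min\{a,b\})}$ bound, and each $H_{f,S}$ can be materialised in $O(m)$ time.

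Correctness is then immediate. Given disjoint $A, B$ with $|A| \leq a$ and $|B| \leq b$, pick $f \in \mathcal H$ injective on $A \cup B$, so that $f(A)$ and $f(B)$ are disjoint in $[a+b]$ with $|f(A)| \leq a$ and $|f(B)| \leq b$. Choose any $S \subseteq [a+b] \setminus f(B)$ of size exactly $a$ containing $f(A)$; such an $S$ exists because $[a+b] \setminus f(B)$ has size at least $a$. Then $A \subseteq H_{f,S}$ and $B \cap H_{f,S} = \emptyset$, as required.

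The main obstacle is squeezing the exponent down to $\min\{a,b\}$ rather than $a+b$: naive perfect hashing of $a+b$ elements incurs an $e^{a+b}$ factor, and pulling this down to $(a+b)^{\min\{a,b\}}$ requires the splitter viewpoint in which injectivity is only enforced on the smaller side while the rest of the ground set can be assigned greedily. Since the lemma is explicitly a reformulation of Lemma~1.1 in~\cite{RandomContraction}, the proof ultimately amounts to invoking that splitter construction and re-arranging its output into the set-family form used here; the only thing that needs to be checked with care is that the quantitative dependence on $a$ and $b$ survives the reformulation.
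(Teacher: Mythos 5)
The paper does not actually prove this lemma; it is imported as a slight reformulation of Lemma~1.1 of Chitnis \emph{et~al.}~\cite{RandomContraction}, which is in turn built on the splitters of~\cite{Splitters}. Measured against that construction, your proposal has a concrete gap in the choice of hash range, which you yourself flag in your final sentence as ``the only thing that needs to be checked with care'' --- and that check fails as stated.

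You build $\mathcal H$ as a family of functions $f\colon E \to [a+b]$ and require that for every admissible $(A,B)$ some $f \in \mathcal H$ be injective on $A \cup B$, asserting that ``a splitter-style construction tailored to the smaller side'' yields such a family of size $(a+b)^{O(\min\{a,b\})}\log m$. That is not an object~\cite{Splitters} provides. The only NSS family mapping into a range of size exactly $a+b$ with injectivity on sets of size $a+b$ is the $(m,a+b,a+b)$-perfect hash family, whose size is $e^{a+b}(a+b)^{O(\log(a+b))}\log m$; this is exponentially larger than $(a+b)^{O(\min\{a,b\})}$ whenever $\min\{a,b\}$ is small relative to $a+b$ (e.g.\ $a=O(1)$, $b$ large). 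Weakening the requirement to $f(A)\cap f(B)=\emptyset$ --- which is all your correctness step actually uses --- gives a random-$f$ success probability of only $e^{-\Theta(\min\{a,b\})}$, which is consistent with the target \emph{size}, but there is no off-the-shelf deterministic NSS construction of this one-sided object; it would have to be built and analysed from scratch. So the claimed size of $\mathcal H$ is unjustified.

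The fix, which is also what the cited lemma does, is to hash into a quadratically larger range so that the hash family carries no exponential factor at all. Take an $(m,a+b,(a+b)^2)$-splitter $\mathcal H$: a family of $f\colon E\to[(a+b)^2]$ of size $(a+b)^{O(1)}\log m$, computable in $(a+b)^{O(1)}m\log m$ time, such that for every $W\subseteq E$ with $|W|\leq a+b$ some $f\in\mathcal H$ is injective on $W$~\cite{Splitters}. Assume WLOG $a\leq b$; for each $f\in\mathcal H$ and each $S\subseteq[(a+b)^2]$ with $|S|\leq a$, add $f^{-1}(S)$ to $\mathcal F$, for a total of $\sum_{i\leq a}\binom{(a+b)^2}{i}=(a+b)^{O(a)}=(a+b)^{O(\min\{a,b\})}$ sets per $f$ (when $a>b$ take complements $E\setminus f^{-1}(S)$ for $|S|\leq b$). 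Correctness is then exactly your argument with $S=f(A)$. In short: your idea to pay $(a+b)^{O(\min\{a,b\})}$ in subset enumeration and your correctness argument are both right, but the hash range must be $(a+b)^2$, not $a+b$, for the hash family itself to stay polynomial.
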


\noindent
Let us first demonstrate how \Problem{Zero Extension} can be solved on
such highly connected instances and then apply the `recursive understanding'
framework to handle graphs with good separations. In the following,
let~$I = (G,\tau,\mu,q)$ be the input instance with~$\mu$ being a cost function
over the domain~$D$.

\begin{lemma}\label{lemma:high-connectivity}
  Let~$G$ be $(\sigma,k)$-connected for some~$\sigma > k$. Then we
  can find an optimal solution in time~$O((|D| + 2\sigma k + k)^{O(k)} (n+m) \log n)$.
\end{lemma}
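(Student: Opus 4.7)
The plan is to combine a structural property of low-cost labellings in $(\sigma,k)$-connected graphs with the edge splitter of Lemma~\ref{lemma:edge-splitter} to enumerate candidate solutions efficiently.

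\emph{Structural step.} For any labelling $\lambda$ with at most $k$ crossing edges, deleting the crossing edges from $G$ leaves a partition of $V(G)$ into monochromatic connected components. I would show that at most one such component, call it $C^*$, has more than $\sigma$ vertices: if two large components $C_1, C_2$ existed, then by examining the connected components of $G[V(G) \setminus C_1]$ (noting that their edges back to $C_1$ sum to at most $k$ in total), some component $A$ contains $C_2$ and has $|A| \geq |C_2| > \sigma$, so that $(A, V(G) \setminus A)$ is a $(\sigma,k)$-good separation---using that $G[V(G) \setminus A]$ remains connected through $C_1$ and the other components of $G[V(G)\setminus C_1]$ each having an edge to $C_1$---contradicting the assumed connectivity. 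The same argument applied to any connected component of $G[V(G) \setminus C^*]$ bounds each by $\sigma$; since each such component is reached from $C^*$ by a crossing edge, there are at most $k$ of them, giving $|V'| \le k\sigma$ where $V' := V(G) \setminus C^*$.

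\emph{Algorithm.} I would then apply Lemma~\ref{lemma:edge-splitter} with $a = k$ (for the crossing edges, required to lie in $H$) and $b = O(\sigma k)$ (for a spanning forest of $G[V']$, required to lie outside $H$), yielding a family $\mathcal F$ of $(|D| + 2\sigma k + k)^{O(k)} \log m$ subsets. For each $H \in \mathcal F$ and each guess $d^* \in D$ of the dominant label, compute the connected components of $G \setminus H$ in $O(n+m)$ time; identify the at most $k$ ``small'' components as the candidate pieces of $V'$; branch over $|D|^k$ label assignments for these components, defaulting the remaining components to $d^*$; verify consistency with $\tau$ and compute the cost in $O(m)$ time, keeping the minimum. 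The two enumeration layers multiply to the claimed running time.

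\emph{Main obstacle.} The delicate point is choosing $B$ so that the splitter produces a usable $H$. We cannot take $B$ to be a spanning tree of $C^*$ as $|C^*|$ can be $\Omega(n)$, so $C^*$ may fragment in $G \setminus H$; the crucial observation is that all of $C^*$ shares the single label $d^*$, so fragmentation is harmless provided each fragment is assigned $d^*$. We therefore take $B$ to be a spanning forest of $V'$'s components (size at most $|V'| \le k\sigma$), and argue that for the correct $H \in \mathcal F$ we can distinguish the at most $k$ true $V'$-components among the components of $G \setminus H$ through the branching. The remaining care is to ensure the branching factor stays at $|D|^{O(k)}$ rather than growing with the number of components of $G \setminus H$; this is handled by branching only over which $\le k$ of the small components receive non-$d^*$ labels, justified by the structural bounds $|V'| \le k\sigma$ and ``at most $k$ components.''
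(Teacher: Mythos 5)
Your approach follows the same basic strategy as the paper (bound the component sizes, use an edge splitter, then reconstruct $\lambda$), but there is a genuine gap in the reconstruction step. You take $B$ (the edges forced outside $H$, i.e.\ the ones that must survive) to be only a spanning forest of $V' = V(G) \setminus C^*$, of size at most $k\sigma$. After removing $H$, the components $C_1,\dots,C_\ell$ are indeed intact, but $C^*$ can shatter into arbitrarily many fragments, and nothing in your construction controls the \emph{size} of those fragments. Your claim that there are ``at most $k$ small components'' is therefore false: all of $C^*$ could break into small pieces, indistinguishable by size from the true $C_i$'s. Once that happens, the step ``identify the at most $k$ small components as the candidate pieces of $V'$'' cannot be carried out with $|D|^{O(k)}$ branching; choosing which $\le k$ of potentially $\Theta(n)$ small components are the real $C_i$'s is an $n^{\Theta(k)}$ enumeration, which blows the stated bound.

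The paper avoids exactly this problem by putting one more thing in the forced-blue set: for every boundary vertex $u \in C_0 \cap V(E_\lambda)$, a tree of $\sigma{+}1$ edges inside $C_0$ through $u$. That adds only another $\sigma k$ edges to the budget (giving $a = 2\sigma k$ total), but it guarantees that every fragment of $C_0$ incident with a crossing edge is \emph{large} after the splitter is applied. With that guarantee, the small components reachable from any terminal via red edges are exactly the $C_i$'s (plus possibly harmless extras that can be merged into $C_0$), so a simple marking/propagation procedure recovers them deterministically instead of by exhaustive choice. That is the missing idea in your write-up: you correctly diagnose that ``we cannot take $B$ to be a spanning tree of $C^*$'', but the resolution is not to give up controlling $C^*$ entirely --- it is to force a small certified-large region around each of the $\le k$ boundary vertices of $C^*$.
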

\begin{proof}
  Let~$\lambda \in \opt(I)$ be an optimal solution and let~$E_\lambda$ be the
  crossing edges. We write~$V(E_\lambda)$ to denote the
  endpoints of these edges. Let~$C_0,C_1,\ldots,C_\ell$ be the connected
  components of~$G - E_\lambda$ with~$C_0$ being the largest one.
  Since the sets~$C_i$ have only one label each under~$\lambda$,
  each one contains at most one terminal.

  Since~$G$ is $(\sigma,k)$-connected, we know that $\ell \leq k$ and that all
  components~$C_1,\ldots,C_\ell$ have size at most~$\sigma$ (\cf Lemma
  3.6~\cite{RandomContraction}). We will assume in the following that~$C_0$
  contains more than~$\sigma$ vertices, otherwise $G$ contains less
  than~$\sigma k$ vertices and we find the set~$E_\lambda$ in time~$O( (\sigma
  k)^{2k} )$ by brute-force.

  Otherwise, we proceed by colouring the edges of~$G$ using an edge splitter 
  (details below). Such a colouring is \emph{successful} if
  \begin{enumerate}
    \item the crossing edges~$E_\lambda$ are red;
    \item each component~$C_i$, $1 \leq i \leq \ell$, contains a blue spanning
          tree; and
    \item each vertex~$u \in C_0 \cap V(E_\lambda)$ is contained in a blue tree
          of size at least~$\sigma + 1$.
  \end{enumerate}
  By fixing a collection of (arbitrary) spanning trees for the
  components~$C_i$, $1 \leq i \leq \ell$ and a collection of trees in~$C_0$
  with~$\sigma+1$ vertices that contain the~$\leq k$ boundary vertices~$C_0 \cap V(E_\lambda)$
  we can see that we need to correctly colour a set~$B \subseteq E(G)$ of at most
  $
    (\sigma-1)\ell + \sigma k \leq 2\sigma k
  $ edges blue while colouring a set~$R \subseteq E(G)$ of at most~$k$ edges red.
  We apply Lemma~\ref{lemma:edge-splitter} with~$a = 2\sigma k$ and
  $b = k$ to construct an edge splitter~$\mathcal F$ of size at most
  $O((2\sigma k + k)^{O(k)} \log m)$ in time~$O((2\sigma k + k)^{O(k)} m \log m)$
  for which we are guaranteed that at least one member~$H \in \mathcal F$
  will contain~$B$ while avoiding~$R$.

  It is left to show that we can easily find a solution in successfully
  coloured graph. Let~$G_B$ be the graph induced by the blue edges. We
  will call a component of~$G_B$ \emph{small} if it contains~$q$ or 
  less vertices and \emph{big} otherwise. Our task is to recover the
  solution-induced components~$C_0, C_1, \ldots, C_\ell$. First notice
  that every~$C_i$, $i \geq 1$ must be a small component in~$G_B$ and
  further that all components reachable from~$C_i$ via red edges must
  either be another component~$C_j$, $j \geq 1$, \emph{or} be a big
  component in $G_B$. Thus,
  we `discover' the sets~$C_1,\ldots,C_\ell$ by the following marking 
  algorithm:
  \begin{enumerate}
    \item Guess which terminal~$x$ lies in the big component~$C_0$
    \item Mark all small components of~$G_B$ that contain terminals
          other than~$x$
    \item Repeat exhaustively: mark all small components of~$G_B$ which have a
          red edge into an already marked component.
  \end{enumerate}
  If our initial guess of~$x$ is correct this procedure will exactly mark the
  components~$C_1,\ldots,C_\ell$.
  Indeed, any small component $C_i$ not marked by this process would
  be a small component containing no terminal and with all edges of $E_\lambda$
  connecting to the big component, in which case there is a solution
  with at most the same cost which merges $C_i$ with $C_0$. 
  The same holds for any collection of small components with 
  no red edge to a marked component. 
  From this we can deduce~$C_0$, the crossing
  edges~$E_\lambda$, and~$\lambda$ itself. In case the colouring step was
  unsuccessful or our guess of~$x$ was wrong, the above procedure will produce
  some set of edges~$E_\lambda$ of non-minimal cost.

  This verification step, given a colouring, is possible in time 
  $O((n+m) |S|)$. Since~$G$ is connected, we can assume that~$|S| \leq k$
  and the total running time to identify~$E_\lambda$ is~$(2\sigma k + k)^{O(k)} (n+m) \log n$.
  Given~$E_\lambda$, the final step is to find an optimal assignment. 
  While the assigment for components~$C_i$ containing terminals is fixed,
  we need to try all possible assigments for the remaining components in 
  time~$O(|D|^{\ell} k) = O(|D|^{k} k)$. Taken both steps together yields
  the claimed running time.
\end{proof}

\noindent
With the well-connected cases handled we can now proceed to solve the general
problem.

\thmgeneral*
\begin{proof}
  We will assume that~$G$ is connected (otherwise we compute an optimal
  solution for every connected component), therefore the number of
  terminals~$|S|$ is bounded by~$k+1$. Let~$\sigma := |D|^k+1$. We run the
  algorithm of Chitnis \etal~\cite{RandomContraction} to find a $(\sigma,
  k)$-good separation. Assume for now that such a good separation~$(V_1,V_2)$
  is found with at most~$k$ edges~$E(V_1,V_2)$ crossing it.

  Let~$S_1 := E(V_1,V_2) \cap V_1$ be the~$\leq k$ vertices in~$S_1$ on the
  border of the separation. We iterate through all~$|D|^{|S_1|} \leq |D|^k$
  ways the vertices~$S_1$ could receive colours by a solution. For each such
  colouring~$\tau$, we construct a sub-instance~$G_{\tau}$ by identifying all vertices that have
  the same terminal-labelling and recursively solve the instance. For each
  solution of~$G_{\tau}$ with at most~$k$ crossing edges we collect said
  crossing edges in~$E^*$. That is, the set~$E^*$ contains all edges that are
  crossing for an optimal solution of \emph{some} labelling of~$S_1$.
  Note that~$|E^*| < \sigma$ by our choice of~$\sigma$. Since~$G_1$ is connected,
  there is at least one edge in~$\hat E = E(G_1)\setminus E^*$. Now, every 
  solution $\lambda$ of~$G$ that has at most~$k$ crossing edge can be modified to have
  only edges of~$E^*$ and thus no edges of~$\hat E$ crossing it; we simply
  observe what labelling~$\tau$ the solution~$\lambda$ applies to~$S_1$ and
  replace the colour~$\lambda$ applies to~$V_1$ for the colours that the optimal
  solution of~$G_{\tau}$ applies to it. 

  Consequently, we can assume that the optimal solutions we are interested in
  are not crossed by~$\hat E$, therefore it is safe to contract~$\hat E$ and
  therefore reduce the size of~$V_1$ to~$\sigma$ (recall that~$G[V_1]$ is connected
  and the contraction of course preserves that property). We repeat this procedure
  until the resulting graph is $(\sigma,k)$-connected (alternatively, any
  of the recursive calls could tell us that no solution with at most~$k$ crossing
  edges exists in which case we return that the instance has no solution).
  Then by Lemma~\ref{lemma:high-connectivity}, we can decide the problem in
  time~$O((|D| + 2\sigma k  + k)^{O(k)}(n+m)\log n) = O(|D|^{O(k^2)} n^2 \log n)$. 

  Let us now analyse the total running time~$T(n)$; note that~$k$ and~$\sigma$
  do not change with each recursive call. First, finding a good separation
  takes time $O((\sigma+k)^{O(k)} n^3 \log n) = O(|D|^{O(k^2)} n^3 \log n)$ and constructing the
  instance~$G_{\tau}$ at total of~$O(|D|^{k}(n+m))$, which is dominated by the
  former running time. The recursive call on~$V_1$ with~$n_1 := |V_1|$ costs
  us~$T(n_1)$, after which we are left to work on an instance of size at
  most~$n - n_1 + \sigma$ which will cost us~$T(n - n_1 + \sigma)$. Note that,
  by the properties of~$(\sigma,k)$-good separations, it holds that~$\sigma+1
  \leq n_1 \leq n - \sigma - 1$. We therefore need to resolve the recurrence
  \[
    T(n) \leq \max_{\sigma + 1 \leq n_1 \leq n - \sigma - 1}
    \Big(  
        |D|^{O(k^2)} n^3 \log n + T(n_1) + T(n-n_1+\sigma)
    \Big).
  \]
  As noted in~\cite{RandomContraction}, the maximum in this expression
  is attained at the extreme values for~$n_1$ and that the claimed
  running time is a bound on~$T(n)$.
\end{proof}

\section{General metrics: Pushing separators}

We now consider the more restricted, but reasonable case that $\mu$
is a metric, observing the triangle inequality. We find that this
allows a `greedy' operation of \emph{pushing} in a solution $\lambda$,
which allows both the design of a faster algorithm (Section~\ref{section:pushing-algorithm})
and the computation of a \emph{metric sparsifier} (Section~\ref{section:kernel}).

\subsection{The pushing lemma}

Let in the following $I=(G=(V,E), \tau, \mu, q)$ be an instance of
\textsc{Zero Extension} for an arbitrary metric $\mu$. Let $S \subseteq V$ be
the range of $\tau$, and let $D$ be the set of labels. We assume that the
following reductions have been performed on $G$: For every label $\ell$ used
by $\tau$ there is a terminal $t_\ell$, and every vertex $v$ such that
$\tau(v)=\ell$ has been identified with this terminal $t_\ell$.


Let $\lambda\colon V \to D$ be an extension of $\tau$, and let 
$U = \lambda^{-1}(\ell)$ for some $\ell \in D$. 
By \emph{pushing from $\ell$ in $\lambda$} we refer to the operation
of relabelling vertices to grow the set $U$ ``as large as possible'',
without increasing the number of crossing edges.
Formally, this refers to the following operation: Let $C$ be the
furthest min-cut between vertex sets $U$ and $S - t_\ell$
(respectively $S$ if there is no terminal $t_\ell$),  
let $U'$ be the vertices reachable from $U$ in $G-C$, and 
let $\lambda'$ be the labelling where $\lambda'(v)=\ell$
for $v \in U'$ and $\lambda'(v)=\lambda(v)$ otherwise.
Clearly, $\lambda'$ is an extension of $\tau$. The purpose of this
section is to show that as long as $\mu$ is a metric (in particular,
observes the triangle inequality), this operation does not increase
the cost of the solution.

\begin{lemma}[Pushing Lemma] \label{lemma:all-pushing-allowed}
  For any $\tau$-extension $\lambda$ and every label $\ell \in D$,
  pushing from $\ell$ in $\lambda$ yields a $\tau$-extension $\lambda'$ 
  with $\cost_\mu(\lambda',G) \leq \cost_\mu(\lambda,G)$. 
\end{lemma}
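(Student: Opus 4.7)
The plan is to show $\cost_\mu(\lambda', G) \leq \cost_\mu(\lambda, G)$ by tracking the cost change on the affected edges and then combining the triangle inequality with the minimality of the cut $C$. Let $B := U' \setminus U$ be the set of newly-relabelled vertices; since $C$ separates $U$ from $T := S - t_\ell$, the set $B$ is disjoint from $T$. All cost changes occur on edges incident to $B$, which I would partition into three classes: edges interior to $B$, edges between $U$ and $B$, and cut edges of $C$ with one endpoint in $B$. Under $\lambda'$, edges of the first two classes drop to cost $0$, while a cut edge $uv \in C$ with $u \in B$ and $v \in V \setminus U'$ changes from $\mu(\lambda(u), \lambda(v))$ to $\mu(\ell, \lambda(v))$.

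By the triangle inequality, $\mu(\ell, \lambda(v)) - \mu(\lambda(u), \lambda(v)) \leq \mu(\ell, \lambda(u))$. Setting $\phi(w) := \mu(\ell, \lambda(w))$, the net cost change is therefore bounded above by
$$\sum_{w \in B} (d_C(w) - d_U(w))\,\phi(w) \;-\; \sum_{uv \in E(B,B)} \mu(\lambda(u), \lambda(v)),$$
where $d_C(w)$ and $d_U(w)$ count edges from $w$ into $V \setminus U'$ and into $U$, respectively. Applying the triangle inequality a second time gives $\mu(\lambda(u), \lambda(v)) \geq |\phi(u) - \phi(v)|$, so it suffices to prove
$$\sum_{w \in B} (d_C(w) - d_U(w))\,\phi(w) \;\leq\; \sum_{uv \in E(B,B)} |\phi(u) - \phi(v)|.$$

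Next I would run a layer-cake argument. Writing $\phi(w) = \int_0^\infty \Iver{\phi(w) \geq t}\, dt$ and $|\phi(u) - \phi(v)| = \int_0^\infty \Iver{|\{u,v\} \cap A_t| = 1}\, dt$ with $A_t := \{w \in B : \phi(w) \geq t\}$, this weighted inequality reduces to the combinatorial claim that for every $A \subseteq B$,
$$|E(A, V \setminus U')| \;\leq\; |E(A, U' \setminus A)|,$$
(using $\sum_{w \in A} d_C(w) = |E(A, V \setminus U')|$ and $\sum_{w \in A} d_U(w) = |E(A, U)|$). This final claim follows from the minimality of $C$: the set $S' := U' \setminus A$ contains $U$ and is disjoint from $T$, so it is a valid source side for the $U$-$T$-cut, and its cut value equals $|E(S', A)| + (|C| - |E(A, V \setminus U')|) \geq |C|$, which rearranges to exactly the desired inequality.

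The step I expect to be the main obstacle is the layer-cake reduction: without it, the weighted inequality mixes the metric with the cut structure in a way that resists direct attack, whereas slicing the potential $\phi$ decouples the two and lets a single appeal to the minimality of $C$ close the argument. Note that the triangle inequality is used twice — once on cut edges and once within $B$ — whereas the ``furthest'' aspect of $C$ is not actually needed for this lemma; any min-cut works, and the furthest one matters only because it makes $U'$ maximal for the algorithmic use in the next section.
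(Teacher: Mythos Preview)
Your proof is correct, but it takes a genuinely different route from the paper's. The paper argues via the \emph{flow} side of max-flow/min-cut: it takes an edge-disjoint collection of paths $\cP$ from $V_\ell$ to $S-t_\ell$ realising the max-flow, and for each path prefix $P$ (up to and including its cut edge) observes that under $\lambda'$ only the final edge of $P$ carries cost, namely $\mu(\ell,\lambda(P))$, while under $\lambda$ the triangle inequality \emph{telescoped along $P$} gives cost at least $\mu(\ell,\lambda(P))$. Since the paths are edge-disjoint and every cut edge lies on some path, the comparison is path-by-path and the lemma follows in two short claims.

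Your argument stays entirely on the \emph{cut} side: you introduce the potential $\phi(w)=\mu(\ell,\lambda(w))$, apply the triangle inequality twice (once on cut edges, once inside $B$), and then use a layer-cake decomposition to strip away the metric and reduce to the unweighted inequality $|E(A,V\setminus U')|\le|E(A,U'\setminus A)|$ for all $A\subseteq B$, which is exactly the min-cut condition. This is longer but more elementary in that it never invokes a path decomposition; it also makes transparent, as you note, that only minimality of $C$ is used. The paper's approach is shorter and arguably more conceptual --- the single telescoping step does the work of your two triangle-inequality applications plus the layer-cake argument --- but both are valid and your observation about ``furthest'' being irrelevant to this lemma is correct.
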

\begin{proof}
  Write $V_\ell=\lambda^{-1}(\ell)$, and for a set of 
  vertices $U$ let $\delta(U)$ denote the edges with one endpoint in
  $U$.  Let $\cP$ be a max-flow from $V_\ell$ to the set 
  $S-t_\ell$, let $C$ be a furthest corresponding min-cut,
  and let $V_\ell^+ \supseteq V_\ell$ be the set of vertices reachable
  from $t_\ell$ in $G'-C$. By Menger's theorem, $\cP$ partitions 
  the set of edges $\delta(V_\ell^+)$. Finally, let $\cP^-$ consist 
  of the prefixes of the paths $P \in \cP$ up until and including 
  the edges of $C$. For $P \in \cP^-$, let $\lambda(P)$ be 
  the label of its final edge. Let $\lambda'$ be the assignment
  resulting from letting $\lambda'(v)=\ell$ for every $v \in V_\ell^+$
  and $\lambda'(v)=\lambda(v)$ otherwise. We make two quick
  observations. 

  \begin{claim}
    The cost incurred by $\lambda'$ on every path $P \in \cP^-$ 
    is precisely $\mu(\ell, \lambda(P))$, whereas for every edge $uv$
    not on any such path the cost is at most as high as for $\lambda$,
    i.e., $\mu(\lambda'(u),\lambda'(v)) \leq \mu(\lambda(u),\lambda(v))$.
  \end{claim}
  \begin{proof}
    For every path $P \in \cP^-$ only the final vertex $v$
    has a label $\lambda'(v) \neq \ell$, and the final edge has cost 
    $\mu(\ell, \lambda(v))$ where $\lambda(v)=\lambda(P)$. 
    For the second part, only edges with at least one end point in
    $V_\ell^+$ have changed cost from $\lambda$ to $\lambda'$, 
    and among such edges only the edges of $C$ have non-zero cost. 
    Since $C$ is a min-cut, all such edges are covered by paths 
    $P \in \cP^-$. 
  \end{proof}
  
  \begin{claim}
    The cost incurred by $\lambda$ on a path $P \in \cP^-$
    is at least $\mu(\ell, \lambda(P))$. 
  \end{claim}
  \begin{proof}
    Let $P=t_\ell v_1 \ldots v_r$ where $v_r$ may be a terminal. 
    This describes a walk $\lambda(t_\ell)=\ell$, \ldots,
    $\lambda(v_r)=\lambda(P)$ from $\ell$ to $\lambda(P)$,
    and every edge $uv$ of $P$ where $\lambda(u)\neq \lambda(v)$ 
    incurs the corresponding cost. By the triangle inequality for $\mu$, 
    the sum of these costs is at least $\mu(\ell, \lambda(P))$. 
  \end{proof}

  \noindent
  The result follows immediately from the above two claims.
\end{proof}

\noindent
An immediate consequence of the above lemma is the following
reduction rule.

\begin{corollary} \label{corollary:pushing-reduction}
  We can reduce to the case where for 
  every label $\ell$ used by $\tau$, there is a single terminal
  $t_\ell$ of value $\tau(t_\ell)=\ell$ such that $\delta(t_\ell)$
  is the unique isolating min-cut for $t_\ell$. 
\end{corollary}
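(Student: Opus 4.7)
The plan is to apply a small strengthening of Lemma~\ref{lemma:all-pushing-allowed} to each label in turn. The preprocessing spelled out just before Lemma~\ref{lemma:all-pushing-allowed} already yields the ``single terminal per label'' part of the conclusion, so the only remaining task is to ensure that, for every $\ell$, the set $\delta(t_\ell)$ is the unique minimum cut in $G$ separating $t_\ell$ from the remaining terminals $S - t_\ell$.

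For each label $\ell$ I would compute $A_\ell$, the source side of the \emph{furthest} minimum cut between $\{t_\ell\}$ and $S - t_\ell$ in the current graph, and then contract $A_\ell$ down to $t_\ell$. To justify this contraction it suffices to show that any $\tau$-extension $\lambda$ can be replaced by an extension $\lambda'$ that assigns $\ell$ to every vertex of $A_\ell$ and agrees with $\lambda$ elsewhere, without raising $\cost_\mu(\lambda,G)$. Then an optimal $\lambda$ can always be taken to label $A_\ell$ uniformly with $\ell$, and identifying $A_\ell$ with $t_\ell$ preserves the optimum. This ``generalized pushing'' statement is proved by the very same argument as Lemma~\ref{lemma:all-pushing-allowed}, but with the source set of the max-flow $\cP$ taken to be $\{t_\ell\}$ instead of $\lambda^{-1}(\ell)$. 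The proof of the Pushing Lemma uses the source set only (i)~to set up $\cP$ and the furthest min-cut~$C$, and (ii)~via the fact that each source vertex carries label $\ell$ under $\lambda$, which drives the triangle-inequality step in its second internal claim. Both requirements are met by $\{t_\ell\}$ because $\lambda(t_\ell) = \tau(t_\ell) = \ell$.

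Uniqueness of $\delta(t_\ell)$ as a minimum isolating cut after the contraction is then an immediate consequence of the maximality property that defines $A_\ell$. Any other min isolating cut in the contracted graph would have some source side $B$ with $t_\ell \in B$ and $B \supsetneq \{t_\ell\}$; lifting $B$ back to $G$ yields the source side $(B \setminus \{t_\ell\}) \cup A_\ell$ of a min $\{t_\ell\}$-$(S - t_\ell)$ cut of the same size which strictly contains $A_\ell$, contradicting that $A_\ell$ is the source side of the \emph{furthest} such min cut.

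The one subtlety I expect to require care is the interaction between the contractions carried out for different labels: after $A_\ell$ has been contracted, the min-cut structure between $t_{\ell'}$ and $S - t_{\ell'}$ for another label~$\ell'$ may change (the sink side has grown, which can only decrease the min-cut value). I would handle this by iterating the whole procedure, cycling through the labels and recomputing the current $A_\ell$ after each round. Every non-trivial contraction strictly decreases $|V(G)|$, so after at most $|V(G)|$ rounds a fixpoint is reached in which $A_\ell = \{t_\ell\}$ holds simultaneously for every $\ell$, which is precisely the claimed canonical form.
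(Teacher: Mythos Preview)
Your argument is correct, but it takes a slightly different route from the paper. The paper applies Lemma~\ref{lemma:all-pushing-allowed} as a black box: it first pushes from~$\ell$ in an arbitrary $\tau$-extension~$\lambda$ (using the full set $V_\ell=\lambda^{-1}(\ell)$ as the source) to obtain a set~$V_\ell^+$, and then invokes the \emph{submodularity} of the edge-cut function to conclude that the furthest isolating min-cut side~$V_\ell^0$ satisfies $V_\ell^0 \subseteq V_\ell^+$. From this inclusion it follows that any optimal~$\lambda$ can be taken to label all of~$V_\ell^0$ with~$\ell$, justifying the contraction.

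You instead reopen the proof of the Pushing Lemma and rerun it with source set $\{t_\ell\}$ rather than $\lambda^{-1}(\ell)$, observing (correctly) that the two claims inside that proof only use that the source carries label~$\ell$. This yields the desired conclusion directly, without the submodularity step. The trade-off is modularity versus elementarity: the paper keeps Lemma~\ref{lemma:all-pushing-allowed} as a reusable black box at the price of one extra inequality, while your version avoids submodularity but requires the reader to revisit the lemma's internals. Your treatment of the uniqueness of $\delta(t_\ell)$ and of the need to iterate the contractions until a fixpoint is also more explicit than the paper's single sentence ``Repeating this for all vertices yields an instance as described''; both arguments are fine here.
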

\begin{proof}
  The first part of the reduction (to terminals $t$) is trivial.
  Let $S$ be the set of terminals. 
  Let $t_\ell$ be a terminal with label $\ell$, let $C_0$ be a
  furthest isolating min-cut between $t_\ell$ and $S-t_\ell$,
  and let $V_\ell^0$ be the vertices reachable from~$t_\ell$ in~$G-C_0$. 
  
  Now let $\lambda\colon V \to D$ be a $\tau$-extension and 
  let $V_\ell = \lambda^{-1}(\ell)$. Let $V_\ell^+ \supseteq V_\ell$
  be produced by pushing from $\ell$ in $\lambda$.
  By Lemma~\ref{lemma:all-pushing-allowed}, replacing $V_i$
  by $V_i^+$ does not incur a larger cost. 

  To see that our reduction is valid, let $f\colon 2^V \to \Z$ be the edge-cut
  function for $G$. Then, by submodularity,
  \[
    f(V_\ell^0) + f(V_\ell^+) \geq 
    f(V_\ell^0 \cap V_\ell^+)  + f(V_\ell^0 \cup V_\ell^+) \geq
    f(V_\ell^0) + f(V_\ell^0 \cup V_\ell^+),
  \]
  where the last step follows since $\delta(V_\ell^0)$ is a 
  minimum isolating cut. Thus $V_\ell^0 \cup V_\ell^+$ is
  a cut between $t_\ell$ and $S-t_\ell$ of cost no more
  than $V_\ell^+$, which implies that they are equal and
  $V_\ell^0 \subseteq V_\ell^+$. Hence any solution $\lambda$
  can be modified to let $\lambda(V_\ell^0)=\ell$, 
  and we may contract all vertices $V_\ell^0$ into $t_\ell$. 
  Repeating this for all vertices yields an instance as described. 
\end{proof}

\subsection{An \textsf{FPT} algorithm}
\label{section:pushing-algorithm}
Let $I=(G, \tau, \mu, q)$ be an instance of \textsc{Zero Extension} where
$\mu$ is a metric, \eg a simple cost function observing the triangle
inequality.  Let $D$ be the domain of $\mu$, and let $S$ be the terminals in
$G$. We will show that the problem is $\FPT$ parameterized by $k+|D|$ where
$k$ is a bound on the number of crossing edges in an optimum $\lambda$.

The algorithm uses the pushing lemma (Lemma~\ref{lemma:all-pushing-allowed})
to guess a solution $\lambda$ using the technique of \emph{important separators}. 
This is a classical ingredient for $\FPT$ algorithms for cut problems,
pioneered by Marx~\cite{Marx06cuts}. We focus on the edge version.
Let $G=(V,E)$ be a graph with disjoint vertex sets $S$ and $T$,
and let $C \subseteq E$ be a minimal $(S,T)$-cut which is not
necessarily minimum. Let $U$ be the set of vertices reachable from $S$
in $G-C$. Then $C$ is an \emph{important separator} if for every set
$U' \supset U$ with $T \cap U' = \emptyset$ we have
$|\delta(U')|>|C|$.  In other words, $C$ represents a greedy
``furthest cut'' from $S$ for its size. The important realization
is that for every bound $k$ on $|C|$, there are at most $f(k)=4^k$
important $(S,T)$-separators in $G$. 

The algorithm we will use is inspired by the classical algorithm for
\textsc{Multiway Cut} of Chen, Liu and Lu~\cite{ChenLL09MWC},
improving on Marx~\cite{Marx06cuts}. 
This algorithm works by repeatedly selecting a
terminal $t$, guessing an important separator around $t$ (against the
other terminals $S-t$), then deleting the chosen separator and
proceeding with the next terminal with a decreased budget $k$. 
The important aspect to us is the process of enumerating important 
separators, which for edge-cuts can be described as follows.
\begin{enumerate}
\item Assume that we are enumerating important separators between 
  sets $S$ and $T$, which may be singleton sets (the process is the
  same regardless). Assume that $k$ is our budget for the maximum
  separator size. 
\item Compute a furthest min-cut $C$ between $S$ and $T$. If $|C|>k$, 
  abort. Otherwise, contract edges so that $\delta(S)=C$. Initialise
  all edges as unmarked. 
\item Select an unmarked edge $uv \in \delta(S)$ and branch
  recursively on two cases:
  \begin{enumerate}
  \item Contract $uv$ into $S$, thereby increasing the max-flow
  \item Mark $uv$ as part of the final separator
  \end{enumerate}
  Abort a branch whenever the resulting max-flow is more than $k$. 
\item Once all edges of $\delta(S)$ are marked, output $\delta(S)$  as
  an important separator. 
\end{enumerate}
For the running time, we may analyse this in terms of the gap between
$|C|$ and $k$. More precisely, consider an alternative lower bound
where every marked edge of $C$ counts for 1 point, but unmarked edges
of $C$ are worth 1/2 point. Then it is clear that the lower bound
increase by $1/2$ in both branches of the recursive calls, and
at most $2^{2k}$ important separators are generated. 


Compared to \textsc{Edge Multiway Cut}, there are two complications to
an algorithm for \textsc{Zero Extension}. First, if $C$ are the
crossing edges on an optimal solution $\lambda$, it may be that not
every connected component of $G-C$ contains a terminal;
therefore the algorithm may need to branch on non-terminal vertices as
well. This is not an obstacle in itself, but it allows for only a
weaker running time analysis. Second, even after $C$ has
been identified, it remains to find an assignment $\lambda$ that
minimizes cost. The complexity of this may vary depending on the
particular metric $\mu$. 

For this reason, we describe the algorithm below as being composed of
stages, where the first stage identifies all crossing edges reachable
from a terminal, the second stage identifies the remaining crossing
edges, and the third stage finds an assignment $\lambda$. For specific
metrics $\mu$, it may be possible to skip or speed up the second and
third stages; e.g., for a leaf metric it can be checked that
the algorithm is finished after the first stage. 

In summary, we show the following. 

\thmmetric*

\noindent
We begin by proving the running time for the first stage. 
This is analysed in terms of a lower bound $p$ on the crossing number
of any labelling $\lambda$. This bound is computed as follows.
Let $S$ be the set of terminals. Observe that for any
$\tau$-extension $\lambda$, the crossing edges in $\lambda$ form a
multiway cut for $S$ in $G$. This can be lower-bounded as follows.
For every $t \in S$, let $f_t$ be the value of a max-flow between $t$
and $S-t$, and let $p=\sum_{t \in S} f_t/2$. 
Then there is a half-integral packing of $p$ terminal-terminal paths
in $G$ (i.e., a $\tau$-path-packing for the tree which is a star with
$S$ as leaves, as in
Section~\ref{section:leaf-metric})~\cite{SchrijverBook}, hence there
is no solution $\lambda$ with crossing number less than $p$. 
Then the first phase can be computed in time parameterized by the gap 
$k-p$. (We note that this is a weaker lower bound, and hence a weaker
gap parameter, compared to the relaxation lower bound $\rho$ used
later in this paper. However, it of course applies to any metric.)

\begin{lemma}
  Let $p$ be the lower bound as above 
  In $O(4^{k-p}km)$ time and $4^{k-p}$ guesses, we can reduce to
  the case where every edge of $\delta(t)$ is a crossing edge in the
  optimal solution for every $t \in S$. 
\end{lemma}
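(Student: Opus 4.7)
The plan is to combine the Pushing Lemma with a sequential enumeration of important $(t, S-t)$-edge separators, one per terminal, amortizing the total branching depth against the gap $k - p$.

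By Corollary~\ref{corollary:pushing-reduction} I may assume $\delta(t)$ is the unique minimum isolating cut for each terminal $t$, so $f_t = |\delta(t)|$ and $p = \tfrac{1}{2}\sum_{t \in S} f_t$. Fix any optimum $\lambda$ with crossing edge set $C^*$, $|C^*| \le k$, and set $V_t^* := \lambda^{-1}(\tau(t))$ and $C_t^* := \delta(V_t^*)$. Applying the Pushing Lemma label by label produces an optimum of no greater cost in which each $V_t^*$ is reachable from $t$ after deleting a \emph{furthest} min $(t, S-t)$-cut, so each $C_t^*$ is an important $(t, S-t)$-edge separator of size at most~$k$.

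I would then process the terminals in arbitrary order $t_1, \ldots, t_{|S|}$. For each $t_i$, I run the branching enumeration of important $(t_i, S - t_i)$-separators on the current (already contracted) graph, as described in the paragraphs preceding the lemma statement; for each guess $C$, I contract the $t_i$-side of $C$ into $t_i$ before passing the instance to the next terminal. After all terminals have been processed, the pushing step guarantees $\delta(t) \subseteq C^*$ for every $t$, which is exactly the claimed property. The per-guess work is dominated by one max-flow recomputation with budget $k$, costing $O(km)$.

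The central calculation is the amortized counting of guesses. The standard enumeration analysis via the potential $|\text{marked}| + \text{current max-flow}$ yields a branching depth of at most $2|C_{t_i}^*| - f_{t_i}$ along the path that identifies $C_{t_i}^*$. Combined with $\sum_i |C_{t_i}^*| \le 2|C^*|$ (each crossing edge borders at most two terminal regions) and $\sum_i f_{t_i} = 2p$, the naive per-terminal bounds sum to $4k - 2p$. To collapse this to the claimed depth $2(k-p)$ --- and hence at most $4^{k-p}$ root-to-leaf branches --- I would tighten the accounting to a single global potential that sums distinct marked edges of $C^*$ together with the sum of current max-flows across all yet-to-be-processed terminals, charging each ``mark'' event to the single crossing edge it commits rather than re-charging that edge once per adjacent terminal enumeration.

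The main obstacle is precisely this global amortization: the naive bookkeeping loses a factor of two from each crossing edge potentially lying on two terminal boundaries, and a second factor of two because the enumeration pays for both a ``mark'' and a ``contract'' per edge of $C_{t_i}^*$. A carefully chosen global potential that counts distinct committed edges exactly once, and that tracks the evolution of max-flows across the \emph{entire} sequence (using that contractions from earlier terminals can only increase subsequent max-flows), is what is needed to recover the $4^{k-p}$ target.
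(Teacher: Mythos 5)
Your proposal gets the right ingredients (pushing reduction, important separator enumeration, half-integral lower bound $p$) and correctly diagnoses the central difficulty, but the crucial amortization step is left unproved, and you say so yourself: you compute that sequential per-terminal enumeration gives a leaf count of $4^{2k-p}$ and then state that a ``carefully chosen global potential'' would be needed to recover $4^{k-p}$. That is a genuine gap, not just a stylistic deferral. The difficulty is real: the double losses you identify (an edge of $C^*$ lying on two terminal boundaries being charged twice, plus a mark and a contraction each costing separately) are not obviously reconcilable within the sequential ``process $t_1$, then $t_2$, \ldots'' structure, because contracting around $t_1$ changes the max-flows at all later terminals and there is no clean invariant relating the individual $f_{t_i}$'s across rounds.

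The paper avoids the amortization problem entirely by \emph{not} running separate important-separator enumerations per terminal. Instead it runs a single global branching process: initialize all terminal--terminal edges as marked and all others unmarked; maintain one global lower bound $p$ equal to the number of marked edges plus half the number of unmarked edges of $\delta(S)$; at each step pick \emph{any} unmarked edge $tv$ with $t \in S$ and branch between marking $tv$ and contracting $tv$ into $t$ (recomputing max-flow and $p$). Because after the pushing reduction each $\delta(t)$ is the \emph{unique} min-cut, contracting $tv$ into $t$ strictly increases the max-flow from $t$, so in both branches $p$ increases by at least $\tfrac12$. The branching depth is therefore at most $2(k-p)$ directly, with no per-edge double-counting to argue away. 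If you want to repair your write-up, replacing the sequential enumeration with this single interleaved process is exactly the ``global potential'' you were reaching for; as stated, though, your proof is incomplete.
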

\begin{proof}
  This phase works almost exactly as \textsc{Edge Multiway Cut}. 
  By Corollary~\ref{corollary:pushing-reduction}, we assume that 
  for every $t \in S$, $\delta(t)$ is the unique ($t, S-t)$-min cut. 
  Since our parameter is a cardinality parameter, we implement this
  step by a standard augmenting path algorithm, aborting whenever more
  than $k$ paths have been found. More specifically, we proceed as
  follows.  Let $f$ trace the number of paths found in total across
  all terminals; initially $f=0$. Then for each terminal $t \in S$ in
  turn, we pack $t-(S-t)$-paths in $G$, increasing $f$ at each point,
  and aborting if $f>2k$ is reached. Assuming this does not happen, 
  we can in $O(m)$ time compute the furthest $(t, S-t)$-min cut $C$,
  and the set $V_t$ of vertices reachable from $t$ in $G-C$, and
  finally the new graph $G$ resulting from contracting $V_t$ into
  $t$. In total, this process takes $O(km)$ time and either aborts
  or produces a reduced graph. We assume in the rest of the proof that
  the process has succeeded and proceed as follows.
  
  Initialize all edges $tt'$ for $t, t' \in S$ as marked, all other
  edges as unmarked.  Compute a lower bound $p$ on $k$ by counting 
  1 for every marked edge, and $1/2$ for every unmarked edge of
  $\delta(S)$. If $p>k$, reject the instance. Otherwise, proceed 
  as in the enumeration of important separators, \ie select an
  unmarked edge $tv$ incident with some $t \in S$ and branch on two
  cases: Either contract $tv$ into $t$, and recompute the max-flow and
  value of $p$; or mark $tv$ as part of the solution and select another
  edge. Whenever $p>k$, abort the branch; whenever  every edge of
  $\delta(S)$ is marked, return the instance. 

  For the correctness, the value $p$ is a lower bound on the crossing
  number of $\lambda$, as argued above. Hence if $p>k$, the current
  instance has no solution. Otherwise, the branching is
  exhaustive, marking $tv$ as either crossing or non-crossing, and
  in the latter case we are allowed to contract out to the new furthest
  min-cut by the pushing lemma (Lemma~\ref{lemma:all-pushing-allowed}). 
  Hence, assuming the instance has a
  solution at all, in at least one output instance the marked edges
  correspond to the crossing edges of some optimal labelling
  that are reachable from a terminal.

  Regarding running time, Corollary~\ref{corollary:pushing-reduction}
  can be applied in $O(|S| k m)$ time by an augmenting path
  approach. For every further branching step, recomputing a new
  max-flow can be done in $O(k m)$ time. Throughout the branching
  process, the value of $k-p$ decreases by at least $1/2$ in every
  branch (in particular, the max-flow number increases in the 
  contraction branch). Hence the branching process produces at most
  $2^{2(k-p)}$ outputs.
\end{proof}

\noindent
Next, we show a similar branching algorithm (without a lower bound)
to mark the remaining crossing edges of a solution. 

\begin{lemma}
  Given an input from stage 1, with $p$ edges already marked, 
  in $O(4^{2k-p}m)$ time and $4^{2k-p}$ guesses we can reduce
  to the case where every edge of $G$ is crossing in the optimal
  solution.  
\end{lemma}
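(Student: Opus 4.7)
Plan: after stage~1 every edge of $\delta(S)$ is already marked, so the terminal-rooted max-flow progress measure that powered the previous lemma is exhausted. The remaining crossing edges of the optimum lie on the boundaries of ``islands''---connected components of $G-C$ that are labelled uniformly in the optimum $\lambda$ but contain no terminal. I would extend the branching scheme: select an unmarked edge $uv$ (preferably one adjacent to the already-marked boundary) and branch on two subcases. (a)~$uv$ is crossing in the optimum: mark it and increase $p$ by one. (b)~$uv$ is not crossing: contract $uv$. After the contraction in branch~(b) I would immediately re-apply the pushing step of Lemma~\ref{lemma:all-pushing-allowed} at the merged vertex, absorbing any further forced marks exposed by the newly enlarged island. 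Correctness follows because for every edge of the optimum exactly one branch is taken, and the pushing lemma certifies that an optimal pushed solution is discovered along some root-to-leaf path.

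For the running time I would design a potential $\Phi = 2k - p'$, where $p'$ is a (possibly fractionally weighted) count of edges whose crossing status has been resolved. The value $\Phi$ starts at $2k - p$ and I would try to show that it drops by at least $1/2$ in each branch, yielding at most $2^{2(2k-p)} = 4^{2k-p}$ leaves in the search tree. Branch~(a) achieves this trivially: a freshly marked edge contributes a full unit to $p'$. Each branch is executed in $O(m)$ time using augmenting-path/min-cut updates on the current quotient graph, delivering the stated overall complexity $O(4^{2k-p} m)$.

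The main obstacle is precisely the half-unit guarantee for branch~(b). Unlike stage~1, where contracting an edge $tv$ into a terminal directly raised the terminal-rooted max-flow value by at least $1/2$, here no terminal is available to anchor the cut, so ``contraction'' is not a priori paid for by the lower bound. I expect the proof to exploit the fact that $uv$ is chosen adjacent to the current marked boundary: the contraction transfers a non-trivial local cut structure into the merged vertex, and the subsequent pushing step must either uncover a new marked edge or tighten a furthest min-cut certificate whose size is already charged against~$k$. Setting up this charging scheme without double-counting the edges of~$C$ is where the factor-$2$ loss in the exponent (compared to stage~1's $4^{k-p}$ bound) should originate; choosing a potential that makes the charging tight is the crux of the argument.
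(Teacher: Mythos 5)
Your proposal correctly identifies the crux of the argument but leaves it open, so there is a genuine gap. You note that in branch~(b), where an edge is contracted rather than marked, there is no terminal to anchor the cut and hence no obvious reason why the potential $\Phi = 2k-p'$ should drop by a half unit. You speculate that pushing at the merged vertex will somehow tighten a min-cut certificate, but you never exhibit a certificate or a charging scheme, so the bound $4^{2k-p}$ is asserted rather than proved.

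The paper resolves this with two ideas you are missing. First, the branching is \emph{anchored at a non-terminal vertex $v$}: one picks a vertex $v$ incident with an unmarked edge and computes a furthest min-cut $C$ between $v$ and the terminal set $S$ (aborting if $|C|>k$), contracts the reachable side into $v$, and then branches on the edges of $\delta(v)$ exactly as in the important-separator enumeration of stage~1. Because $C$ is the \emph{furthest} $(v,S)$-min-cut, contracting any edge of $\delta(v)$ into $v$ strictly increases the $(v,S)$-max-flow, which is what pays for the contraction branch --- $v$ plays the role of the terminal that you observed was absent. Second, and this is where the exponent $2k-p$ comes from, the potential is based on \emph{double-marking}: each crossing edge of the solution is viewed as carrying a mark at each of its two endpoints, so there are at most $2k$ marks in total. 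Stage~1 places both marks on each of the $p$ edges incident to terminals; the stage-2 branching around $v$ places only the $v$-side mark on each edge of $\delta(V_v)$. Both a new augmenting path from $v$ and a newly marked edge count as half a mark, and at most $2k-p$ marks remain to be placed, giving at most $2^{2(2k-p)}$ leaves. Without this accounting (or the correctness argument that pushing from $\lambda(v)$ makes $\delta(V_v)$ an important $(v,S)$-separator, which is also absent from your sketch), your proof does not close.
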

\begin{proof}
  We assume that the input from stage 1 has the property that an edge
  is marked if and only if it is incident with a terminal. We proceed
  with a branching process as follows. Let $v$ be an arbitrary
  non-terminal vertex incident with at least one non-marked edge;
  if none exists, simply output the instance. 
  Compute a furthest min-cut $C$ between $v$ and $S$ in $G$, aborting
  if $|C|>k$. Let $U$ be the vertices reachable from $v$ in
  $G-C$. If $U$ contains any marked edge, abort the branch as being
  the result of inconsistent choices; otherwise contract $U$ into $v$.
  If every edge of $\delta(v)$ is marked, proceed with a different
  starting vertex $v$. Otherwise, as above select one unmarked 
  edge from $\delta(v)$ and branch on either marking it or contracting
  it and recomputing the max-flow and min-cut. If at any point more
  than $k$ edges have become marked, abort the branch. Once only
  marked edges remain, output the instance.

  The correctness argument is similar as in stage 1. Assume that we
  are currently working with a non-terminal vertex $v$, and that
  $\lambda(v)=i$ in an optimal solution to the current instance;
  by Lemma~\ref{lemma:all-pushing-allowed}, we may assume that pushing
  from $i$ in $\lambda$ has no effect. Let $V_i=\lambda^{-1}(i)$
  and let $V_v$ be the vertices of the connected component of $G[V_i]$
  that contains $v$. Then $\delta(V_i)$ is an important separator, and
  it follows that the same holds for $\delta(V_v)$ (as otherwise
  pushing would produce a bigger set $V_i$). Thus contracting it to a
  furthest min-cut is allowed. After this, the branching process is
  exhaustive as above. 

  For the running time, we view edges as being \emph{double-marked},
  receiving one mark from each endpoint. Edges marked in stage 1 are
  viewed as having received a mark from each terminal side, \ie at
  least $p$ marks have already been placed at the start of the
  algorithm. We then view the branching around a vertex $v$ as placing
  one such mark on the $v$-side of the solution edges connecting to
  $V_i$ in $V_v$.  Hence, both the packing of a path from $v$ and the
  final marking of an edge represents the placing of half a mark, and
  in total at most $2k-p$ such marks will be placed (ignoring the
  final leaf of an aborted branch where $|C|$ grows too large). Hence
  the total number of branches in this stage is at most $2^{2(2k-p)}$. 
  Finally, for the polynomial part of the running time, we consider
  the amount of work done in a single node, before branching. As
  noted, every additional path added in the max-flow computation takes
  $O(m)$ time to find and counts as a half-mark against our budget,
  hence at most $O(k)$ paths are found before aborting. Additionally, 
  every contraction step (out to a furthest min-cut) takes $O(m)$ time
  in total, and by the same argument only $O(k)$ contraction steps
  are performed. Hence the local work per node is bounded as
  $O(km)$.
\end{proof}

\noindent
After stage 2, the remaining graph contains at most $k$ edges, 
hence at most $O(k)$ vertices, and it only remains to find the
min-cost labelling of the non-terminal vertices. In the absence of 
any stronger structural properties of the metric $\mu$, this last
phase can be completed in $|D|^{O(k)}O(m)$ time. 
Theorem~\ref{theorem:pushing-algorithm} follows.

%


\subsection{A kernel for any metric}\label{section:kernel}
We show that \textsc{Zero Extension} has a kernel of $O(k^{s+1})$
vertices for any metric $\mu$, where $k$ is a bound on the crossing
number of a solution and $s$ is the number of labels of $\mu$. In
fact, stronger than this, we show that such a kernel can be computed
without access to $\mu$: We can find a set $Z$ of $O(k^{s+1})$ edges
such that for every instance $I=(G, \tau, \mu, q)$ with terminal set
$S$, if $I$ admits any solution with at most $k$ crossing edges, 
then $F$ contains all crossing edges of at least one optimal solution
of this type for $I$. By contracting all edges not in $Z$ this allows 
us to construct a graph $G'$ with $O(k^{s+1})$ edges such that $(G',S)$
has the `same behaviour' as $(G,S)$, up to the values of $k$ and $s$. 
We refer to this as a \emph{metric sparsifier}.
Let us make this more precise.

\begin{definition}
  For an instance $I=(G, \tau, \mu, q)$ of \textsc{Zero Extension}
  and an integer $k$, the \emph{$k$-bounded cost} $\cost(I,k)$ of $I$
  is the minimum cost of a $\tau$-extension $\lambda$ with crossing
  number at most $k$, or $\infty$ if no such $\tau$-extension exists.  
  Let $G=(V,E)$ be a graph with a set of terminals $S \subseteq V$,
  and let $k$ and $s$ be integers, $s \geq |S|$. A \emph{$k$-bounded
    metric sparsifier for $(G, S)$ (for metrics with up to $s$ labels)}
  is a graph $G'=(V',E')$ with $S \subseteq V'$, such that for
  any metric $\mu$ on a set $D$ of at most $s$ labels, and for any
  injective labelling $\tau\colon S \to D$, we have
   $\cost((G, \tau, \mu, q),k) = \cost((G', \tau, \mu, q), k)$. 
\end{definition}

\noindent
We show the following result, which implies the existence of a small
metric sparsifier. 

\begin{theorem} \label{theorem:kernel}
  Let $s \geq 3$ be a constant. For every graph $G=(V,E)$ 
  with a set $S$ of terminals, $|S| \leq s$, and integer $k$, 
  there is a randomized polynomial-time computable set $Z \subseteq E$
  with $|Z|=O(k^{s+1})$ such that for any instance
  $I=(G, \tau, \mu, q)$ of \textsc{Zero Extension}
  with $S$ being the set of terminals in $I$ and $\mu$ having 
  at most $s$ labels, if $\cost(I, k)<\infty$ then
  there is a $\tau$-extension $\lambda$ with crossing number at most
  $k$ and cost $\cost(I, k)$ such that every crossing edge of
  $\lambda$ is contained in $Z$. 
\end{theorem}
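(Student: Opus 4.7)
The plan is to construct $Z$ via matroid-based representative sets, generalising the Kratsch--Wahlström kernel for $s$-Multiway Cut~\cite{KratschW12FOCS}. I first apply Corollary~\ref{corollary:pushing-reduction} so that $\tau$ is injective, each used label is realised by a unique terminal, and every label class of the sought optimum is pushed. Any $\tau$-extension $\lambda$ with at most $k$ crossing edges partitions $V$ into at most $s$ parts (distinct parts carry distinct labels from $D$), giving a ``coloured'' multiway cut: the crossings $C=\{uv:\lambda(u)\neq\lambda(v)\}$ together with the partition $(V_1,\ldots,V_r)$ and an injective assignment of labels $\ell_i\in D$ to parts that agrees with $\tau$ on the terminals each part contains.

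The cost of such a $\lambda$ under any metric $\mu$ decomposes as $\cost_\mu(\lambda,G)=\sum_{i<j}\mu(\ell_i,\ell_j)\cdot|E(V_i,V_j)|$, so it depends on $\mu$ only through the \emph{profile} $\pi=(\ell_1,\ldots,\ell_r;\,(|E(V_i,V_j)|)_{i<j})$ of~$\lambda$. It is therefore enough to build $Z\subseteq E(G)$ of size $O(k^{s+1})$ with the property that, for every profile $\pi$ realised by some $\tau$-extension with $|C|\leq k$, \emph{some} such extension $\lambda'$ with crossing set $C'\subseteq Z$ also realises $\pi$. Then $\cost_\mu(\lambda',G)=\cost_\mu(\lambda,G)$ for every admissible $\mu$, and in particular the optimum under any given $\mu$ is attained by some $\lambda'$ whose crossings lie entirely in $Z$.

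To build $Z$ I use matroid representative sets in the spirit of~\cite{Lovasz1977,Marx09-matroid,KratschW12FOCS}. For each ordered pair of terminals $(t_i,t_j)$, the edges that may occur in a $(t_i,t_j)$-separator of size at most $k$ can be encoded by a gammoid-like linear matroid. Taking a suitable product over the $\binom{|S|}{2}$ pairs (and over the at most $s$ possible labels for non-terminal parts) yields a linear matroid $M$ over $E(G)$ of rank $O(ks)$ in which each candidate crossing set $C$ of size at most $k$, together with its profile $\pi$, corresponds to an independent $k$-tuple. Applying the $q$-representative-set theorem then extracts a subfamily of size $O(k^{s+1})$ that covers every realisable profile $\pi$; the union of the edges appearing in this subfamily is $Z$. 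A representation of $M$ can be computed in randomised polynomial time by sampling coefficients from a sufficiently large field, which supplies the randomisation in the theorem statement.

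The main obstacle is designing $M$ so that representatives preserve the \emph{full} per-pair profile rather than merely the total crossing count as in $s$-Multiway Cut. Two cuts with the same $|C|$ may have very different cost under a metric that weights the terminal pairs unequally, so the bookkeeping must track each pairwise boundary size $|E(V_i,V_j)|$ simultaneously. I expect this to require a product-matroid construction over ordered terminal pairs (and labels), together with a careful swapping argument showing that substituting $C$ by a representative $C'\subseteq Z$ leaves every $|E(V_i,V_j)|$, as well as the induced label assignment, unchanged.
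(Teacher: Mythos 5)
Your overall strategy---reduce with Corollary~\ref{corollary:pushing-reduction}, then build $Z$ via representative sets over a gammoid-like matroid, in the spirit of Kratsch--Wahlstr\"om~\cite{KratschW12FOCS}---matches the paper's. However, the specific route you sketch diverges at the critical step and, as written, has a real gap.

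The gap is in the reduction to ``profiles.'' You observe, correctly, that for a fixed labelling $\lambda$ the cost under any $\mu$ depends only on the label assignment to parts and the boundary sizes $|E(V_i,V_j)|$, and you then aim to build $Z$ so that \emph{every} realisable profile is realised by some $\lambda'$ with crossings inside $Z$. Two problems arise. First, the representative-set machinery (Lemma~\ref{lemma:representative-sets}) preserves an independence/extension relation, not quantitative data; there is no obvious way to make a linear matroid ``remember'' a vector of pairwise cut sizes, and the ``product-matroid over ordered terminal pairs'' you gesture at is not worked out. You acknowledge this as ``the main obstacle,'' but it is in fact the entire content of the theorem. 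Second, the property you want is stronger than necessary: you only need that, for every \emph{instance} $I$ over $(G,S)$ admitting a solution, \emph{some} optimum for $I$ has its crossings in $Z$; you do not need to preserve every realisable profile, and insisting on the latter makes the combinatorics harder, not easier.

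The paper avoids profiles entirely. It defines an edge $e$ to be \emph{essential} if for some instance $I$ over $(G,S)$, $e$ is crossing in \emph{every} $\tau$-extension of crossing number at most $k$ and cost $\cost(I,k)$. The matroid $M$ has one gammoid layer \emph{per label} (not per terminal pair) with source set $E_S=\bigcup_{t\in S}\delta(t)$, plus a uniform matroid $M_0=U_{m,k}$; for each edge $e$ one forms a size-$(s+1)$ tuple $Y(e)=(e_1',\ldots,e_s',e_0)$ and computes a representative family. Given an essential $e$ witnessed by an optimal $\lambda$ (chosen with minimum crossing number among optima), the paper builds an independent set $X$ from $\lambda$ by placing $\delta(V_i)\cup\delta(t_i)$ (or $\delta(V_i)$ for non-terminal labels) in layer $i$ and $C-e$ in $M_0$, and shows via the pushing lemma that $Y(f)$ extends $X$ if and only if $f=e$. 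This pins down every essential edge in $Z_0$ without ever tracking cut-size vectors. Non-essential edges are \emph{irrelevant} in the strong sense that contracting one preserves, for \emph{every} $I$ over $(G,S)$, the existence of an optimum with that edge non-crossing, so one contracts irrelevant edges iteratively (recomputing $Z_0$ each round) until $Z_0\cup E_S$ covers all remaining edges. To repair your argument along your own lines, you would have to either (a) exhibit a concrete linear matroid whose independence relation encodes the full $\bigl(|E(V_i,V_j)|\bigr)_{i<j}$ vector and show it has rank $O(k)$ so that the representative-set bound $\binom{r+s}{s}=O(k^{s+1})$ survives---which seems difficult---or (b) abandon profile preservation and switch to an essential-edge/irrelevant-edge formulation as the paper does.
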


\noindent
The kernel is an adaptation of the \emph{irrelevant vertex} strategy 
used in the kernel of \textsc{Multiway Cut} for $s$ terminals of
Kratsch and  Wahlstr\"om~\cite{KratschW12FOCS}. In that paper, a
kernel is produced by first computing a set $Z_0$ of $O(k^{s+1})$
vertices which contains all vertices $v$ which are contained in \emph{every} minimum
solution of size at most $k$. Then one \emph{irrelevant vertex} 
$v \notin Z_0$ is chosen and removed from the graph via a
bypassing operation, and the process is repeated until the computed
set $Z_0$ covers all non-terminal vertices, at which point $Z_0$
is the desired final set $Z$. The set $Z_0$ is
computed using a tool from matroid-theory called
\emph{representative sets}. These same components will yield a
kernel for our present problem. 

We give only a brief sketch of the technical background, and focus on
the kernelization result itself. 
For a complete description of the technical tools involved,
see Kratsch and Wahlstr\"om~\cite{KratschW12FOCS}.

A \emph{matroid} is an ``independence system'' $M=(E, \cI)$,
$\cI \subseteq 2^V$, subject to certain axioms.
The sets $S \in \cI$ are the \emph{independent sets} of $M$.
Matroids have broad applications in combinatorics in general; see
Oxley~\cite{OxleyBook} and Schrijver~\cite{SchrijverBook}. 
A \emph{representation} of a matroid is a matrix $A$ with columns labelled by $E$
such that for every $S \subseteq V$, $S \in \cI$ if and only if
the corresponding columns of $A$ are linearly independent.
A \emph{gammoid} is a matroid corresponding to flows in a graph. 
Given a directed graph $G=(V,E)$ with a set of source vertices 
$S \subseteq V$, and a set $U \subseteq V$, the gammoid defined from
$G$, $S$ and $U$ is a matroid $M=(U, \cI)$ where $T \subseteq U$ is
independent if and only if the fully vertex-disjoint max-flow from
$S$ to $T$ is of size $|T|$; equivalently, there is no $(S,T)$-cut
in $G$ of fewer than $|T|$ vertices. A representation of a gammoid
can be computed in randomized polynomial time~\cite{OxleyBook}.
In our application, we need the gammoid to represent edge cuts
instead of vertex cuts; clearly this can be done by subdividing
edges of $E$ and multiplying every vertex $v$ of $G$ into $d(v)$
copies. Undirected edges of $G$ can be implemented as a pair of
directed edges in opposite directions. Let $G'$ be the directed
graph resulting from applying these modifications to $G$. 
  
The main tool in our kernel is the following result. (See also Fomin
et al.~\cite{FominLPS16} for a faster algorithm.)

\begin{lemma}[\cite{Lovasz1977,Marx09-matroid}]%
  \label{lemma:representative-sets}
  Let $M=(E, \cI)$ be a linear matroid represented by a matrix $A$ of rank $r+s$,
  and let $\cY$ be a collection of independent sets of $M$, each of size $s$. 
  Assume that $s$ is a constant. 
  Then in polynomial time we can compute a set $\cY^*\subseteq \cY$ 
  of size at most $\binom{r+s}{s}$ such that for every
  set $X \subseteq E$, there is a set $Y \in \cY$
  such that $X \cap Y=\emptyset$ and $X \cup Y \in \cI$
  if and only if there is such a set $Y' \in \cY^*$. 
\end{lemma}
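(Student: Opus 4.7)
The plan is to adapt the irrelevant-vertex / representative-sets kernelization of Kratsch and Wahlström for $s$-\textsc{Multiway Cut}~\cite{KratschW12FOCS} to the \textsc{Zero Extension} setting, using the Pushing Lemma as the crucial structural hook that restricts attention to minimum isolating cuts.

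First I would normalize the solution. By Corollary~\ref{corollary:pushing-reduction} I may assume every label $\ell$ in the range of $\tau$ has a unique terminal $t_\ell$ whose sole minimum isolating cut is $\delta(t_\ell)$. Given any optimal $\tau$-extension $\lambda$ of crossing number at most $k$, Lemma~\ref{lemma:all-pushing-allowed} applied label by label produces an equivalent-cost labelling $\lambda^*$ in which each color class $V_\ell=(\lambda^*)^{-1}(\ell)$ is maximally pushed: $V_\ell$ equals the set of vertices reachable from $t_\ell$ in $G\setminus\delta(V_\ell)$. Each $C_\ell:=\delta(V_\ell)$ is then a \emph{pushed isolating cut} for $t_\ell$ with $|C_\ell|\le k$, and the crossing edges of $\lambda^*$ together form $C_1\cup\cdots\cup C_s$.

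Next I would build the representative family without $\mu$. Following the KW12 edge-cut construction, transform $G$ into a directed graph $G'$ encoding edge cuts as vertex cuts (subdivide every edge once and split each original vertex into $\deg(v)$ copies), and for each $t_i$ define the associated edge-cut matroid $M_i$ on (the subdivisions of) $E(G)$, chosen as in KW12 via truncation or a hybrid gammoid so that every pushed isolating cut for $t_i$ of size $j\le k$ is an independent set of size $j$ in $M_i$ while the matroid rank stays $O(k)$. A randomized representation of $M_i$ is computable in polynomial time. For each $i\le s$ and $j\le k$, let $\cY_{i,j}$ be the family of pushed isolating cuts for $t_i$ of size exactly $j$ and apply Lemma~\ref{lemma:representative-sets} to extract a subfamily $\cY_{i,j}^*$ of size $O(k^{s-1})$. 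Setting $Z:=\bigcup_{i,j}\bigcup_{C\in\cY_{i,j}^*}C$ gives $|Z|=O(k^{s+1})$. Crucially, $Z$ is defined purely from $G$, $S$, and $k$ — no use of $\mu$ anywhere in the construction, which is what delivers the metric-sparsifier property.

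It then remains to transform $\lambda^*$ into an equal-cost $\lambda'$ whose crossing edges all lie in $Z$. I would do this by a swap argument, processing the terminals $t_1,\ldots,t_s$ sequentially. At step $i$, encode the pushed cuts of the already-swapped and still-unswapped terminals as a bounded-size conflict set $X_i$ in $M_i$; the representative-set guarantee applied to $\cY_{i,|C_i|}^*$ then yields some $C_i'\in\cY_{i,|C_i|}^*$ compatible with $X_i$, meaning $C_i'\cup X_i$ extends the same flow witness as $C_i\cup X_i$. The pushed region $V_i'$ defined by $C_i'$ still isolates $t_i$ from $S\setminus\{t_i\}$, and the submodularity argument underlying Corollary~\ref{corollary:pushing-reduction} shows that swapping $V_i$ for $V_i'$ in the labelling cannot increase its $\mu$-cost for any metric $\mu$. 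After $s$ rounds every $C_i$ has been replaced by a cut in $Z$, so $\lambda'$ witnesses the theorem.

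The main obstacle lies jointly in the matroid design and the swap step. Choosing $M_i$ so that its rank is $O(k)$ rather than the unhelpful $\Theta(m)$ of the naive edge-cut gammoid requires the KW12-style truncation/hybrid construction; without this the representative family bound degenerates. Even with rank under control, the swap argument needs a careful encoding of ``what the other terminals have committed to'' as a bounded-rank auxiliary set so that Lemma~\ref{lemma:representative-sets} actually applies at each step, and the $\mu$-independence of $Z$ forces the cost-preservation to be driven purely by submodularity of the edge-cut function rather than by any metric-specific reasoning. Fitting these three ingredients — rank control, compatible encoding, and metric-free swap — together in a single proof is where the real work lies.
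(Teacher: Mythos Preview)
Your proposal does not address the stated lemma at all. Lemma~\ref{lemma:representative-sets} is a cited black-box result from~\cite{Lovasz1977,Marx09-matroid} about representative sets in linear matroids; the paper does not prove it and simply invokes it as a tool. Your write-up is instead a proof plan for Theorem~\ref{theorem:kernel} (the metric-sparsifier theorem), and indeed you explicitly \emph{apply} Lemma~\ref{lemma:representative-sets} as a step inside your plan. So as a proof of the statement you were asked about, the proposal is a non-sequitur.

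If one reads your proposal as an attempt at Theorem~\ref{theorem:kernel}, it diverges substantially from the paper's proof and has real gaps. The paper builds a \emph{single} matroid $M$ as the direct sum of $s$ gammoid copies (source set $E_S=\bigcup_t \delta(t)$) plus a uniform matroid, and takes $\cY=\{Y(e)\mid e\in E\setminus E_S\}$ where each $Y(e)$ has size $s{+}1$ and encodes a single edge across all layers; a single call to Lemma~\ref{lemma:representative-sets} then yields $Z_0$, and any edge outside $Z_0\cup E_S$ is shown to be non-essential and contracted. Your plan instead takes $\cY_{i,j}$ to be the family of all pushed isolating cuts of size $j$ for $t_i$ and applies the lemma per $(i,j)$. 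Two problems: first, that family can be exponential in size, and Lemma~\ref{lemma:representative-sets} needs $\cY$ explicitly as input, so you cannot compute $\cY_{i,j}^*$ in polynomial time as stated. Second, your sequential swap step asserts that replacing $V_i$ by $V_i'$ ``cannot increase its $\mu$-cost for any metric $\mu$'' via submodularity of the edge-cut function; but edge-cut submodularity controls only the \emph{cardinality} of $\delta(V_i')$, not the $\mu$-weighted cost of the resulting labelling, and the Pushing Lemma only lets you grow a class, not swap it for an incomparable region of the same boundary size. The paper avoids both issues by working at the level of individual essential \emph{edges} rather than whole cuts.
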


\noindent
As shorthand, for an independent set $X$ we say that $Y$
\emph{extends} $X$ if $X \cap Y=\emptyset$ and $X \cup Y \in \cI$. 
The following is a useful characterization of this notion in gammoids.

\begin{proposition}[Prop.~1 of~\cite{KratschW12FOCS}]
  \label{prop:extends}
  Let $X$ be a independent set in a gammoid defined from $G$, $S$ and
  $U$. Let $X'$ be the minimum $(S,X)$-vertex cut closest to $S$ in $G$,
  which may overlap $S$. Then for a vertex $v \in U$, the set $\{v\}$
  extends $X$ in the gammoid if and only if $v$ is reachable from $S$
  in $G-X'$.
\end{proposition}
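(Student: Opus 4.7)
The plan is to prove the biconditional by unpacking the gammoid definition and then invoking Menger's theorem together with the standard submodular structure of minimum vertex cuts. Recall that a set $Y \subseteq U$ is independent in the gammoid iff there is a fully vertex-disjoint $(S,Y)$-flow of size $|Y|$, equivalently, by Menger's theorem with vertex capacities, iff no $(S,Y)$-vertex cut has fewer than $|Y|$ vertices. Consequently, $\{v\}$ extends $X$ iff $X \cup \{v\}$ is independent iff there is no $(S, X \cup \{v\})$-vertex cut of size at most $|X|$.

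For the forward direction, I would argue by contradiction. Suppose $\{v\}$ extends $X$, so there is a collection $\cP$ of $|X|+1$ pairwise vertex-disjoint $S$-to-$(X \cup \{v\})$ paths, and suppose further that $v$ is not reachable from $S$ in $G-X'$. Then every $S$-$v$ walk, and in particular the path $P_v \in \cP$ ending at $v$, must use a vertex of $X'$. The remaining $|X|$ paths in $\cP$ each end in $X$ and hence also cross $X'$, since $X'$ is an $(S,X)$-cut. Because these $|X|$ paths are pairwise vertex-disjoint and $|X'|=|X|$, they already cover every vertex of $X'$, leaving no vertex of $X'$ available for $P_v$. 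This contradicts vertex-disjointness of $\cP$.

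For the reverse direction, I would again argue by contradiction. Assume $v$ is reachable from $S$ in $G-X'$ and suppose $\{v\}$ does not extend $X$. Then there is an $(S, X \cup \{v\})$-vertex cut $Y$ with $|Y| \leq |X|$. Since $Y$ is in particular an $(S,X)$-cut, $|Y|\geq |X|$, forcing $|Y|=|X|$ and making $Y$ an $(S,X)$-min-cut. Writing $R(Z)$ for the set of vertices reachable from $S$ in $G-Z$, the standard submodularity argument on minimum vertex cuts gives that the min-cut $X'$ closest to $S$ satisfies $R(X') \subseteq R(Z)$ for every $(S,X)$-min-cut $Z$, in particular for $Y$. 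Thus $v \in R(X') \subseteq R(Y)$. On the other hand, $Y$ is also an $(S, v)$-cut, so $v \notin R(Y)$ (either $v \in Y$, or $v$ is disconnected from $S$ in $G-Y$). This contradiction completes the proof.

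The main technical point I expect to need care with is the ``closest min-cut'' statement in the presence of a cut that may overlap $S$, as the proposition explicitly warns. I would handle this cleanly by transferring to the auxiliary directed split-graph construction that turns vertex cuts into edge cuts (replacing each vertex $u$ by an arc $u^-\!\to u^+$ of capacity one), where submodularity of the edge-cut function and the existence of a unique source-side-minimal min-cut are textbook, and then translating the conclusion back to $G$; this sidesteps any awkwardness caused by $X' \cap S \neq \emptyset$ or $Y \cap S \neq \emptyset$.
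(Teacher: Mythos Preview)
The paper does not actually prove this proposition; it is quoted verbatim as Proposition~1 of Kratsch and Wahlstr\"om~\cite{KratschW12FOCS} and used as a black box in the proof of Theorem~\ref{theorem:kernel}. So there is no in-paper proof to compare against.

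That said, your argument is a correct self-contained proof. Both directions go through as written. In the forward direction, the one edge case worth spelling out is $v \in X'$: then the $|X|$ paths to $X$ are disjoint from $P_v$ and hence avoid $v$, so they must each hit $X' \setminus \{v\}$, a set of size $|X|-1$, which already yields the pigeonhole contradiction before you even look at $P_v$. Your phrasing ``they already cover every vertex of $X'$'' is literally false in that corner case, but the contradiction survives. In the reverse direction you implicitly use that reachability of $v$ in $G-X'$ forces $v \notin X$ (needed for the ``extends'' condition $X \cap \{v\} = \emptyset$); this is immediate since $X'$ separates $S$ from $X$, but it is worth stating. Your plan to pass to the split digraph to justify the ``closest min-cut has the smallest reachable set'' property is exactly the clean way to handle the possibility $X' \cap S \neq \emptyset$ flagged in the statement.
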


\noindent
For further terms from matroid theory used below,
see~\cite{KratschW12FOCS}, alternatively Oxley~\cite{OxleyBook}
and Marx~\cite{Marx09-matroid}.

We are now ready to prove Theorem~\ref{theorem:kernel}.

\begin{proof}
  Let a graph $G=(V,E)$ with a terminal set $S \subseteq V$,
  $|S| \leq s$, and an integer $k$ be given. We apply
  Corollary~\ref{corollary:pushing-reduction} to $G$, 
  then create a digraph $G'$ from $G$ as above. Additionally, for every edge 
  $e=uv \in E$ we introduce a \emph{sink-copy} $e'$ of $e$ in
  $G'$, with in-arcs from all copies of vertices $u$ and $v$ in $G'$,
  and no further in- or out-arcs.  
  Let $E_S=\bigcup_{t \in S} \delta(t)$. We first note that if
  $|E_S|>2k$ then we may reject the input.

  \begin{claim}
    If $\sum_{t \in S} d(t,G) > 2k$, then every labelling $\lambda$
    that is injective on $S$ has more than $k$ crossing edges.
  \end{claim}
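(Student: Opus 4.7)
The plan is to reduce the bound to the standard LP-style lower bound for edge multiway cut, using the reduction of Corollary~\ref{corollary:pushing-reduction} to convert terminal degrees into max-flow values. Since that reduction has already been applied at the start of the proof of Theorem~\ref{theorem:kernel}, for every $t \in S$ the set $\delta(t)$ is the (unique) minimum $(t, S-t)$-edge cut in $G$; in particular, by Menger's theorem the edge-max-flow $f_t$ from $t$ to $S - t$ already equals $d(t, G)$.

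Next I would take any labelling $\lambda$ that is injective on $S$ and let $C$ be its set of crossing edges. The first key observation is that $C$ is an edge multiway cut for $S$ in $G$: since different terminals receive different labels, the connected component $V_t$ of $G - C$ containing a terminal $t$ cannot contain any other terminal. Consequently $\delta(V_t) \subseteq C$ is a $(t, S - t)$-edge cut, so $|\delta(V_t)| \geq f_t = d(t, G)$.

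The final step is a standard double-counting argument: every edge of $C$ is incident to exactly two vertices, each living in at most one of the components $V_t$, so each edge of $C$ is counted in at most two of the sets $\delta(V_t)$. Summing $|\delta(V_t)| \geq d(t, G)$ over $t \in S$ therefore yields $2|C| \geq \sum_{t \in S} d(t, G) > 2k$, i.e.\ $|C| > k$, which is precisely the desired conclusion.

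There is no real obstacle here; the only point that genuinely needs care is that the equality $f_t = d(t, G)$ depends on Corollary~\ref{corollary:pushing-reduction} being in force (without it one only obtains the weaker bound $|C| \geq \tfrac{1}{2}\sum_t f_t$), and that the component-level double counting relies on injectivity of $\lambda$ on $S$. I would therefore explicitly flag both assumptions at the start of the argument.
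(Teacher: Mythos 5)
Your proof is correct. The paper's own proof is terse: it notes that the crossing edges of an injective labelling form an edge multiway cut for $(G,S)$, and then cites the lower bound $\sum_{t\in S} d(t,G)/2$ on the size of such a cut ``as noted previously in Section~\ref{section:pushing-algorithm}'', where the bound is justified via the existence of a half-integral $S$-path packing of total value $\sum_{t} f_t/2$ (citing Schrijver) that every multiway cut must hit. You instead establish the same bound from the cut side directly, by the standard component-level double count: $\delta(V_t) \subseteq C$ is a $(t,S-t)$-cut, hence $|\delta(V_t)| \geq f_t$, and each edge of $C$ lies in at most two of the sets $\delta(V_t)$, giving $2|C| \geq \sum_{t} f_t$. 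This is the dual formulation of the same inequality; it is slightly more elementary and self-contained since it avoids the appeal to half-integral path packings. Both arguments equally rely on Corollary~\ref{corollary:pushing-reduction} having been applied so that $\delta(t)$ is a minimum isolating cut and hence $d(t,G)=f_t$ (without it one only gets $|C| \geq \tfrac12\sum_t f_t$, not the stated bound in terms of degrees), and on injectivity of $\lambda$ on $S$ to ensure the components $V_t$ are distinct; you correctly flag both dependencies.
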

  \begin{proof}
    The crossing edges of any such $\lambda$ form a multiway cut of
    $(G,S)$. It is known that the cardinality of a multiway cut
    is at least $\sum_{t \in S} d(t,G)/2$, as noted previously in
    Section~\ref{section:pushing-algorithm}.
  \end{proof}

  Hence if $|E_S|>2k$, then we reject the input (alternatively, simply
  produce $Z=\emptyset$). Otherwise proceed as follows. Let an
  \emph{instance over $(G,S)$} be an instance 
  $I(\tau, \mu, q)=(G, \tau, \mu, q)$ of \textsc{Zero Extension} 
  where $\mu$ is a metric over a set of labels $D$ with $|D| \leq s$
  and $\tau$ is an injective labelling $\tau\colon S \to D$. 
  We first observe that the existence of a $\tau$-extension $\lambda$
  with crossing number at most $k$ for any instance over $(G,S)$ is a
  property purely of $(G, S)$ and $k$, hence independent of the
  metric; explicitly, it exists if and only if $(G,S)$ has a multiway
  cut of at most $k$ edges. Since the theorem is vacuous otherwise, 
  we assume that such a multiway cut exists, hence that
  $\cost(I, k) < \infty$ for every instance $I$ over $(G,S)$. 
  Say that an edge $e$ is \emph{essential} in $G$ if there is
  some instance $I=I(\tau, \mu, q)$ over $(G,S)$ such that there is at
  least one $\tau$-extension $\lambda$ with crossing number at most
  $k$ and cost at most $q$, and the edge $e$ is crossing in every such
  $\lambda$. We compute a set $Z_0$ that contains all essential edges; 
  any edge of $G$ not contained in $Z_0 \cup E_S$ is then
  \emph{irrelevant}. The computation of $Z_0$ makes up the major part
  of this proof. 

  Number the vertices of $S$ as $S=\{t_1, \ldots, t_r\}$, $r=|S|$. 
  We define a matroid $M$ as the disjoint union of the following
  matroids: Matroids $M_1$ through $M_s$ are disjoint copies of the
  gammoid defined from $G'$ with source set 
  $E_S=\bigcup_{t \in S} \delta(t)$.  Note that $E_S$ is a vertex set
  in $G'$. Finally, let $M_0$ be the uniform matroid $U_{m,k}$ of rank
  $k$ on ground set $E(G)$. Since $M_0$ has a representation over
  every sufficiently large field, we can compute a representation of
  $M$ in randomized polynomial time with exponentially small failure
  probability~\cite{Marx09-matroid}. We refer to $M_1$, \ldots, $M_s$
  and $M_0$ as the \emph{layers} of $M$.
  For an independent set $X$ of $M$,
  a set $Y$ extends $X$ in $M$ if and only if the restriction of $Y$  
  to layer $i$ extends the restriction of $Y$ to layer $i$
  for every layer $i$. 
  Since the rank of each gammoid is $|E_S| \leq 2k$, the rank
  of $M$ is at most $(2s+1)k=O(k)$ since $s$ is a constant.
  For an edge $e \in E(G)$, we let $e_i$ (respectively $e_i'$) refer to
  the copy of $e$  (of $e'$) in $M_i$, $i \in [s]$, and $e_0$ refers to
  the copy of $e$ in $M_0$. 

  We now define the collection $\cY$ of sets of size $s+1$. 
  For an edge $e \in E(G)$, let $Y(e)=(e_1', \ldots e_s', e_0)$, and 
  define $\cY=\{Y(e) \mid e \in E(G) \setminus E_S\}$. 
  Compute a representative set $\cY^* \subseteq \cY$ in $M$,
  and let $Z_0 \subseteq E = \{e \in E \mid Y(e) \in \cY^*\}$. 
  Then $|Z_0|=O(k^{s+1})$ by Lemma~\ref{lemma:representative-sets}.
  We show that $Z_0 \cup E_S$ contains every edge that is essential in
  $G$.  

  \begin{claim}
    If $e$ is essential in $G$ for some instance $I=I(\tau, \mu, q)$
    over $(G,S)$, then $e \in Z_0 \cup E_S$. 
  \end{claim}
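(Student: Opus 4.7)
My plan is to prove the contrapositive: if $e \in E \setminus E_S$ and $Y(e) \notin \cY^*$ (so $e \notin Z_0$), then $e$ is not essential. Fix any instance $I = I(\tau,\mu,q)$ over $(G,S)$ and any $\tau$-extension $\lambda$ with crossing number at most $k$, cost at most $q$, and $e$ crossing. I will produce another such $\tau$-extension $\lambda'$ in which $e$ is non-crossing, contradicting essentiality.

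Let $F$ be the crossing edges of $\lambda$, and for each label $i \in [s]$ set $V_i = \lambda^{-1}(i)$ and $C_i = \delta(V_i) \subseteq F$; note that every $f \in F$ belongs to exactly two of the $C_i$, namely those corresponding to the labels of its endpoints. I encode $\lambda$ with $e$ removed as an independent set $X \subseteq E(M)$ by setting $X \cap M_0 = \{f_0 : f \in F \setminus \{e\}\}$ and, in each gammoid layer, $X \cap M_i = \{f_i' : f \in C_i \setminus \{e\}\}$. The first verification is that $Y(e) = (e_1', \ldots, e_s', e_0)$ extends $X$ in $M$. In $M_0 = U_{m,k}$ this holds because $|F| \leq k$. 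In each gammoid layer $M_i$ it follows from Proposition~\ref{prop:extends} and Menger's theorem: the region $V_i$ contains $t_i$ and so admits a vertex-disjoint routing in $G'$ from $E_S \supseteq \delta(t_i)$ that reaches every sink copy $f_i'$ for $f \in C_i$, with an additional path to $e_i'$ through one of $e$'s endpoints when $e \notin C_i$.

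Applying Lemma~\ref{lemma:representative-sets}, since $Y(e) \notin \cY^*$ there exists an edge $h \in E \setminus E_S$ with $h \neq e$ such that $Y(h) \in \cY^*$ also extends $X$. Disjointness from $X \cap M_0$ forces $h \notin F$, so $h$ is not crossing in $\lambda$. The main obstacle is converting this algebraic extension into a concrete combinatorial replacement yielding $\lambda'$. My approach is as follows. The $M_0$ condition gives $|(F \setminus \{e\}) \cup \{h\}| \leq k$, the target crossing count of $\lambda'$. The substantive work lies in the two gammoid layers $a, b$ corresponding to the endpoint labels of $e$ in $\lambda$: the independence there of $\{f_a' : f \in C_a \setminus \{e\}\} \cup \{h_a'\}$ and its layer-$b$ analogue, together with standard submodular-cut reasoning, provides modified regions $V_a', V_b'$ whose boundaries are $(C_a \setminus \{e\}) \cup \{h\}$ and $(C_b \setminus \{e\}) \cup \{h\}$ while the other $V_j$ are left unchanged. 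In the resulting labelling $\lambda'$, the edge $h$ crosses between labels $a$ and $b$ and $e$ lies internally, so the crossing set is exactly $(F \setminus \{e\}) \cup \{h\}$. Because the cost under a metric is $\sum_{f \in F} \mu(\lambda(u_f), \lambda(v_f))$ and $h$ contributes $\mu(a,b)$ in $\lambda'$ in place of $e$'s identical contribution in $\lambda$, we obtain $\cost_\mu(\lambda', G) = \cost_\mu(\lambda, G) \leq q$, so $\lambda'$ witnesses that $e$ is not essential.
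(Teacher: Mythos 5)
Your proposal takes a fundamentally different route from the paper, and the central step is not sound. The paper proves the positive direction: it fixes an \emph{extremal} labelling $\lambda$ (minimum cost, and among those minimum crossing number), encodes it as an independent set $X$ using \emph{ordinary} edge copies together with $\delta(t_i)$ in each gammoid layer, and then establishes the sharp equivalence ``$Y(f)$ extends $X$ if and only if $f=e$.'' This uniqueness forces $Y(e) \in \cY^*$. Both directions of that equivalence rely on the pushing lemma: independence of $X$ is argued by contradiction (pushing would reduce the crossing number), and the ``if'' direction uses that after pushing from label $i$ some endpoint of $e$ is still reachable from $E_S$ around $\delta(V_i)$. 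Your proof makes no use of the extremality of $\lambda$ nor of the pushing lemma, and this is not a cosmetic difference.

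The genuine gap is in your step 4, the inversion from $Y(h)$ back to a new labelling $\lambda'$. Gammoid independence of $\{f_a' : f \in C_a \setminus\{e\}\}\cup\{h_a'\}$ asserts only the existence of a vertex-disjoint routing from $E_S$ to these sink copies in $G'$; it does not produce, and is not equivalent to, the existence of a vertex set $V_a'$ containing $t_a$ with boundary $(C_a\setminus\{e\})\cup\{h\}$. Representative sets certify that some $Y(h)$ is \emph{compatible} with the encoded cut structure $X$, but do not provide a recipe for re-partitioning $V(G)$; inverting the encoding is exactly what the technique does not give you, which is why the paper works with the unique-extension argument instead. Further, since $h\notin F$, both endpoints of $h$ share a label in $\lambda$; there is no mechanism in your construction that makes $h$ cross specifically between labels $a$ and $b$, and the asserted equality $\cost_\mu(\lambda',G)=\cost_\mu(\lambda,G)$ does not follow even if such $V_a',V_b'$ existed, because altering $V_a$ and $V_b$ changes the incident-edge costs of \emph{all} edges on their new boundaries, not just the swap $e\leftrightarrow h$. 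Finally, a smaller issue: you place sink copies $f_i'$ into $X$ in layer $i$ rather than the ordinary copies of $\delta(V_i)\cup\delta(t_i)$ as the paper does. This loses the overlap with $E_S$ that makes independence of $X$ immediate for the terminal-incident edges, and your assertion that $X$ and $X\cup Y(e)$ are independent is left unjustified.

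The missing idea is the extremal choice of $\lambda$ and the resulting uniqueness of the extension. Without it, the representative-sets machinery cannot be turned around to manufacture an alternative solution, and the contrapositive argument does not close.
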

  \begin{proof}
    Let $D$ be the set of labels of $\mu$.
    Let $\lambda$ be an extension of $\tau$ with at most $k$ crossing
    edges, and of cost $\cost(I, k)$. Assume that among all such
    extensions, $\lambda$ has the minimum crossing number.
    Let $C$ be the set of crossing edges in $\lambda$,
    and for $i \in D$ let $V_i=\lambda^{-1}(i)$. 
    Hence $C=\bigcup_{i \in D} \delta(V_i)$. 
    Define a set $X$ in $M$ as follows: 
    In layers $i=1, \ldots, r$, $X$ contains the copy
    of edges $\delta(V_i) \cup \delta(t_i)$; 
    in layers $i=r+1, \ldots, s$, $X$ contains the copy
    of edges $\delta(V_i)$; and in the final layer
    (representing $M_0$), $X$ contains the copy of $C-e$. 
    It is clear that $X$ is independent in $M$, as otherwise
    there is some label $i$ such that (by
    Lemma~\ref{lemma:all-pushing-allowed}, the pushing lemma) pushing 
    from some label $i$ would yield a labelling $\lambda'$ with
    smaller crossing number and at most the same cost.      
    The claim is now that $Y(f)$ extends $X$ if and only if $f=e$. 

    In the one direction, assume that $Y(f)$ extends $X$. 
    If $f \in C$, then $f=e$ by $M_0$. Otherwise, $f$ lies 
    in some set $V_i$, $i \in D$. But then the restriction of $X$ 
    to layer $i$ separates $f$ from $E_S$, so $f'$ cannot possibly
    extend $X$ in this layer. Hence $f \in C$, and $f=e$. 

    In the other direction, we show that $Y(e)$ indeed extends $X$. 
    This is clear in $M_0$, so we focus on a layer $i \in [s]$.
    Let $\lambda_i$ be the result of pushing from $i$ in $\lambda$ 
    and let $C_i$ be the set of crossing edges in $\lambda_i$. 
    Let $V_i'=\lambda_i^{-1}(i)$. 
    By Lemma~\ref{lemma:all-pushing-allowed}, $\lambda_i$ has
    a cost at most as high as $\lambda$, and clearly has at most $k$
    crossing edges, hence $\lambda_i$ is an optimal solution to
    $I$, and by assumption $e \in C_i$. 
    Let $v$ be an endpoint of $e$ such that $\lambda_i(v)\neq i$;
    by assumption, $v$ exists. It follows from the pushing operation
    that $v$ is reachable from $S-t_i$ (respectively
    from $S$ if $i>r$) avoiding $\delta(V_i')$. Thus, by
    Prop.~\ref{prop:extends}, $e'$ extends $\delta(V_i) \cup
    \delta(t_i)$ in $M_i$ (respectively $\delta(V_i)$ if $i>r$), 
    which is the restriction of $X$ to layer $i$. 
    Thus $Y(e)$ extends $X$. 

    Since $\cY^*$ contains at least one set $Y(f)$
    that extends $X$, we conclude $e \in Z_0$. 
  \end{proof}
  
  \noindent
  It only remains to show that sequentially contracting irrelevant
  edges (one at a time, while recomputing $Z_0$ at every step) yields
  a final set $Z$ that contains crossing edges of optimal solutions for
  all metrics as described. This should be clear. The effect of
  contracting an edge $uv$ is precisely to restrict the solution space
  to labellings $\lambda$ where $uv$ is non-crossing. All other edges
  of the graph remain identifiable, and for every instance
  $I$ over $(G,S)$ there still exists some optimal solution $\lambda$
  after the contraction. That is, at every stage, for every instance
  $I$ over $(G,S)$ there exists some solution $\lambda$ with crossing
  number at most $k$ and of cost $\cost(I,k)$, such that the edge that
  is about to be contracted is non-crossing in $\lambda$. Thus the
  final graph, where $Z \cup E_S$ covers the entire edge set, also
  contains such a solution. 
\end{proof}

\begin{corollary}
  For every graph $G$ with terminal set $S$, and integers $k$, $s$, 
  there is a $k$-bounded metric sparsifier $G'$ for $(G,S)$,
  for metrics with up to $s$ labels, with $O(k^{s+1})$ edges,
  which can be computed in randomized polynomial time.
  Furthermore, for every metric $\mu$ on $s$ labels, 
  \textsc{Zero Extension} admits a randomized polynomial kernel with
  $O(q^{s+1})$ edges.
\end{corollary}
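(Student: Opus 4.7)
The plan is to derive both parts of the corollary directly from Theorem~\ref{theorem:kernel}. For the sparsifier, I would apply Theorem~\ref{theorem:kernel} to $(G,S,k)$ to obtain the edge set $Z \subseteq E$ of size $O(k^{s+1})$ in randomized polynomial time, and then form $G'$ by contracting every edge of $E \setminus (Z \cup E_S)$, where $E_S = \bigcup_{t \in S} \delta(t)$. The proof of Theorem~\ref{theorem:kernel} already handles the case $|E_S|>2k$ by rejection (any injective labelling then has more than $k$ crossing edges), so we may assume $|E_S| \leq 2k$ and conclude $|E(G')| = O(k^{s+1})$. The retention of $E_S$ ensures that each terminal of $S$ survives contraction as a distinguished vertex of $G'$, so $\tau$ transfers verbatim to the sparsified instance.

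The essential step is to verify that for every metric $\mu$ on at most $s$ labels and every injective $\tau\colon S \to D$,
\[
  \cost((G,\tau,\mu,q),k) \;=\; \cost((G',\tau,\mu,q),k).
\]
One inequality is routine: any $\tau$-extension $\lambda'$ of $G'$ lifts to a $\tau$-extension $\lambda$ of $G$ where each contracted vertex inherits the label of its image in $G'$; the contracted edges become non-crossing, so crossing number and cost are preserved, witnessing $\cost((G,\tau,\mu,q),k) \le \cost((G',\tau,\mu,q),k)$. The other inequality uses Theorem~\ref{theorem:kernel}, which supplies an optimal $\tau$-extension $\lambda$ of $G$ of crossing number at most $k$ and cost $\cost((G,\tau,\mu,q),k)$ whose crossing edges all lie in $Z$. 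Such a $\lambda$ assigns matching labels at both endpoints of every contracted edge and therefore projects to a valid $\tau$-extension of $G'$ of the same cost and crossing number.

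For the kernel bound, I would fix $\mu$ and let $\mu_{\min}$ denote the minimum non-zero value of $\mu$; since $\mu$ is a simple cost function over a constant-size label set, $\mu_{\min}$ is a positive constant depending only on $\mu$. Then any $\tau$-extension $\lambda$ with $\cost_\mu(\lambda,G) \leq q$ has crossing number at most $q/\mu_{\min} = O(q)$. Invoking the sparsifier construction above with $k = \lfloor q/\mu_{\min}\rfloor$ then produces a graph $G'$ with $O(q^{s+1})$ edges such that $(G,\tau,\mu,q)$ is a yes-instance of \textsc{Zero Extension} if and only if $(G',\tau,\mu,q)$ is, yielding the claimed randomized polynomial kernel.

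The main obstacle I foresee is really just careful bookkeeping: the sparsifier guarantee must hold uniformly across all metrics $\mu$ and all injective $\tau$, even though $Z$ is computed from $(G,S,k)$ alone without reference to $\mu$ or $\tau$. This uniformity is however precisely the content of Theorem~\ref{theorem:kernel}, so once the lift/project correspondence above is spelled out the argument is essentially mechanical and no additional technical machinery beyond that theorem is required.
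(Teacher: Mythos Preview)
Your proposal is correct and follows essentially the same approach as the paper: build the sparsifier by contracting all edges outside $Z \cup E_S$ and verify cost-equivalence via the lift/project correspondence, then derive the kernel by bounding the crossing number in terms of $q$ and invoking the sparsifier. The only minor divergence is in the kernel step: the paper assumes integer-valued metrics and bounds the crossing number directly by $q$, whereas you use $\mu_{\min}$; your argument as written does not cover the case where distinct labels are at distance zero (then a crossing edge may have cost zero and the bound $q/\mu_{\min}$ on the crossing number fails), which the paper handles by first identifying such labels---a one-line fix.
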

\begin{proof}
  The metric sparsifier is constructed by computing the $Z$ as above,
  then contracting all edges not in $Z \cup  E_S$. For the kernel, 
  first assume that $\mu$ has no distinct pair of labels at distance
  zero, as otherwise we can compute a new metric $\mu'$ and instance
  $I'$ by identifying labels and terminals at distance 0. Let $I=(G,
  \tau, \mu, q)$ be an instance. As the metric is integer-valued, it
  follows that any $\tau$-extension $\lambda$ of cost at most $q$
  has crossing number at most $q$. Hence we get a polynomial kernel by
  computing a $q$-bounded metric sparsifier. 
\end{proof}


\section{Tree metrics: Discrete relaxations}
\label{section:tree-metrics}

We now give more powerful algorithms parameterized by the gap parameter 
for problems where the metric embeds into a tree metric. We begin by a
purely combinatorial algorithm for \textsc{Zero Extension} on leaf metrics,
then we move on to the more general \textsc{Zero Extension} and
\textsc{Metric Labelling} problems for general induced tree metrics.
The algorithms for the latter problems rely on 
the \emph{domain consistency} property of the relaxation, 
which allows us to solve the problem by simply branching
on the value of a single variable at a time. 
This property is shown by way of a detour into an analysis of 
properties of VCSP instances whose cost functions are 
\emph{weakly tree submodular}, which is a tractable problem class
containing tree metrics. The algorithms for these problems are then
straight-forward.

At this point, we need to address a subtlety regarding the input cost
function~$\mu$. So far, the cost function only had to obey basic properties
that are easily verifiable or could be seen as a `promise'. However, some of
our arguments below will explicitly need the tree~$T$ that induces the metric.
Luckily this issue has been solved already: given a induced tree metric~$\mu$
over~$D$ in matrix form, one can in time~$O(|D|^2)$ compute a tree that
induces~$\mu$~\cite{TreeMetricConstruction}. If~$\mu$ is a leaf metric, the
output will obviously have~$D$ as the leaves of~$T$. In conclusion,
we will tacitly assume that we have access to the tree~$T$ in the following.

\subsection{Leaf metrics: A duality approach}\label{section:leaf-metric}

\noindent
The \Problem{$\mu$-Edge Disjoint Packing} problem asks, given a graph~$G$
with a terminal set~$S \subseteq V(G)$, to find an edge-disjoint 
packing~$\cP$ of paths whose endpoints both lie in~$S$ that 
maximizes~$\pack(\mu,G,S) := \sum_{P \in \cP} \mu(s_P,t_P)$ (where~$s_P,t_P$ denote
the start- and endpoints of the path~$P$). Hirai and Pap~\cite{SPaths}, as part of a more general 
result, show that if~$\mu$ is a tree metric, then the problem is polynomial-time 
solvable and 
\[
  \pack(\mu,G,S) = \min_{\lambda} \max_{F\subseteq E} \sum_{uv \in E\setminus F} \mu(\lambda(u), \lambda(v)),
\]
where~$\lambda$ is precisely a zero-extension of the terminal-set~$S$ and
the sets~$F \subseteq E$ are so-called \emph{inner odd-join}, that is,
a set of edges whose deletion leaves every non-terminal vertex with an
even degree. It follows that the maximum value of a half-integral
$\tau$-path packing is just the minimum cost of a $\tau$-extension $\lambda$,
since a half-integral path-packing is just a path-packing in the graph
where every edge of $G$ has been duplicated, and such a graph has no
vertices of odd degree. 

Let in the following $I = (G,\tau,\mu,q)$ be an instance of \Problem{Zero Leaf
Extension}, where~$\mu$ is a leaf metric over a tree~$T$ with leaves~$D$. Let
$\hat \mu = \dist_T$ be the underlying tree metric. We define the
\emph{relaxed instance}~$\hat I = (G, \tau, \hat \mu, q)$. Let~$\opt(I)$,
$\opt(\hat I)$ denote the set of optimal solutions for the integral and the
relaxed instance, respectively. As mentioned in Section~\ref{sec:prelims},
it is known that the relaxed instance can be solved optimally in polynomial
time~\cite{ZeroExtOnTrees}.
For convenience, we say that a vertex~$u$ is integral with respect
to a solution~$\lambda$ if~$\lambda(u) \in D$ and we say that an edge~$uv \in G$
is integral with respect to~$\lambda$ if both endpoints are integral.

Using the above notation, we can summarize the duality between 
an minimum relaxed labelling and a path packing as follows:
Given a relaxed instance~$\hat I$, there exists
a half-integral $\tau$-path-packing~$\cP$ of cost precisely $\cost(\hat I)$.
We will not explicitly compute~$\cP$ in the final algorithm, instead we
use its existence to derive useful properties of the problem.
In the following we will use~$S$ to denote the terminals of the instance,
\ie the vertices labelled by~$\tau$. By the usual identification argument, we 
can assume that~$\tau$ is a bijection and a $\tau$-path packing is equivalent
to an~$S$-path packing.

\begin{lemma}\label{lemma:monotone-paths}
  Let~$\cP$ be an half-integral $\tau$-path packing that satisfies
  \[
    \frac{1}{2} \sum_{P \in \cP} \mu(\tau(s_P), \tau(t_P)) = \cost(\hat I).
  \]
  Let~$\lambda \in \opt(\hat I)$ be a relaxed
  optimum and let~$P \in \cP$ with endpoints~$s,t$. Then
  \[
    \cost_{\hat \mu}(\lambda, P) = \mu(\tau(s), \tau(t)).
  \]
\end{lemma}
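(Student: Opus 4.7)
The plan is to prove this by a sandwich argument: we show two opposing inequalities that sum to the same value, forcing equality on every individual path.

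First, observe that since $\mu$ is the leaf metric of $T$ and both $\tau(s_P)$ and $\tau(t_P)$ are leaves of $T$, we have $\hat\mu(\tau(s_P), \tau(t_P)) = \mu(\tau(s_P), \tau(t_P))$. Next, for any path $P \in \cP$ with endpoints $s, t$ and any relaxed extension $\lambda$, the triangle inequality for the tree metric $\hat\mu$ applied along $P$ gives
\[
  \cost_{\hat\mu}(\lambda, P) = \sum_{uv \in P} \hat\mu(\lambda(u), \lambda(v)) \geq \hat\mu(\lambda(s), \lambda(t)) = \mu(\tau(s), \tau(t)),
\]
where the last equality uses that $\lambda$ extends $\tau$ on $S$. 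Summing this lower bound over all paths in $\cP$ and using the assumption on $\cP$, I would obtain
\[
  \sum_{P \in \cP} \cost_{\hat\mu}(\lambda, P) \;\geq\; \sum_{P \in \cP} \mu(\tau(s_P),\tau(t_P)) \;=\; 2\cost(\hat I).
\]

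For the matching upper bound, I would use the half-integrality of $\cP$: every edge of $G$ is used by at most two paths in $\cP$, so summing per-edge contributions across all paths inflates the total cost of $\lambda$ by a factor of at most two. That is,
\[
  \sum_{P \in \cP} \cost_{\hat\mu}(\lambda, P) \;\leq\; 2\cost_{\hat\mu}(\lambda, G) \;=\; 2\cost(\hat I),
\]
where the final equality holds because $\lambda \in \opt(\hat I)$.

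Combining the two bounds forces equality throughout the summations. Since each term on the left-hand side is already pointwise at least the corresponding term on the right-hand side (from the triangle-inequality step), equality of the sums implies equality of each term, yielding $\cost_{\hat\mu}(\lambda, P) = \mu(\tau(s_P), \tau(t_P))$ for every $P \in \cP$. No single step here is really an obstacle; the main conceptual point is to notice that Hirai and Pap's duality gives \emph{both} the total weight equality and the half-integrality structure of $\cP$, and that these two features precisely close the sandwich.
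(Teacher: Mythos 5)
Your proof is correct and follows essentially the same sandwich argument as the paper: a per-path lower bound from the triangle inequality, a global upper bound of $2\cost(\hat I)$ from half-integrality and optimality of $\lambda$, and the hypothesis that the packing value equals $\cost(\hat I)$ to close the sandwich and force termwise equality.
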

\begin{proof}
  First, consider \emph{any} $s$-$t$-path~$P'$ in~$G$. Then
  \[
    \cost_{\hat \mu}(\lambda, P) = \sum_{uv \in P'} \hat \mu(\lambda(u), \lambda(v)) \geq \hat \mu(\tau(s), \tau(t)).
  \]
  We therefore find that the inequality
  \[
     \sum_{P \in \cP} \hat \mu(\tau(s_P), \tau(t_P)) 
     \leq \sum_{P \in \cP} \cost_{\hat \mu}(\lambda, P)
     \leq 2\cost_{\hat \mu}(\lambda, G)
  \]
  holds. But according to our assumption, the left-hand side and
  right-hand side are equal and we conclude that
  \[
    \sum_{P \in \cP} \hat \mu(\tau(s_P), \tau(t_P)) 
    = \sum_{P \in \cP} \cost_{\hat \mu}(\lambda, P).
  \]
  and therefore that for every~$P \in \cP$,
  $\cost_{\hat \mu}(\lambda, P) = \mu(\tau(s_P), \tau(t_P))$, as claimed.
\end{proof}

\noindent
A direct consequence is that if we trace an $s$-$t$-path $P \in \cP$,
then the labels assigned by any relaxed optimum~$\lambda$ to~$P$ induce
a monotone sequence from~$s$ to~$t$ in~$T$. That is,
not only will we only encounter those labels that lie on~$T[s,t]$,
we also will encounter them `in order'. We further can
conclude the following:

\begin{corollary}\label{cor:zero-edges}
  Let~$e \in G$ be an edge that is \emph{not} part of the path-packing $\cP$.
  Then under every relaxed optimum~$\lambda \in \opt(\hat I)$ the edge~$e$ has cost
  zero.
\end{corollary}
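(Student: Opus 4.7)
The plan is to extract the corollary directly from the equality chain that was established in the proof of Lemma~\ref{lemma:monotone-paths}, by rewriting the sum $\sum_{P \in \cP} \cost_{\hat\mu}(\lambda, P)$ edge-by-edge and exploiting the half-integrality of the packing.

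First, I would fix a relaxed optimum $\lambda \in \opt(\hat I)$ and, for each edge $uv \in G$, denote by $c(uv) \in \{0,1,2\}$ the number of paths in $\cP$ that use $uv$ (this is well-defined because $\cP$ is half-integral). Exchanging the order of summation gives
\[
  \sum_{P \in \cP} \cost_{\hat\mu}(\lambda, P)
  \;=\; \sum_{uv \in E} c(uv)\, \hat\mu(\lambda(u),\lambda(v)).
\]

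Next I would combine the two equalities that are already on the table. Lemma~\ref{lemma:monotone-paths} states that $\cost_{\hat\mu}(\lambda,P) = \mu(\tau(s_P),\tau(t_P))$ for every $P \in \cP$, and the assumption on $\cP$ says that summing these values gives exactly $2\cost(\hat I) = 2\cost_{\hat\mu}(\lambda,G)$. Together with the identity above this yields
\[
  \sum_{uv \in E} c(uv)\, \hat\mu(\lambda(u),\lambda(v))
  \;=\; 2 \sum_{uv \in E} \hat\mu(\lambda(u),\lambda(v)),
\]
which I would rearrange as $\sum_{uv \in E} \bigl(2 - c(uv)\bigr)\hat\mu(\lambda(u),\lambda(v)) = 0$.

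The conclusion is then immediate: since $c(uv) \le 2$ and $\hat\mu \ge 0$, every term in the last sum is non-negative, so each term must vanish. In particular, for the edge $e$ of the statement we have $c(e) = 0$, which forces $\hat\mu(\lambda(u),\lambda(v)) = 0$, i.e.\ $e$ has cost zero under $\lambda$. I do not foresee any real obstacle here; the only point to be careful about is making explicit that ``half-integral'' translates to the bound $c(uv)\le 2$ that drives the final sign argument, everything else is already supplied by Lemma~\ref{lemma:monotone-paths}.
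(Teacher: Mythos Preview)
Your proposal is correct and is exactly the argument the paper has in mind: the corollary is stated without proof as a ``direct consequence'' of Lemma~\ref{lemma:monotone-paths}, and what you have written is precisely the unpacking of the tightness of the second inequality $\sum_{P \in \cP} \cost_{\hat\mu}(\lambda,P) \le 2\cost_{\hat\mu}(\lambda,G)$ from that proof. Your edge-by-edge rewriting via the multiplicities $c(uv)\in\{0,1,2\}$ is the natural way to make this explicit, and there is nothing to add.
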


\noindent
Consider an edge~$xy \in E(T)$. Then, as a consequence of 
Lemma~\ref{lemma:monotone-paths} the set of edges 
$C_{xy}(\lambda) =\{uv \in E(G) \mid \lambda(u) \in T_x, \lambda(v) \in T_y\}$
between the vertex sets with labels in~$T_x$ and~$T_y$, respectively,
must be saturated by paths of the packing~$\cP$. For cuts right
above leafs of~$T$, this implies the following.

\begin{lemma}\label{lemma:packing-saturates-cuts}
  Let~$S$ be the vertices labelled by~$\tau$ in~$G$ and assume that
  $\tau$ is a bijection.
  Let~$C$ be any minimum $(x, S-x)$-cut for some terminal~$x \in S$. Then
  every optimal, half-integral $S$-path-packing in $G$ will saturate $C$.
\end{lemma}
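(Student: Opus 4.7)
The plan is to combine the saturation statement from the paragraph preceding this lemma with a straightforward Menger/max-flow/min-cut argument. Fix any optimal half-integral $S$-path packing $\cP$ and any relaxed optimum $\lambda \in \opt(\hat I)$, and let $y$ denote the unique tree-neighbour of the leaf $x$ in $T$. Since $T_x = \{x\}$, the cut $C_{xy}(\lambda)$ specialises to $\delta(\lambda^{-1}(x))$, and by the paragraph preceding this lemma every one of its edges is used by exactly two paths of $\cP$.

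The key step is a double-count of how the paths of $\cP$ use the boundary $\delta(\lambda^{-1}(x))$, carried out via the monotone-sequence consequence of Lemma~\ref{lemma:monotone-paths}. For a path $P \in \cP$ with an endpoint at $x$ (the collection of such paths I denote $\cP_x$), the labels along $P$ form a monotone sequence starting at $x$ in the tree path $T[x, t_P]$; monotonicity forbids any later vertex of $P$ from receiving label $x$ again, so $P$ meets $\delta(\lambda^{-1}(x))$ in exactly one edge. For a path $P$ with both endpoints in $S - x$, the tree path $T[s_P, t_P]$ avoids the leaf $x$ entirely, so no vertex of $P$ receives label $x$ and $P$ does not meet $\delta(\lambda^{-1}(x))$ at all. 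Summing these contributions against the saturation fact yields
\[
  |\cP_x| \;=\; \sum_{P \in \cP} |E(P) \cap \delta(\lambda^{-1}(x))| \;=\; 2\,|\delta(\lambda^{-1}(x))|.
\]

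Now I close with max-flow/min-cut. Since $\delta(\lambda^{-1}(x))$ separates $x$ from $S - x$, its size is at least $|C|$; on the other hand Menger in the edge-doubled graph bounds $|\cP_x| \le 2|C|$. These two inequalities sandwich the identity above into equalities, and in particular $\cP_x$ realises a half-integral $(x, S - x)$-flow of maximum value $2|C|$. By the max-flow/min-cut theorem such a flow saturates every minimum $(x, S-x)$-cut of $G$, including our chosen $C$; since $n_\cP(e) \ge n_{\cP_x}(e) = 2$ on every $e \in C$ while half-integrality gives $n_\cP(e) \le 2$, the full packing $\cP$ saturates $C$ as well.

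The main obstacle I expect is the careful bookkeeping of the monotone-path argument in the two cases: ruling out a $\cP_x$-path from ever re-entering $\lambda^{-1}(x)$ after leaving it, and ruling out an $x$-label from appearing on a path whose endpoints are both in $S - x$. Both rest on the observation that a leaf of $T$ only sits at the start or end of a tree path it lies on, and never strictly in the interior; once this is spelled out cleanly, the concluding flow argument is entirely standard.
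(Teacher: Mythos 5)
Your proposal is correct and follows essentially the same route as the paper's proof: identify the subfamily $\cP_x$ of packing paths with an endpoint at $x$, use the monotone-sequence consequence of Lemma~\ref{lemma:monotone-paths} to argue that precisely these paths cross $\delta(\lambda^{-1}(x))$ and saturate it, conclude via the max-flow/min-cut sandwich that $\cP_x$ is a maximum half-integral $(x, S-x)$-flow, and then observe that a maximum flow saturates every minimum cut, including $C$. The only difference is presentational --- you make the double-counting step explicit, whereas the paper argues more tersely via the closest and furthest min-cuts $\closest(x)$, $\furthest(x)$ --- but the key lemmas and the overall structure are identical.
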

\begin{proof}
  Let us first consider the closest minimal cut~$\closest(x)$ and the furthest
  minimal cut~$\furthest(x)$. Let $\cP$ be a max-value $\tau$-path-packing, and let
  $\lambda$ be the corresponding min-cost extension of $S$. Let $y$ be the
  ancestor of~$x$ in~$T$ and consider $C_{yx}(\lambda)$. By the above, $C_{yx}(\lambda)$
  is saturated by $\cP$. Since every path of~$\cP$ induces a monotone sequence in
  $T$ under $\lambda$, every path~$P \in \cP$ crossing~$yx$ in $T$ must
  have~$x$ as an endpoint. But the total weight of such paths~$\cP_x \subseteq
  \cP$ is at most~$\closest(x)$.  Since $C$ is a $(x, S-x)$-cut, we must have
  $C_{yx}(\lambda) = \closest(x)$. 

  Now, since~$\furthest(x)$ is a minimal $(x, S-x)$-cut as well, $\cP_x$
  saturates~$\furthest(x)$ as well. This in particular means that every path
  in~$\cP$ that intersects~$\furthest(x)$ must end in~$x$, \eg those paths are
  exactly~$\cP_x$. Consequently, every min-cut around~$x$
  is saturated by~$\cP_x$, proving the statement.
\end{proof}

\begin{lemma}\label{lemma:cost-by-cuts}
  Let~$\lambda$ be a not necessarily optimal solution for~$\hat I$. Then
  \[
    \cost_\mu(\lambda, G) = \sum_{xy \in T} |C_{xy}(\lambda)|.
  \]
\end{lemma}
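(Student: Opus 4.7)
The plan is a straightforward double-counting argument that reinterprets the sum of path-lengths in $T$ as a sum over tree edges. I read $\cost_\mu$ here as the relaxed cost $\cost_{\hat\mu}$, since $\lambda$ ranges over $V(T)$.

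First I would rewrite the edge-cost $\hat\mu(\lambda(u),\lambda(v)) = \dist_T(\lambda(u),\lambda(v))$ as the length of the unique path $T[\lambda(u),\lambda(v)]$. For any edge $xy \in E(T)$, removing $xy$ splits $T$ into the two subtrees $T_x$ and $T_y$, and the path $T[\lambda(u),\lambda(v)]$ uses $xy$ if and only if $\lambda(u)$ and $\lambda(v)$ lie on opposite sides of this split, \ie (up to swapping the roles of $x,y$) precisely when $uv \in C_{xy}(\lambda)$. Consequently
\[
  \hat\mu(\lambda(u),\lambda(v)) \;=\; \bigl|\{xy \in E(T) : uv \in C_{xy}(\lambda)\}\bigr|.
\]

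Summing this identity over all $uv \in E(G)$ gives
\[
  \cost_{\hat\mu}(\lambda,G) \;=\; \sum_{uv \in E(G)} \bigl|\{xy \in E(T) : uv \in C_{xy}(\lambda)\}\bigr|,
\]
and swapping the order of summation (a routine exchange of two finite sums) yields $\sum_{xy \in E(T)} |C_{xy}(\lambda)|$, which is the claimed identity.

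There is no real obstacle here; the only thing to be careful about is the orientation convention in the definition of $C_{xy}(\lambda)$ (the sets $T_x, T_y$ come from deleting $xy$ from $T$, so the two halves of the cut for an edge $xy \in E(T)$ are well-defined and every crossing edge of $G$ is counted exactly once). No use of optimality, of the packing $\cP$, or of the triangle inequality is needed.
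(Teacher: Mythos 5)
Your proof is correct and matches the paper's argument: both reinterpret $\hat\mu(\lambda(u),\lambda(v))$ as the number of tree edges on $T[\lambda(u),\lambda(v)]$, observe that these are exactly the edges $xy$ with $uv \in C_{xy}(\lambda)$, and conclude by exchanging the order of summation.
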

\begin{proof}
  The claim is equivalent to proving that 
  \[
    \sum_{uv \in G} \mu(\lambda(u), \lambda(v)) = \sum_{xy \in T} |C_{xy}(\lambda)|.
  \]
  We proof the above equality by double-counting. Consider an edge~$uv$, then
  $\mu(\lambda(u), \lambda(v))$ is by definition~$|T[\lambda(u), \lambda(v)]|$.
  Note that~$uv$ appears in exactly those cuts~$C_{xy}$ with~$xy \in T[\lambda(u),\lambda(v)]$,
  hence we can charge the cost of~$\mu(\lambda(u), \lambda(v))$ on those cuts
  and the equation follows.
\end{proof}

\begin{lemma}\label{lemma:no-cut-branch}
  Let~$u \in G$ be such that~$u \not \in \furthest(x)$ for all terminals~$x
  \in S$. Then~$u$ does not receive any integral value by any relaxed optimum.
\end{lemma}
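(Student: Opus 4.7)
The plan is to argue by contradiction. Assume some $\lambda \in \opt(\hat I)$ assigns $\lambda(u) = x$ for a leaf $x \in D$. Write $V_x = \lambda^{-1}(x)$, let $W$ denote the vertex set reachable from $x$ in $G - \furthest(x)$ (i.e.\ the $x$-side of the furthest min $(x, S-x)$-cut), and let $y$ be the unique neighbour of $x$ in $T$. By hypothesis $u \in V_x \setminus W$, so $V_x \not\subseteq W$; and since distinct terminals receive distinct labels, the only terminal in $V_x$ is $x$ itself, which clearly lies in $V_x \cap W$. I will exhibit a $\tau$-extension $\lambda'$ with $\cost_{\hat\mu}(\lambda', G) < \cost_{\hat\mu}(\lambda, G)$, contradicting optimality.

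Define $\lambda'$ by relabelling every vertex of $V_x \setminus W$ from $x$ to $y$ and leaving all other labels unchanged; this is clearly a valid $\tau$-extension. Partition $V(G)$ into $A = V_x \cap W$, $B = W \setminus V_x$, $C = V_x \setminus W$, and $D' = V \setminus (V_x \cup W)$. Using that $\hat\mu = \dist_T$, that $x$ is a leaf of $T$ with neighbour $y$, and consequently that $\hat\mu(x, \ell) = \hat\mu(y, \ell) + 1$ for every label $\ell \neq x$, a straight case check on the possible edge types reveals that only edges in $E(A, C)$, $E(B, C)$, and $E(C, D')$ experience a change in cost, contributing $+1$, $-1$, and $-1$ respectively. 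All remaining edge classes either have both endpoints with unchanged labels, or have both endpoints in $V_x$ still receiving a common label (now $y$ instead of $x$) of cost $0$. Therefore
\[
  \cost_{\hat\mu}(\lambda', G) - \cost_{\hat\mu}(\lambda, G) = |E(A, C)| - |E(B, C)| - |E(C, D')|.
\]

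The decisive step is to show this quantity is strictly negative, which comes from the maximality of $W$. Since $V_x \cup W$ strictly contains $W$, still contains $x$, and is still disjoint from $S - x$, the defining property of the furthest $x$-side forces $|\delta(V_x \cup W)| > |\delta(W)|$. Expanding both cut sizes in terms of the partition $\{A, B, C, D'\}$ (writing each as a sum of edge counts between the four blocks), the inequality collapses to exactly $|E(C, D')| > |E(A, C)| + |E(B, C)|$, which is precisely what is needed.

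The main obstacle is spotting the right modification: one needs to realise that trimming $V_x$ down to $V_x \cap W$ and nudging the lost vertices \emph{one step up} the tree to the neighbour $y$ is the correct adjustment, after which submodularity-style bookkeeping via the partition $\{A,B,C,D'\}$ lines up the algebra. The argument uses only the maximality of $\furthest(x)$ and does not appeal to the half-integral packing machinery developed earlier in the section. The fact that $x$ is a leaf is essential: the identity $\hat\mu(x, \ell) = \hat\mu(y, \ell) + 1$ fails for non-leaf labels, and indeed the conclusion of the lemma — that $u$ cannot receive an \emph{integral} (i.e.\ leaf) value — would not otherwise hold.
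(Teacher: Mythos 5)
Your proof is correct, and it takes a genuinely different route from the paper's. The paper argues non-constructively: it invokes Lemma~\ref{lemma:cost-by-cuts} to write $\cost_{\hat\mu}(\lambda,G) = \sum_{xy \in T}|C_{xy}(\lambda)|$, observes that $C_{xy}(\lambda) = \delta(\lambda^{-1}(x))$ cannot be a minimum $(x, S-x)$-cut (its $x$-side contains $u$ and therefore is not contained in the furthest min-cut's side), and concludes that $\lambda$ misses the known optimum value $\sum_{ij}|f_{ij}|$. This implicitly leans on the duality machinery showing that the sum of per-tree-edge min-cuts is actually attained. Your argument sidesteps all of that: you exhibit a single explicit improving move — push $V_x\setminus W$ one step up the tree to $y$ — and prove it strictly decreases the cost using only the maximality of the furthest $x$-side and submodularity-style bookkeeping over the four-block partition $\{A,B,C,D'\}$. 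The identity $\hat\mu(x,\ell) = \hat\mu(y,\ell)+1$ for $\ell\neq x$ (which holds because $x$ is a leaf) is the right observation that makes the per-edge accounting work. What your approach buys is a self-contained, constructive proof that does not rely on Lemma~\ref{lemma:cost-by-cuts} nor on the path-packing duality; what the paper's approach buys is brevity, since that infrastructure has already been built. Both are sound; I checked the edge-class cost accounting and the expansion of $|\delta(V_x\cup W)| > |\delta(W)|$ into $|E(C,D')| > |E(A,C)|+|E(B,C)|$, and the final inequality $|E(A,C)|-|E(B,C)|-|E(C,D')|<0$ does follow.
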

\begin{proof}
  Assume otherwise: let~$\lambda \in \opt(\hat I)$ be a relaxed optimum that assigns~$u$
  some integral value~$x \in T$. Let~$y$ be the parent of~$x$ in~$T$. Then we conclude
  that~$C_{xy}(\lambda)$ cannot be a minimum cut, since~$u$ is part of~$\lambda^{-1}(x)$,
  but~$u$ is not in the furthest min-cut~$\furthest(x)$. But then~$\sum_{xy \in T} |C_{xy}(\lambda)|$
  cannot be minimal, and by Lemma~\ref{lemma:cost-by-cuts} therefore~$\cost_\mu(\lambda, G)$
  is not either, contradicting our assumption.
\end{proof}

\begin{lemma}\label{lemma:three-cuts-empty}
  For every triple~$x,y,z \in S$ of distinct terminals it holds that
  $\furthest(x) \cap \furthest(y) \cap \furthest(z) = \emptyset$.
\end{lemma}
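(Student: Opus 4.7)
The plan is proof by contradiction: suppose $v \in \furthest(x) \cap \furthest(y) \cap \furthest(z)$. For brevity, write $V_w := \furthest(w)$ and $f_w := |\delta(V_w)|$ for $w \in \{x,y,z\}$, recalling that $w \in V_w$ while $V_w \cap (S-w) = \emptyset$. Set $A_w := V_w \setminus (V_{w'} \cup V_{w''})$ with $\{w',w''\} = \{x,y,z\} \setminus \{w\}$. Since the other two $V$'s exclude $w$, we still have $w \in A_w$; and since $A_w \subseteq V_w$, we have $A_w \cap S = \{w\}$. So each $A_w$ is an isolating $(w, S-w)$-cut, giving the pointwise bound $|\delta(A_w)| \geq f_w$.

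The key step is the reverse aggregate bound
\[
  \sum_{w \in \{x,y,z\}} |\delta(A_w)| \;\leq\; \sum_{w \in \{x,y,z\}} |\delta(V_w)|,
\]
which I would prove by edge-by-edge double counting. Associate to each vertex $a$ the type set $I(a) := \{w \in \{x,y,z\} : a \in V_w\}$; then edge $ab$ contributes $|I(a) \symdiff I(b)|$ to the right-hand side. Because $a \in A_w$ requires $I(a) = \{w\}$, a short case check on $(|I(a)|,|I(b)|) \in \{0,1,2,3\}^2$ shows that the contribution of $ab$ to the left-hand side is always at most $|I(a) \symdiff I(b)|$.

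Combining with the pointwise lower bounds, equality must hold throughout, and therefore also edge-wise: $|\{w : ab \in \delta(A_w)\}| = |I(a) \symdiff I(b)|$ for every edge $ab$. Applying this to any edge incident to $v$, for which $I(v) = \{x,y,z\}$, the same case analysis rules out every choice for the other endpoint except $I(b) = \{x,y,z\}$. Propagating through the connected component of $v$ in $G$, every vertex reachable from $v$ lies in $V_x \cap V_y \cap V_z$; but $V_x$ is connected to $x$ in $G$ (it is the $x$-side of an edge cut), so $x$ lies in the component of $v$, forcing $x \in V_y$ and contradicting $V_y \cap (S-y) = \emptyset$. The main obstacle will be the routine but slightly fiddly case analysis establishing the edge-wise inequality; everything else is standard submodular-style aggregation plus a connectivity sweep.
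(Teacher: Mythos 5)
Your proof is correct and takes a genuinely different route from the paper's. The paper's argument relies on Lemma~\ref{lemma:packing-saturates-cuts}: a vertex $u$ in the triple intersection incident to an edge crossing $\furthest(x)$ must lie on a packing path ending at $x$, yet by saturation of $\furthest(y)$ and $\furthest(z)$ the same path must also end at $y$ and at $z$ --- impossible for a two-endpoint path. That argument is shorter but is tethered to the tree-metric path-packing duality used to establish Lemma~\ref{lemma:packing-saturates-cuts}. Your argument is a self-contained uncrossing argument about furthest minimum isolating cuts in any connected graph: the aggregate inequality $\sum_w |\delta(A_w)| \le \sum_w |\delta(V_w)|$ via edge-by-edge double counting on $|I(a) \symdiff I(b)|$, combined with the pointwise bound $|\delta(A_w)| \ge f_w$, forces edge-wise equality, which then propagates $I(\cdot) = \{x,y,z\}$ through every edge incident to the component of $v$ and pulls $x$ into $V_y$, a contradiction. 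This shows the lemma is a purely combinatorial fact about minimum isolating cuts, independent of the metric --- a strictly more general observation. One detail to tighten: $G[V_x]$ is connected not merely because $V_x$ is the $x$-side of an edge cut, but because it is the $x$-side of a \emph{minimum} cut in a connected graph (any component of $G[V_x]$ not containing $x$ could be discarded, strictly decreasing $|\delta(\cdot)|$); and you implicitly use $v \neq x$ (which holds since $x \in S - y$ while $v \in V_y$) to guarantee $v$ has an incident edge to start the propagation.
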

\begin{proof}
  Assume towards a contradiction that there exists~$u \in \furthest(x) \cap
  \furthest(y) \cap \furthest(z)$ and let us choose~$u$ such that it is incident
  to edges that cross the cut~$\furthest(x)$.

   Let~$\cP$ be an optimal half-integral $S$-path packing. By Lemma~\ref{lemma:packing-saturates-cuts}, 
   all paths of~$\cP$ that enter~$\furthest(c)$ for~$c
  \in S$ must have an endpoint in~$c$; moreover, they saturated the cut~$\furthest(c)$. 
  Since~$x$ is incident to edges that cross the cut~$\furthest(x)$ it must therefore lie
  on at least one path~$P$ that ends in~$x$. Now, since~$u$ also lies in~$\furthest(y)$,
  the path~$P$ crosses~$\furthest(y)$ and must therefore end in~$y$. However, the same
  argument holds for~$z$ and we arrive at a contradiction. We conclude that the
  intersection of the three cuts must indeed be empty. 
\end{proof}

\thmleaves*
\begin{proof}
  Given the input graph~$G$ we first construct for every edge~$ij \in T$ a
  flow network~$H_{ij}$ from~$G$ as follows: let~$D_i$ be those leaves that
  lie in the same component as~$i$ in~$T-ij$ and~$D_j$ all others.
  Then~$H_{ij}$ is obtained from~$G$ by identifying all
  terminals~$\tau^{-1}(D_i)$ into a source~$s$ and all terminals~$\tau^{-1}(D_j)$ into a
  sink~$t$. For each networks~$H_{ij}$ we compute a maximum flow~$f_{ij}$ in
  time~$\phi(n,m)$. By Lemma~\ref{lemma:cost-by-cuts} we have that for
  every~$\lambda \in \opt(\hat I)$ it holds that
  \[
    \sum_{xy \in T} |f_{ij}| = \sum_{xy \in T} |C_{xy}(\lambda)| = \cost_\mu(\hat I).
  \]
  Note that we can also, in linear time, find the closest cuts~$\closest(x)$ and
  furthest cuts~$\furthest(x)$ for terminals~$x \in S$ using the 
  residual network of~$(H_{ij}, f_{ij})$ with~$i = \tau(x)$ and~$j$ the parent
  of~$i$ in~$T$.

  Thus, in linear time, we can identify whether~$G$ contains a vertex that is
  not part of any furthest min-cut~$\furthest(x)$ for all~$x \in S$. By
  Lemma~\ref{lemma:no-cut-branch}, such a vertex cannot take an integral value
  in any relaxed optimum. We therefore branch on the~$|D|$ possible integral
  values it could take: for~$x \in S$ with~$\tau(x) \in D$ being the chosen
  integral value, we update the flow networks~$(H_{ij},f_{ij})$ by adding an
  edge of infinite capacity from~$x$ to~$u$ and then augment the flow~$f_{ij}$
  until it is maximum again. Note that the number of augmentations necessary
  are at most $k-p$, since each augmentation witnesses the increase
  of~$p$ and thus the decrease of the parameter by one. In our analysis we
  can therefore charge each augmentation to a level of the search tree
  (treating each augmentation like a descent to the next node) and thus spend
  only~$O(m)$ time per flow~$f_{ij}$, for a total of~$O(|T|m)$.

  Otherwise, we find that every vertex of the current graph~$G$ is contained
  in at least one furthest min-cut. By Lemma~\ref{lemma:three-cuts-empty}, the
  intersection of three or more such cuts is empty and we can partition the
  vertices of~$G$ into sets~$\{V_x\}_{x \in S} \cup \{U_{xy}\}_{x,y \in S}$
  where~$V_x$ contains all vertices that are only contained in~$\furthest(x)$
  while~$U_{xy}$ contains those that live in the intersection~$\furthest(x) \cap
  \furthest(y)$. By Lemma~\ref{lemma:packing-saturates-cuts} we have
  that~$U_{xy}$ only contains edges towards~$V_x$ and~$V_y$ (since these edges
  are exactly saturated by a half-integral path-packing and the paths saturating
  these edges have endpoints $x$ and~$y$). Therefore we construct an integral
  solution~$\lambda$ as follows: every set~$V_x$, $x \in S$ is coloured~$\tau(x)$
  and for every non-empty set~$U_{xy}$, $x,y \in S$ we choose colour~$\tau(x)$
  or~$\tau(y)$ arbitrarily. By Lemma~\ref{lemma:cost-by-cuts}, the cost of~$\lambda$
  is precisely
  \[
    \cost_\mu(\lambda, G) = \sum_{xy \in T} |C_{xy}(\lambda)| = \sum_{xy \in T} |f_{ij}| = \opt(\hat I)
  \]
  and we conclude that~$\lambda$ is an integral solution that matches the
  relaxed optimum of the current instance. In this case, we return~$\lambda$ 
  as a solution to the original instance.
  The claimed running time follows if we prune every branch of the
  search tree in which the parameter drops below zero.
\end{proof}

\subsection{VCSP toolkit}

\noindent
We now review some required terminology and tools for the proof of the
algebraic properties of distance problems on trees. 

Given a set of cost functions $\Gamma$ over a domain $D$,
an instance $I$ of VCSP$(\Gamma)$ is defined by a set of variables $V$
and a sum of valued constraints $f_i(\bar v_i)$,
where for each $i$, $f_i \in \Gamma$ and $\bar v_i$ is a tuple
of variables over $V$. We write $f_i(\bar v)\in I$ to signify
that $f_i(\bar v)$ is a valued constraint in $I$. 

It is known that the tractability of a VCSP is characterized by
certain algebraic properties of the set of cost functions. 
In full generality, such conditions are known as \emph{fractional
  polymorphisms} for the finite-valued case and more general
\emph{weighted polymorphisms} in the general-valued case.
Dichotomies are known in these terms both for the
finite-valued~\cite{ThapperZivnyCSP} and general case of VCSP~\cite{KolmogorovKR17},
i.e., characterizations of each VCSP as being either in P or NP-hard.
We will only need a less general term. 

A \emph{binary multimorphism}~$\angled{\circ, \bullet}$ of a language~$\Gamma$
over a domain~$D$ is a pair of binary operators that satisfy
\[
   f(\bar x) + f(\bar y) \geq 
   f(\bar x \circ \bar y) + f(\bar x \bullet \bar y)
   \qquad \forall f \in \Gamma, \bar x, \bar y \in D^{\ar(f)},
\]
where $\ar(f)$ is the arity of $f$ and 
where we extend the binary operators to vectors by applying them
coordinate-wise. 
An operator $\circ$ is \emph{idempotent} if $x \circ x = x$
for every $x \in D$, and \emph{commutative} if $x \circ y = y \circ x$. 
A (finite, finite-valued) language $\Gamma$ with a binary
multimorphism where both operators are idempotent and commutative is
solvable in polynomial time via an LP-relaxation~\cite{ThapperZivnyCSP}.
The most basic example is the Boolean domain $D=\{0,1\}$,
in which case the multimorphism $\angled{\land, \lor}$ corresponds to
the well-known class of \emph{submodular functions}, which is a
tractable class that generalizes cut functions in graphs. 


The following is folklore, but will be important to our investigations. 
Again, the corresponding statements apply for arbitrary fractional 
polymorphisms, but we only give the version we need in the present paper. 


\begin{definition}[Preserved under equality]
  Let $f$ be a function that admits a multimorphism~$\angled{\circ, \bullet}$.
  We say that two tuples~$\bar x, \bar y \in D^{\ar(f)}$ are 
  \emph{preserved under equality} if
  \[
  f(\bar x) + f(\bar y) = f(\bar x \circ \bar y) + f(\bar x \bullet \bar y).
  \]
  For a relation $R \subseteq D^{\ar(r)}$,
  we say that $f$ is \emph{preserved under equality in $R$}
  if every pair of tuples $\bar x, \bar y \in R$ is preserved under equality and 
  $\bar x \circ \bar y, \bar x \bullet \bar y \in R$.
\end{definition}


\begin{lemma}\label{lemma:opt-local}
  Let~$\Gamma$ be a language of cost functions that admit a
  multimorphism~$\angled{\circ, \bullet}$ and let~$\lambda_1, \lambda_2 \in \opt(I)$ 
  for some instance~$I$ of \emph{VSCP}$(\Gamma)$. Then for every
  valued constraint~$f(\bar v) \in I$ it holds that
  \begin{align*}
    f(\lambda_1(\bar v)) + f(\lambda_2(\bar v))
    = f((\lambda_1 \circ \lambda_2)(\bar v))
    + f((\lambda_1 \bullet \lambda_2)(\bar v)),
  \end{align*}
  where $f(\lambda(\bar v))=f(\lambda(v_1), \ldots, \lambda(v_r))$ for
  $\bar v=v_1, \ldots, v_r$ is the value of $f(\bar v)$ under $\lambda$.
  In other words, every valued constraint~$f(\bar v) \in I$ is
  preserved under equality in $\opt(I)$.
\end{lemma}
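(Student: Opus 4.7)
The plan is to prove this by a straightforward averaging/tightness argument: a chain of inequalities collapses to equalities because both endpoints of the chain coincide.

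First I would observe that the definition of a binary multimorphism gives, for every valued constraint $f(\bar v) \in I$, the pointwise inequality
\[
f(\lambda_1(\bar v)) + f(\lambda_2(\bar v)) \;\geq\; f((\lambda_1 \circ \lambda_2)(\bar v)) + f((\lambda_1 \bullet \lambda_2)(\bar v)).
\]
So the claimed equality holds for \emph{each} constraint if and only if the inequality is tight there.

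Next I would sum this inequality over all valued constraints of $I$. Writing $\cost(I, \lambda)$ for $\sum_{f(\bar v)\in I} f(\lambda(\bar v))$, this yields
\[
\cost(I, \lambda_1) + \cost(I, \lambda_2) \;\geq\; \cost(I, \lambda_1 \circ \lambda_2) + \cost(I, \lambda_1 \bullet \lambda_2).
\]
By assumption $\lambda_1, \lambda_2 \in \opt(I)$, so the left-hand side equals $2\cost(I, \opt)$. On the right-hand side, each of $\lambda_1 \circ \lambda_2$ and $\lambda_1 \bullet \lambda_2$ is an assignment $V \to D$, hence has cost at least $\cost(I, \opt)$. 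This forces the right-hand side to be at least $2\cost(I, \opt)$, so both sides are equal, and moreover $\lambda_1 \circ \lambda_2, \lambda_1 \bullet \lambda_2 \in \opt(I)$ as a bonus.

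Finally, since the summed inequality is an equality but every summand is individually a valid $\geq$ inequality, each summand must hold with equality. In other words, for every $f(\bar v) \in I$,
\[
f(\lambda_1(\bar v)) + f(\lambda_2(\bar v)) = f((\lambda_1 \circ \lambda_2)(\bar v)) + f((\lambda_1 \bullet \lambda_2)(\bar v)),
\]
which is exactly the claim. There is really no obstacle here; the only thing to be careful about is to spell out that $\lambda_1 \circ \lambda_2$ and $\lambda_1 \bullet \lambda_2$ are themselves assignments to which the VCSP minimum $\cost(I, \opt)$ applies, which is immediate since $\circ$ and $\bullet$ are operators $D \times D \to D$ extended coordinate-wise.
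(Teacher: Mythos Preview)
Your proof is correct and follows essentially the same route as the paper's own argument: sum the multimorphism inequality over all constraints, use optimality of $\lambda_1,\lambda_2$ to force the global inequality to an equality, and then conclude that each non-negative summand must vanish. The only cosmetic difference is that the paper introduces an explicit symbol $\Delta_f(\bar v)$ for the per-constraint slack before observing $\sum \Delta_f(\bar v)=0$.
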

\begin{proof}
  Let~$\cost_f(\lambda)$ be the sum of all valued constraints~$f(\bar v) \in I$ under~$\lambda$.
  By the multimorphism, we have that
  \[
    \cost_{f}(\lambda_1) + \cost_{f}(\lambda_2) \geq 
    \cost_{f}(\lambda_1 \circ \lambda_2) + \cost_{f}(\lambda_1 \bullet \lambda_2)
  \]
  and since~$\lambda_1$ and~$\lambda_2$ are optimal we obtain that
  \[
    \cost_{f}(\lambda_1) + \cost_{f}(\lambda_2) = 2\opt(\hat I) =
    \cost_{f}(\lambda_1 \circ \lambda_2) + \cost_{f}(\lambda_1 \bullet \lambda_2)
  \]
  and therefore that
  $
    \cost_{f}(\lambda_1 \circ \lambda_2) = \cost_{f}(\lambda_1 \bullet \lambda_2) = \opt(\hat I)
  $.
  For two variables~$u,v$ that appear together in a valued constraint~$f$ let us define
  \[
    \Delta_{f}(\bar v) := f(\lambda_1(\bar v)) + f(\lambda_2(\bar v))
    - f( (\lambda_1 \circ \lambda_2)(\bar v))
    - f((\lambda_1 \bullet \lambda_2)(\bar v)),
  \]
  then by the multimorphism property it follows that~$\Delta_{f}(\bar v) \geq 0$. 
  Since, by definition,
  \begin{align*}
      \cost_{f}(\lambda_1) + \cost_{f}(\lambda_2)
    - \cost_{f}(\lambda_1 \circ \lambda_2) - \cost_{f}(\lambda_1 \bullet \lambda_2) 
    = \sum_{f(\bar v) \in I} \Delta_f(\bar v)
  \end{align*}  
  and the left-hand side evaluates to zero, we conclude that~$\sum_{uv \in G} \Delta_{f}(u,v) = 0$
  and therefore that~$\Delta_{f}(u,v) = 0$ for every constraint~$f(u,v) \in I$.
\end{proof}

\noindent
To illustrate, let us return again to the case of graph cut functions
and submodularity over the Boolean domain. Let $G=(V,E)$ be an
undirected graph, and define the cut function 
$f_G \colon 2^V \to \Z$ as $f_G(S)=|\delta(S)|$. 
Then $f_G$ is the sum over binary valued constraints
$f(u,v)=[u \neq v]$ over all edges $uv \in E$, in Iverson bracket notation.
Since a single valued constraint $f(u,v)$ is submodular, the same
holds for the cut function as a whole. 
Then Lemma~\ref{lemma:opt-local} specialises into the statement 
that for two sets $A, B \subset V$ such that $\delta(A), \delta(B)$
are minimum $s$-$t$-cuts in $G$ for some $s, t \in V$, 
there is no edge between $A \setminus B$ and $B \setminus A$. 
This kind of observation is a common tool in, e.g., graph theory 
and approximation algorithms.

The above lemma will be very useful when reasoning about the structure
of $\opt(I)$ subject to more complex multimorphisms, as we will define
next. 

\subsection{Submodularity on trees}

\noindent
Let~$\preceq_T$ denote the ancestor relationship in a rooted tree~$T$.
For a path~$P[x,y] \subseteq T$, let~$z_1,z_2$ be the middle vertices of~$P[x,y]$
(allowing~$z_1 = z_2$ in case~$P[x,y]$ has odd length) such that~$z_1 \preceq_T
z_2$. Define the commutative operators~$\midup, \middown$
as returning exactly those two mid vertices, \eg 
$x \midup y = y \midup x = z_1$ and
$x \middown y = y \middown x = z_2$. Languages admitting the 
multimorphism $\angled{\midup, \middown}$ are called 
\emph{strongly tree-submodular}.

Define the commutative operator~$\lca$ to return the common ancestor of two
nodes~$x,y$ in a rooted tree~$T$. Define~$x \lcaskew y$ to be the vertex~$z$
on~$P[x,y]$ which satisfies~$\dist_T(x,z) = \dist_T(y, x \lca y)$. In other
words, to find~$z = x \lcaskew y$, we measure the distance from~$y$ to the
common ancestor of~$x$ and~$y$ and walk the same distance from~$x$
along~$P[x,y]$. Languages that admit~$\angled{\lca,\lcaskew}$ as a
multimorphism are called \emph{weakly tree-submodular}. In particular, all
strongly tree-submodular languages are weakly tree-submodular~\cite{TreeSubmodular}.
Tree-metric are, not very surprisingly, strongly tree-submodular:

\begin{lemma}\label{lemma:tm-strong}
  Every tree-metric is strongly tree-submodular for every rooted
  version of the tree.
\end{lemma}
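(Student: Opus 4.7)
The plan is to show, for arbitrary $x_1, x_2, y_1, y_2 \in V(T)$, the binary multimorphism inequality
\[
  d_T(x_1, y_1) + d_T(x_2, y_2) \ge d_T(m_\uparrow, n_\uparrow) + d_T(m_\downarrow, n_\downarrow),
\]
where $m_\uparrow := x_1 \midup x_2$, $m_\downarrow := x_1 \middown x_2$ are the midpoints of $P_x := P[x_1, x_2]$, and $n_\uparrow, n_\downarrow$ are defined analogously on $P_y := P[y_1, y_2]$. The proof splits naturally into two cases depending on whether $P_x$ and $P_y$ meet in $T$.

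If the two paths are vertex-disjoint, there is a unique bridge of length $b \geq 0$ connecting some $\alpha \in P_x$ to some $\beta \in P_y$, and every distance $d_T(u, v)$ with $u \in P_x$, $v \in P_y$ equals $d_{P_x}(u, \alpha) + b + d_{P_y}(\beta, v)$. Substituting this into the claim, the $2b$ bridge contributions on both sides cancel, reducing the inequality to a sum of two independent one-dimensional statements of the form $d_P(x_1, \alpha) + d_P(x_2, \alpha) \ge d_P(m_\uparrow, \alpha) + d_P(m_\downarrow, \alpha)$. Parametrising $P_x$ as the interval $[0, L_x]$ with $x_1$ at $0$ and $x_2$ at $L_x$, the left-hand side equals $L_x$ for any $\alpha$ in the interval, while the right-hand side is $|p_\uparrow - a| + |p_\downarrow - a|$ with $p_\uparrow + p_\downarrow = L_x$ and $|p_\uparrow - p_\downarrow| \le 1$. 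This quantity is easily checked to be at most $L_x$ on $[0, L_x]$, so this case is immediate.

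If $P_x \cap P_y$ is a subpath $J$ of length $L_Q \ge 0$ with endpoints $\alpha, \gamma$, I would project the $y$-side vertices onto $P_x$: for any $v$, let $\pi(v) \in P_x$ denote the closest point of $P_x$ to $v$, so that $d_T(u, v) = d_{P_x}(u, \pi(v)) + \delta(v)$ for every $u \in P_x$, where $\delta(v) := d_T(\pi(v), v)$. By tree geometry $\pi(P_y) = J$, and viewed as a function of position along $P_y$, the off-path distance $\delta$ is convex and piecewise-linear---equal to the distance from the position to the sub-interval of $P_y$ that coincides with $J$. Since $n_\uparrow, n_\downarrow$ are midpoints of $P_y$ with positions summing to $L_y$ and differing by at most one, a standard convexity argument yields $\delta(n_\uparrow) + \delta(n_\downarrow) \le \delta(y_1) + \delta(y_2)$, which handles the ``vertical'' contribution to the inequality.

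The remaining ``horizontal'' part lives purely on $P_x$ and reduces to a relation among absolute values involving $p_\uparrow + p_\downarrow = L_x$ and the projections $\pi(y_1), \pi(y_2), \pi(n_\uparrow), \pi(n_\downarrow) \in J$. The main obstacle is that the horizontal and vertical contributions are not independently nonnegative---in tight configurations a negative horizontal excess is precisely compensated by positive vertical slack, so the two parts must be combined rather than bounded in isolation. I would resolve this by case-analysing on where each of $n_\uparrow, n_\downarrow$ projects (the interior of $J$, the endpoint $\alpha$, or the endpoint $\gamma$) and symmetrically for $m_\uparrow, m_\downarrow$; each resulting sub-case reduces to a concrete linear inequality in the parameters $L_x, L_y, L_Q$ and the midpoint positions, straightforwardly verified using the symmetric-midpoint constraints.
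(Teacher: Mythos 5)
Your Case 1 ($P_x$ and $P_y$ vertex-disjoint) is correct and essentially matches the paper's treatment of the analogous situation (its Sub-case 1a): the bridge contribution cancels, and the midpoint inequality on a single path is immediate. This part is fine.

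Case 2 ($P_x \cap P_y \neq \emptyset$) is where the proposal stops short of a proof. You correctly identify the central difficulty — the ``horizontal'' excess on $P_x$ and the ``vertical'' (off-path) excess can have opposite signs, so the convexity bound $\delta(n_\uparrow)+\delta(n_\downarrow) \le \delta(y_1)+\delta(y_2)$ cannot be combined naively with a separate horizontal bound, and in tight configurations the two exactly cancel. That diagnosis is genuinely right. But the proposal then defers the actual argument to a ``case analysis on where each of $n_\uparrow, n_\downarrow, m_\uparrow, m_\downarrow$ projects'' that is never carried out. That case analysis \emph{is} the proof; as written it is a plan, not an argument, and its feasibility is not established. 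There are several un-addressed subtleties: (i) the number of configurations is large even after symmetry, and the horizontal term involves absolute values whose sign pattern depends not only on $L_x, L_y, L_Q$ but on the offset of $J$ within each path and on the relative orientation of $P_x$ and $P_y$ along $J$; (ii) the pairing of $m_\uparrow$ with $n_\uparrow$ and $m_\downarrow$ with $n_\downarrow$ is dictated by the choice of root, and the root never appears in your Case 2 reasoning, even though the horizontal term $d_{P_x}(m_\uparrow, \pi(n_\uparrow)) + d_{P_x}(m_\downarrow, \pi(n_\downarrow))$ is sensitive to which projection gets matched with which midpoint. (In Case 1 this is harmless because the two terms are summed symmetrically after the bridge cancels; in Case 2 it is not obviously harmless.)

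For comparison, the paper splits on whether the \emph{cost-side} paths $P[x_1,y_1]$ and $P[x_2,y_2]$ intersect rather than on the \emph{argument-side} paths $P_x, P_y$, handles the intersecting case by a direct length accounting (plus a symmetric swap of endpoint pairing for the remaining sub-case), and handles the disjoint case by replacing the integral midpoints with a single half-integral midpoint per path and then inducting on inserting one edge at a time into the tree. Your Case 2 covers exactly the territory of the paper's Sub-case 1b and Case 2, and your projection-plus-convexity idea is an interesting alternative route, but as submitted it is not a complete proof of that territory.
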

\begin{proof}
  Let~$T$ be a rooted tree and let~$a,b,x,y \in T$ not necessarily distinct nodes. 
  We let~$\dist$ be the distance-metric on~$T$. We need to show that
  \begin{align}
    \dist(a,b) + \dist(x,y) \geq \dist(a \midup x, b \midup y) + \dist(a \middown x, b \middown y). \label{eq:treesub1}
  \end{align}
  First, consider the case that~$P[a,b] \cap P[x,y] = P[m_1,m_2]$ is a
  non-empty path. Assume that~$P[a,x]$ and~$P[b,y]$ are disjoint, Then the 
  left-hand side of (\ref{eq:treesub1}) is equal to
  \begin{align}
    \dist(a,b) + \dist(x,y) = |P[a,b]| + |P[x,y]| = 2|P[m_1,m_2]| + |P[a,x]| + |P[b,y]|. \label{eq:treesub2}
  \end{align}
  Since the nodes~$a \midup x$, $a \middown x$ both lie on~$P[a,x]$ and the
  nodes $b \midup y$, $b \middown y$ on~$P[a,y]$, the right-hand side of
  (\ref{eq:treesub1}) cannot be larger than the right-hand side of
  (\ref{eq:treesub2}), thus (\ref{eq:treesub1}) holds in this case.
  In the alternative case where~$P[a,x]$ and~$P[b,y]$ are non-disjoint, we
  instead use the paths~$P[a,y]$, $P[b,x]$. In this case, the 
  nodes~$a \midup x$, $a \middown x$, $b \midup y$, $b \middown y$ 
  could now also lie on~$P[m_1,m_2]$ but the argument remains the same.

  Thus consider the second case: $P[a,b]$ and~$P[x,y]$ do not intersect.
  Let now~$P[c,z]$ be the unique path connecting~$P[a,b]$ and~$P[x,y]$
  with~$c \in P[a,b]$ and~$z \in P[x,y]$. First, we simplify our lives 
  by observing that the right-hand side of (\ref{eq:treesub1}) can be
  replaced using
  \[
    \dist(a \midup x, b \midup y) + \dist(a \middown x, b \middown y) = 2\dist(m_{ax}, m_{by})
  \]
  where we allow the mid-points~$m_{ax}$ of~$P[a,x]$ and $m_{by}$ of~$P[b,y]$
  to lie in the middle of an edge (by some abuse of notation we
  extend~$\dist$ to such mid-points of edges and allow it to take 
  half-integral values). Consider the following re-writing of (\ref{eq:treesub1}):
  \begin{align}
    \dist(a,b) + \dist(x,y) \geq 2\dist(m_{ax}, m_{by}). \label{eq:treesub3}
  \end{align}
  Clearly, it holds in the degenerate case of~$a = b = c$ and~$x = y = z$.
  We prove the remainder by induction through the insertion of an arbitrary edge.
  First, assume that an edge is inserted into~$P[a,c]$. This increases the left-hand
  side of (\ref{eq:treesub3}) by one and moves~$m_{ax}$ by half a unit, thus at most
  increasing the right-hand by one as well. The same holds, by symmetry, for any
  edge inserted into~$P[b,c]$, $P[x,z]$, and~$P[y,z]$. It remains to consider edges
  inserted into~$P[c,z]$ whose addition does not contribute to the left-hand side.
  If such an edge additionally lies on the path between~$m_{ax}$ and~$m_{by}$, the distance
  between these two mid-points decreases by one; otherwise both midpoints are shifted
  in such a way that they remain equidistant. In neither scenario does the right-hand
  side increase, proving the claim. Finally, since $\dist$ does not
  depend on the choice of root in $T$, the result also holds for every
  root.
\end{proof}

\begin{corollary}\label{cor:tm-weak}
  Every tree-metric is weakly tree-submodular for every rooted
  version of the tree.
\end{corollary}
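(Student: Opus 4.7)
The plan is to derive the corollary as an immediate consequence of Lemma~\ref{lemma:tm-strong} together with the cited folklore fact that every strongly tree-submodular language is also weakly tree-submodular~\cite{TreeSubmodular}. Since Lemma~\ref{lemma:tm-strong} establishes that any tree metric~$\mu$ admits $\angled{\midup, \middown}$ as a multimorphism for every rooting of~$T$, applying the implication strongly~$\Rightarrow$~weakly tree-submodular yields that $\mu$ also admits $\angled{\lca, \lcaskew}$ as a multimorphism, which is exactly the definition of weak tree-submodularity.

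If one preferred a self-contained argument, the alternative would be to repeat the case analysis of Lemma~\ref{lemma:tm-strong}, but with the mid-point operators replaced by $\lca$ and~$\lcaskew$. The required inequality would then read
\[
  \dist(a,b) + \dist(x,y) \geq \dist(a \lca x, b \lca y) + \dist(a \lcaskew x, b \lcaskew y),
\]
and the key observation is that, by definition of~$\lcaskew$, the pair $(a \lca x, a \lcaskew x)$ lies on the path~$P[a,x]$ with their distance equal to the distance between the two midpoints that~$\midup, \middown$ would have produced (the symmetric statement holding for~$(b \lca y, b \lcaskew y)$). Thus the same path-decomposition argument used in Lemma~\ref{lemma:tm-strong}---splitting into the intersecting and non-intersecting cases for~$P[a,b]$ and~$P[x,y]$ and inducting on edge insertions---carries over verbatim.

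The main (and essentially only) obstacle is conceptual rather than technical: one must verify that the folklore implication genuinely applies in our setting, \ie that the notion of multimorphism is the same as in~\cite{TreeSubmodular} and that no hidden rootedness assumption differs. Since Lemma~\ref{lemma:tm-strong} already handles arbitrary roots, and the definitions of~$\lca$ and $\lcaskew$ are given with respect to the same rooting, no complication arises. I would therefore present the proof as a one-line appeal to Lemma~\ref{lemma:tm-strong} and~\cite{TreeSubmodular}, optionally noting that a direct verification mirrors the argument of the previous lemma.
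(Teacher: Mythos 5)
Your primary approach---invoking Lemma~\ref{lemma:tm-strong} together with the cited fact from~\cite{TreeSubmodular} that every strongly tree-submodular language is also weakly tree-submodular---is exactly what the paper does: the corollary is stated with no explicit proof, relying on that implication, which is recalled in the sentence immediately preceding Lemma~\ref{lemma:tm-strong}. That one-line argument is complete and correct, and the rootedness matches because the hypotheses and conclusion are both ``for every root.''

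Your optional self-contained alternative, however, rests on a claim that is false. You assert that $\dist(a \lca x, a \lcaskew x)$ equals the distance between $a \midup x$ and $a \middown x$. Writing $p = \dist(a, a\lca x)$ and $q = \dist(x, a\lca x)$, the node $a\lca x$ sits at position $p$ along $P[a,x]$ (measured from $a$) while $a\lcaskew x$ sits at position $q$, so $\dist(a\lca x, a\lcaskew x)=|p-q|$; by contrast $a\midup x$ and $a\middown x$ are always adjacent or equal, at distance $(p+q)\bmod 2$. For $p=3$, $q=1$ this gives $2$ versus $0$. The correct relationship is that $\{a\midup x, a\middown x\}$ are the midpoints of the pair $\{a\lca x, a\lcaskew x\}$, so the $\lca,\lcaskew$ pair is generally farther apart, and the argument of Lemma~\ref{lemma:tm-strong} does not carry over ``verbatim''; the actual derivation in~\cite{TreeSubmodular} is a nontrivial iteration of the strong operations. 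Since you only offer this as an alternative, your proof as you would write it up stands, but the sketch of the fallback would need real work to repair.
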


\noindent
We will need the following characterization of which value-pairs are 
preserved under equality by strong tree submodularity
for tree distance functions. 

\begin{lemma}\label{lemma:tree-collinear}
  Two tuples~$(a,b),(x,y) \in V(T) \times V(T)$ are preserved under equality
  by~$\dist_T$ with multimorphism~$\angled{\midup, \middown}$ iff all four nodes lie on a single
  path~$P$ in~$T$ \emph{and} either~$a,b \leq_P x,y$ or~$a,x \leq_P b,y$.
\end{lemma}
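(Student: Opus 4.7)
The approach is to establish the "iff" in two halves, exploiting the fact that once collinearity is known the problem reduces to a one-dimensional computation where the multimorphism inequality becomes an instance of the ordinary triangle inequality on $\mathbb{R}$.

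For the $\Leftarrow$ direction, assume the four nodes lie on a path $P$ and satisfy one of the two stated orderings. I would identify $P$ with an integer interval so that $\leq_P$ is the usual integer ordering; then $\midup$ and $\middown$ simply return the two middle integer coordinates of a sub-interval (coinciding when the sub-interval has odd vertex count). Computing the right-hand side then yields a quantity equal to $|(a-b)+(x-y)|$ (up to a small parity adjustment one has to track), which matches the left-hand side $|a-b|+|x-y|$ exactly when $a-b$ and $x-y$ have a common sign. The two stated orderings are designed to encode exactly this common-sign condition: the first places both pairs in the same direction on $P$ with $\{a,b\}$ ahead of $\{x,y\}$, and the second interchanges the roles of $b$ and $x$ for configurations where the pairs are interleaved or nested, still giving a common-sign setup.

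For the $\Rightarrow$ direction, assume equality holds. I would first show that all four nodes must lie on a single path, then derive the ordering. Collinearity is obtained by revisiting the proof of Lemma~\ref{lemma:tm-strong}. When $P[a,b]\cap P[x,y]=P[m_1,m_2]$ is non-empty, that proof exhibits $|P[a,b]|+|P[x,y]|=2|P[m_1,m_2]|+|P[a,x]|+|P[b,y]|$ and bounds the right-hand side by the same quantity; equality forces the midpoints $a\midup x,\,a\middown x$ and $b\midup y,\,b\middown y$ to sit on the corresponding sub-paths without any detour through a side branch, which rules out any Steiner vertex inside the sub-tree spanned by $\{a,b,x,y\}$. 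When $P[a,b]$ and $P[x,y]$ are disjoint, the inductive edge-insertion argument there preserves the gap between the two sides only when the inserted edge lies along the "straight" connecting skeleton; any branch creates strict slack, again forcing the spanning sub-tree to be a path.

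Once collinearity is in hand, parameterising $P$ by integer coordinates returns us to the one-dimensional triangle inequality, whose equality case is exactly the common-sign condition from the forward direction. A brief case analysis on whether $\{a,b\}$ and $\{x,y\}$ are disjoint, nested, or interleaved on $P$ then shows that this sign agreement coincides with the disjunction of the two stated orderings. The main obstacle is the collinearity step: Lemma~\ref{lemma:tm-strong} was proved only as a weak inequality, so it does not explicitly isolate the slack that appears when the spanning sub-tree branches; pinning this down cleanly in each sub-case of that proof is the delicate part of the argument.
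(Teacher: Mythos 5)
Your proposal takes essentially the same route as the paper: identify $P$ with an integer interval and compute the right-hand side as $2\dist(m_{ax},m_{by}) = |(a-b)+(x-y)|$ against $|a-b|+|x-y|$, then handle non-collinear configurations by tracking the slack in the inequality of Lemma~\ref{lemma:tm-strong}. The common-sign formulation $(a-b)(x-y)\ge 0$ is a clean packaging of the paper's explicit case enumeration, and in fact no ``parity adjustment'' is actually needed: if the parities of $a+x$ and $b+y$ differ then $a+x\ne b+y$, so $|m_{ax}-m_{by}|\ge 1/2$ and the half-unit offset of the discrete midpoints never changes the sum $\dist(a\midup x,b\midup y)+\dist(a\middown x,b\middown y)=2\dist(m_{ax},m_{by})$.

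One small misdirection in your collinearity argument: the discrete midpoints $a\midup x$, $a\middown x$ always lie on $P[a,x]$ by definition, so the point is not that equality forbids a ``detour'' of the midpoints. In the non-empty-intersection sub-case, with attachment points $m_1,m_2$ of $P[a,b]\cap P[x,y]$, what tightness actually forces is that $m_1\in\{a,x\}$ and $m_2\in\{b,y\}$, \ie the two branching vertices of the Steiner tree of $\{a,b,x,y\}$ collapse to extremal positions, and the Steiner tree is a path. If fleshed out, this gives a genuinely different (and arguably cleaner, ``equality-case'') route to collinearity than the paper's: the paper instead enumerates the collinear orders with strict slack and argues that every non-collinear configuration arises from one of them by edge insertions that can only widen the gap. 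Both work; yours avoids re-running the edge-insertion induction but requires carefully re-opening each sub-case of Lemma~\ref{lemma:tm-strong} to pin down the slack term, which, as you note, is the remaining work.
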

Below is a complete enumeration of all orders in which the nodes~$a,b,x,y$
might appear on a path, where we removed all cases in which $b$ appears before
$a$ (to break mirror symmetry) and all cases derivable by exchanging $a$
with~$x$ and~$b$ with~$y$, or applying both operations. The cases on the right
side show how the nodes might all appear on a single path and yet not be
preserved under quality by~$\angled{\midup, \middown}$ (if they were, the sum
of the magenta lines would match the sum of the cyan lines)
\begin{center}
  \includegraphics[scale=.33]{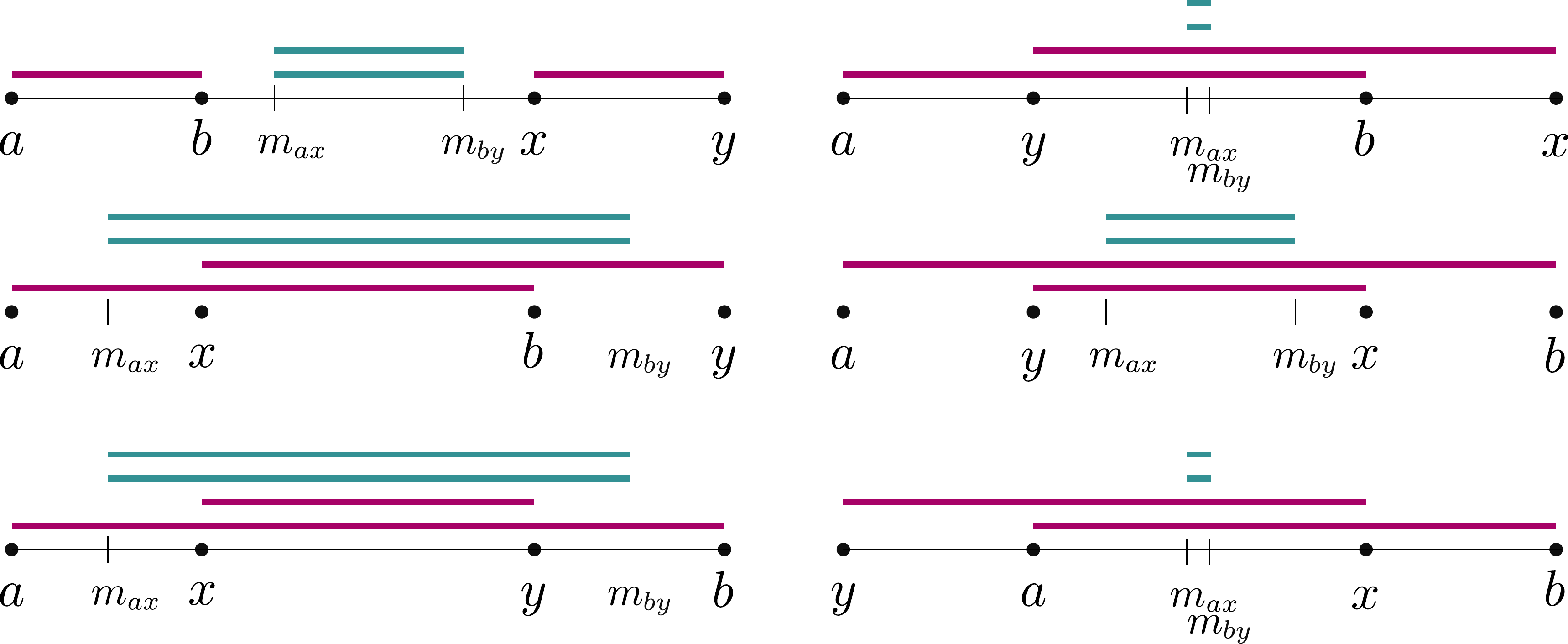}
\end{center}

\begin{proof}
  Let~$m_{ax}$ and~$m_{bx}$ denote the mid-points of~$P[a,x]$ and~$P[b,y]$, as
  in the proof of Lemma~\ref{lemma:tm-strong} we allow these points to lie in
  the middle of an edge in case these paths are of even length. We drop the
  subscript of~$\dist_T$ in the following.

  For the one direction, assume that~$a,b,x,y$ all lie on some common path~$P$.
  We now need to show that
  \begin{align}
    \dist(a,b) + \dist(x,y) = 2\dist(m_{ax}, m_{by}). \label{eq:collin1}
  \end{align}
  It will be helpful to identify~$P$ with the interval~$I = [0,|P|]$. 
  We distinguish several cases depending on the order imposed on these nodes by~$P$,
  the two principal cases are depicted below. Since~$a,b$ are exchangable, we will
  assume that~$a \leq_P b$ in all cases.
  \begin{center}
    \includegraphics[scale=.33]{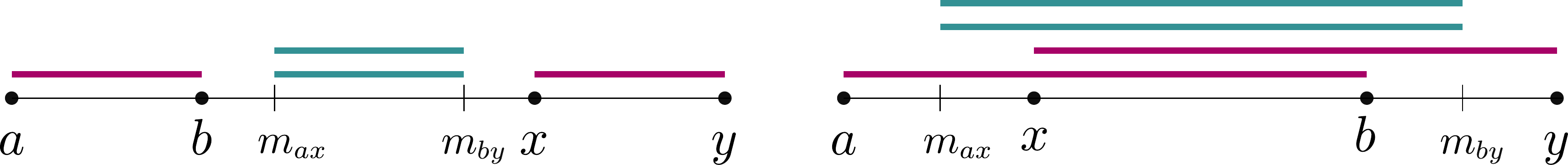}
  \end{center}
  In both of these pictures we can easily verify that
  $m_{ax}$ lies at position~$\dist(a,x)/2$ on~$I$ 
  and~$m_{by}$ at~$\dist(a,b) + \dist(b,y)/2$, hence
  \begin{align*}
    2\dist(m_{ax},m_{by}) &= 2\Big| \frac{\dist(a,x)}{2} - \big(\dist(a,b) + \frac{\dist(b,y)}{2}\big) \Big| 
                           = \Big| \! \dist(a,x) - \dist(b,y) - 2\dist(a,b) \Big| \\
                          &= \dist(a,b) + \dist(x,y).
  \end{align*}  
  Note that the case~$axyb$ is equivalent to the case 
  $axby$, since exchanging~$y$ and~$b$ does not affect $m_{by}$
  and thus none of the relevant distances. This proves the first direction.

  In the other direction, first assume that the nodes do all lie on a path~$P$
  but do no fulfil the second property. After removing symmetries we are left
  with the four cases~$aybx$, $ayxb$, $yaxb$. Consider~$aybx$ first. We have that
  \begin{align*}
    2\dist(m_{ax},m_{by}) &= 2\Big| \frac{\dist(a,x)}{2} - \big( \dist(a,y) + \frac{\dist(y,b)}{2} \big) \Big|
                           =  \Big| \! \dist(a,x) - \dist(y,b) - 2\dist(a,y) \Big| \\
                          &=  \Big| \! \dist(a,y) + \dist(b,x) - 2\dist(a,y) \Big| 
                           =  \Big| \! \dist(b,x) - \dist(a,y) \Big|,
  \end{align*}
  which is smaller than either~$\dist(a,b)$ or~$\dist(x,y)$. The calculation for
  $yaxb$ is essentially the same, which leaves us with~$ayxb$. In that case (skipping to
  the part where the computation diverges) we obtain that
  \begin{align*}
    2\dist(m_{ax},m_{by}) &=  \Big| \! \dist(a,x) - \dist(y,b) - 2\dist(a,y) \Big| \\
                          &=  \Big| \! \dist(a,y) + \dist(y,x) - \dist(y,x) - \dist(x,b) - 2\dist(a,y) \Big| \\
                          &=  \dist(a,y) + \dist(x,b),
  \end{align*}
  which is smaller than~$\dist(a,b)$ unless $x = y$.

  This concludes the case in which the nodes all lie on a path, hence we are
  left with cases in which the nodes~$a,b,x,y$ do not lie on a single path
  in~$T$. First, consider the case $aybx$ (as in the above figure) and imagine
  introducing edges to take~$y$ and~$b$ away from the path~$P[a,x]$. Every
  edge introduced in this manner will contribute exactly one the the left-hand
  side of (\ref{eq:collin1}) and at most one to the right-hand side  (since
  $m_{by}$ moves by half a unit). Hence no tree derivable from~$aybx$ can ever
  achieve equality. The same argument holds for~$ayxb$ (where we introduced
  edges to remove~$y$ and~$x$ from~$P[a,b]$) and $yaxb$ (were we remove~$a$
  and~$x$ from~$P[y,b]$).

  In the case of~$axby$, removing~$x$ or~$b$ from~$P[a,y]$ by introducing an
  edge will contribute one to the left-hand side of~(\ref{eq:collin1}) and
  decrease the right-hand side (since~$m_{ax}$ would move away from~$a$ and~$m_{by}$
  away from~$y$), hence equality is broken. A similar argument works for
  $axyb$ and~$abxy$, in both cases the midpoints move in a way that decreases
  the right-hand side of~(\ref{eq:collin1}). This proves the claim.
\end{proof}

\begin{corollary} \label{corollary:edge-supports}
  Let $d_T$ be preserved under equality in $R$ for some $R \subseteq
  V_T \times V_T$, with at least one pair $(a,b) \in R$ with $a \neq b$.
  Then there is a path $P$ in $T$ which 
  can be oriented as a directed path such that for every pair $(a,b) \in R$
  the nodes $a$ and $b$ lie on $P$ with $a \preceq_P b$. 
\end{corollary}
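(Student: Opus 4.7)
The plan is to fix one distinguished pair $(a_0, b_0) \in R$ with $a_0 \neq b_0$, orient $T[a_0, b_0]$ from $a_0$ to $b_0$, and argue that this orientation extends to the desired path $P$. For any other pair $(x, y) \in R$, Lemma~\ref{lemma:tree-collinear} applied to $(a_0, b_0)$ and $(x, y)$ states that the four nodes lie on a common path of $T$ with both pairs sharing a common direction on that path; since $a_0 \neq b_0$, the shared direction is forced to be $a_0 \to b_0$, so $x \preceq y$ in the corresponding orientation.

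The technical core is to show that $V(R) := \bigcup_{(x, y) \in R} \{x, y\}$ lies on a single common path $P$ of $T$, rather than only on pairwise common paths. Suppose not; then the Steiner subtree of $V(R)$ in $T$ contains a branching node $y^*$ and we may choose three nodes $u, v, w \in V(R)$ lying in three distinct components $U, V, W$ of $T - y^*$. Each of them appears in some pair in $R$; call these pairs $\bar x_u, \bar x_v, \bar x_w$. Applying Lemma~\ref{lemma:tree-collinear} to $(\bar x_u, \bar x_v)$, the four coordinates lie on a common path of $T$ passing through $y^*$ (since $u$ and $v$ are in different components), so the coordinates of $\bar x_u$ are contained in $\overline U \cup \overline V$ where $\overline U := U \cup \{y^*\}$; intersecting with the analogous constraint from $(\bar x_u, \bar x_w)$ gives that all coordinates of $\bar x_u$ lie in $\overline U$. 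Symmetrically, all coordinates of $\bar x_v$ lie in $\overline V$ and of $\bar x_w$ in $\overline W$.

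Assume first that each of the three pairs is non-trivial (both coordinates distinct). For each $s \in \{u, v, w\}$ with $\bar x_s = (s_1, s_2)$, define a sign $\epsilon_s \in \{+1, -1\}$ according as $\dist_T(s_1, y^*)$ is greater or less than $\dist_T(s_2, y^*)$. Translating the same-direction condition of Lemma~\ref{lemma:tree-collinear} into these signs, on a common path through $y^*$ where distance to $y^*$ decreases on one side and increases on the other, yields $\epsilon_u = -\epsilon_v$, $\epsilon_v = -\epsilon_w$, and $\epsilon_u = -\epsilon_w$, which are jointly inconsistent. When one or more of the three pairs is trivial, e.g.\ $\bar x_u = (u, u)$, the corresponding sign argument is vacuous; in these degenerate cases we invoke the closure of $R$ under $\midup, \middown$ guaranteed by the definition of preserved under equality in $R$. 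The tuple $\bar x_v \midup \bar x_w$ lies in $R$ and its coordinates, being midpoints of pairs in $\overline V \times \overline W$, lie in $\overline V \cup \overline W$; if these midpoints straddle both strict branches then pairing the new tuple with $\bar x_u$ via Lemma~\ref{lemma:tree-collinear} yields four coordinates spanning all three of $U, V, W$, contradicting the lemma. Otherwise the midpoints lie entirely in one branch, and iterated applications of closure eventually reduce to the non-trivial sign contradiction.

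The main obstacle is the case-analysis in the trivial-pair setting, where we must carefully track how the midpoint operators $\midup, \middown$ move coordinates between branches at $y^*$ before triggering a violation of Lemma~\ref{lemma:tree-collinear}; the non-trivial case reduces to a clean parity argument. Once $V(R)$ is known to lie on a single path, the desired orientation of that path follows immediately from the pairwise direction condition already derived.
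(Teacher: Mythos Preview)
Your approach diverges from the paper's, and the parity argument for the case where all three of $\bar x_u, \bar x_v, \bar x_w$ are non-trivial is correct and rather clean: once you know each pair sits entirely in its own $\overline S$, the same-direction clause of Lemma~\ref{lemma:tree-collinear} (applied across each of the three pairs of branches) does force $\epsilon_u=-\epsilon_v$, $\epsilon_v=-\epsilon_w$, $\epsilon_u=-\epsilon_w$, a parity contradiction. That part is fine.

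The gap is the degenerate case. Nothing in your setup guarantees that even \emph{one} of the three selected pairs is non-trivial: the hypothesis only supplies a single non-trivial pair $(a_0,b_0)$ in all of $R$, and by your own containment argument its two coordinates lie in a single closed branch $\overline S$. If, say, $a_0,b_0\in\overline U$ while the witnesses $v\in V$ and $w\in W$ come from trivial pairs $(v,v),(w,w)\in R$, then your suggested move of forming $\bar x_v \midup \bar x_w$ just yields another trivial pair (the midpoint of $T[v,w]$), and Lemma~\ref{lemma:tree-collinear} never fails on two trivial pairs. The phrase ``iterated applications of closure eventually reduce to the non-trivial sign contradiction'' is precisely where the work is hiding: you have not said what quantity decreases under iteration, nor why a non-trivial pair must eventually appear in a second branch.

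The paper avoids this case split. Instead of localising at a branching vertex, it orients each edge $ij$ of $T$ by the rule ``$i\to j$ whenever some $(a,b)\in R$ has $a\in T_i$, $b\in T_j$'' and uses Lemma~\ref{lemma:tree-collinear} globally to show this oriented edge set is a subgraph of a directed path; this handles all non-trivial pairs in one stroke. Trivial pairs $(a,a)$ off that path $P$ are then eliminated by iterating the multimorphism on a non-trivial pair together with $(a,a)$, with an explicit distance argument showing one eventually manufactures a non-trivial pair with a coordinate off $P$. That iteration is exactly the missing ingredient in your degenerate case; if you import it (iterate $(a_0,b_0)$ against $(v,v)$ to drive a non-trivial pair out of $\overline U$), your approach can be completed, but as written the argument is incomplete.
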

\begin{proof}
  We first show that it holds for all pairs $(a,b) \in R$ with $a \neq b$. 
  Let $ij$ be an edge of $T$ and let $T_i$, $T_j$ be the trees of $T-ij$.
  By the above lemma, there are no two pairs $(a,b), (x,y) \in R$
  such that $a, y \in T_i$ and $b, x \in T_j$. Hence we can define an
  oriented subforest $T'$ of $T$ by including a directed edge $ij$ 
  whenever there is a pair $(a,b) \in R$ with $a \in T_i$, $b \in T_j$.
  Then again by the above lemma, there is no path $P$ in $T$ such that 
  $T'$ contains edges of $P$ oriented in conflicting directions. 
  This implies that $T'$ is a subgraph of a directed path in $T$.
  
  Next, let $P$ be a minimal directed path as above, i.e., $P=T[s,t]$
  for some $s, t$ such that $R$ contains pairs $(s, s'), (t', t)$ for
  some $s', t'$, and all pairs of $R$ of non-zero length lie on $P$. 
  Let $(a,a) \in R$ be a point not on $P$ and let $b$ be the point
  on $P$ closest to $a$. By Lemma~\ref{lemma:tree-collinear},
  there is no pair $(x,y) \in R$ with $x \prec b \prec y$. 
  Let $(x,y) \in R$ with $x \neq y$, and assume that $x \prec y \prec b$.
  We claim that by iterating the strong tree submodularity
  multimorphism, starting with $(x,y)$ and $(a,a)$, we
  can generate new pairs $(x',y') \in R$ with $x' \neq y'$, such that
  $x'$ and $y'$ are strictly closer to $a$ than $x$ and $y$. Indeed,
  let $m_{xa}$ and $m_{ya}$ be the midpoints, which as above are
  allowed to lie on the middle of edges. Then $m_{xa}$ and $m_{ya}$
  are distinct, and if they are at distance at least 1 the conclusion
  is clear. Otherwise one of the points is integral, and one of the 
  operations $\midup$, $\middown$ rounds the other point in the other
  direction, generating a pair $(x', y')$ of non-zero length. By
  iterating this, we eventually generate a pair $(b,b') \in R$ where
  $b \neq b'$ and $b'$ does not lie on $P$, which is a contradiction.
\end{proof}

\subsection{The domain consistency property}

Consider a problem VCSP$(\Gamma)$ over a domain $D_I$
and a discrete relaxation VCSP$(\Gamma')$ of VCSP$(\Gamma)$
over a domain $D \supseteq D_I$. We say that the relaxation
has the \emph{domain consistency property} if the following holds:
for any instance $I$ of VCSP$(\Gamma')$, if for every variable $v$
there is an optimal solution to $I$ where $v$ takes a value in $D_I$, 
then there is an optimal solution where all variables 
take values in $D_I$, \ie an optimal solution to the original
problem of the same cost. 
We will use the results of the previous section to show that the
relaxations we are using for \textsc{Zero Extension} and
\textsc{Metric Labelling} on induced tree metrics have the domain
consistency property (in the latter case with a suitable 
restriction on the unary costs), allowing for $\FPT$ algorithms under the 
gap parameter via simple branching algorithms. 

The result will follow from a careful investigation of the binary 
constraints that $\opt(I)$ can induce on a pair of vertices $u, v \in V$,
also for cases when there is no edge $uv$ in $G$. 
The result builds on Corollary~\ref{corollary:edge-supports}.

For the rest of the section, let us fix a relaxed instance $I=(G=(V,E), \tau,\mu,q)$
of \textsc{Zero Extension} where $\mu$ is a tree metric defined by a tree $T$, 
and the original (non-relaxed) metric is the restriction of $\mu$
to a set of nodes $D_I$.
Note that $I$ can be expressed as a VCSP instance using assignments
and the valued constraint $\mu$. 
Let $\opt$ be the set of optimal labellings. 
For a vertex $v \in V$, let $D(u)$ denote the set
$\{\lambda(u) \mid \lambda \in \opt\}$, and let $D_I(v)=D_I \cap D(v)$. 
Furthermore, for a pair of vertices $u, v \in V$, 
let $R(u,v)=\{(\lambda(u), \lambda(v)) \mid \lambda \in \opt\}$
be the projection of~$\opt$ onto~$(u,v)$,
and $R_I(u,v)=R(u,v) \cap (D_I \times D_I)$ the integral part of this
projection. Let $F \subseteq E$ be the set of edges that are crossing in at least
one $\lambda \in \opt$ and let $E_0 = E \setminus F$.  

We begin by observing that the ``path property'' of
Corollary~\ref{corollary:edge-supports} applies to all 
vertices and edges in $\opt$. 

\begin{lemma} \label{lemma:path-support-everywhere}
  For every vertex $v$ that lies in a connected component of $G$
  containing at least one terminal, $D(v)$ is a path in $T$. 
  Furthermore, for every edge $uv \in E$, $R(u,v)$ embeds
  into the transitive closure of a directed path in $T$. 
\end{lemma}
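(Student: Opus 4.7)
The plan is to handle both statements together, anchoring on the edge claim modulo a special case. For any edge $uv \in E$, the cost function $\mu(u,v) = d_T(u,v)$ appears as a valued constraint in the VCSP representation of $I$, and by Lemma~\ref{lemma:tm-strong} it admits the strong tree submodularity multimorphism $\langle \midup, \middown \rangle$. I would first note that $\opt(I)$ is closed under pointwise $\midup, \middown$: combining two optimal labellings this way yields labellings whose total cost is at most $2\opt(I)$ via the multimorphism inequality, so both outcomes are themselves optimal. Applying Lemma~\ref{lemma:opt-local} at the edge constraint then shows that $\mu$ is preserved under equality in $R(u,v)$, and as long as $R(u,v)$ contains at least one off-diagonal pair Corollary~\ref{corollary:edge-supports} supplies the directed path in $T$ whose transitive closure contains $R(u,v)$.

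For the unary claim I would fix $v$ in a component containing a terminal $t$ and walk along any $v$-$t$ path $v = w_0, w_1, \ldots, w_k = t$ in $G$. Let $j$ be the smallest index such that either $w_j = t$ or the edge $w_j w_{j+1}$ admits an off-diagonal pair in $R(w_j, w_{j+1})$. Every earlier edge $w_i w_{i+1}$ then has a purely diagonal $R(w_i, w_{i+1})$, which forces $\lambda(w_i) = \lambda(w_{i+1})$ in every $\lambda \in \opt$ and hence $D(v) = D(w_j)$. If $w_j = t$, then $D(v) = \{\tau(t)\}$ is trivially a single-vertex path; otherwise the edge analysis above yields a directed path $P$ in $T$ with $D(v) \subseteq V(P)$.

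It remains to promote the containment $D(v) \subseteq V(P)$ into equality with a connected subpath. Here I would exploit that $D(v)$ itself is closed under $\midup, \middown$: for $a, b \in D(v)$ witnessed by $\lambda_a, \lambda_b \in \opt$, the labellings $\lambda_a \midup \lambda_b$ and $\lambda_a \middown \lambda_b$ remain optimal and evaluate at $v$ to $a \midup b$ and $a \middown b$. Picking $a, b \in D(v)$ at maximum $d_T$-distance, I would argue by induction on $d_T(a,b)$ that $V(T[a,b]) \subseteq D(v)$: the cases $d_T(a,b) \leq 1$ are immediate, while for $d_T(a,b) \geq 2$ both $a \midup b$ and $a \middown b$ lie strictly interior to $T[a,b]$, splitting it into two shorter subpaths that the inductive hypothesis covers and whose union is all of $V(T[a,b])$. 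Combined with the reverse inclusion $D(v) \subseteq V(T[a,b])$ from the extremality of $a, b$ on $P$, this identifies $D(v)$ as a subpath of $T$. Finally, for the leftover case of the edge claim where $R(u,v)$ is purely diagonal, the unary claim forces $D(u) = D(v)$ to be a subpath, and its diagonal embeds into the transitive closure of either orientation of that subpath.

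The main obstacle I anticipate is the midpoint induction: one has to verify carefully that for $d_T(a,b) \geq 2$ the midpoints $a \midup b$ and $a \middown b$ really are interior to $T[a,b]$ under both parities of the length, and that the two resulting shorter subpaths jointly cover $V(T[a,b])$ (with a single overlap vertex in the even case and none in the odd case). A secondary concern is avoiding circular reasoning between the two claims for the purely-diagonal edge case, which is why the proof order above first establishes the off-diagonal edge case, then uses it to prove the unary claim, and only then closes the loop back to the leftover edge case.
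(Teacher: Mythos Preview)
Your proposal is correct and takes essentially the same approach as the paper: both invoke Lemma~\ref{lemma:opt-local} together with Corollary~\ref{corollary:edge-supports} for edges carrying an off-diagonal pair, reduce any other vertex to such an edge (or a terminal) via a chain of purely-diagonal edges, and use closure of $D(v)$ under $\midup,\middown$ to upgrade containment in a path to equality with a subpath. You are more explicit than the paper about the midpoint induction and the boundary cases (the paper simply asserts that closure under $\midup,\middown$ makes $D(v)$ the whole path and does not separately treat the all-diagonal case), but the substance is identical.
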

\begin{proof}
  First note that $v$ has the same domain as any vertex $u$ reachable
  from $v$ via edges of $E_0$. Therefore we can focus on the case that
  $v$ is incident with an edge $uv \in F$. In this case, the 
  valued constraint $\dist_T(u, v)$ is present in $I$, 
  and by Lemma~\ref{lemma:opt-local} is preserved under equality in
  $R(u,v)$. Since $D(v)$ is just the projection of $R(u,v)$ to $v$,
  it follows from Corollary~\ref{corollary:edge-supports} that
  $D(v)$ is contained in a path. Furthermore, $D(v)$ itself is closed
  under the $\midup$, $\middown$ operations, which implies that
  $D(v)$ covers the whole path. 
  For the second part, if $uv \in F$ then the claim is 
  Corollary~\ref{corollary:edge-supports}.
  Otherwise, $R(u,v)$ is a collection of pairs $(x,x)$
  for all $x \in D(v)$, which is equal to a path in $T$. 
\end{proof}

\noindent
Next, we show the main result of this section: if $u$ and $v$ is a
pair of variables, then whether or not there is an edge $uv$ in $E$, 
the constraint $R(u,v)$ induced on $u$ and $v$ by $\opt$ is only
non-trivial on values in $D(u) \cap D(v)$. 
Note that the proof only uses the algebraic properties of weak tree
submodularity, hence the only assumption that is specific to
\textsc{Zero Extension} is that $D(u)$ and $D(v)$ form paths in $T$.

\begin{lemma} \label{lemma:all-non-shared}
  Let $u$ and $v$ be a pair of variables and $a \in D(u)$, 
  $b \in D(v)$ a pair of values. If $(a,b) \notin R(u,v)$,
  then $a, b \in D(u) \cap D(v)$ and $a \neq b$. 
\end{lemma}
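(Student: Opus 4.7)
The plan is to argue by contrapositive: suppose $(a,b) \notin R(u,v)$ with $a \in D(u)$ and $b \in D(v)$; I will show that $a,b \in D(u) \cap D(v)$ and $a \neq b$. To set up, fix optima $\lambda_1, \lambda_2 \in \opt$ with $\lambda_1(u)=a$ and $\lambda_2(v)=b$, and write $a' := \lambda_2(u)$, $b' := \lambda_1(v)$; the hypothesis forces $a' \neq a$ and $b' \neq b$, while $(a,b'),(a',b) \in R(u,v)$. By Lemma~\ref{lemma:tm-strong}, $d_T$ admits the strong submodular multimorphism $\angled{\midup,\middown}$, and Lemma~\ref{lemma:opt-local} then guarantees that $\opt$ is closed under the coordinatewise application of $\midup$ and $\middown$; the projection $R(u,v)$ inherits this closure. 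Lemma~\ref{lemma:path-support-everywhere} additionally gives that $D(u)$ and $D(v)$ are paths in $T$.

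A useful first reduction is that the fibre $\opt_a := \{\lambda \in \opt : \lambda(u) = a\}$ is itself closed under $\midup$ and $\middown$ (since $a \midup a = a \middown a = a$), so its $v$-projection $D_a(v) = \{y : (a,y) \in R(u,v)\}$ is a $\midup,\middown$-closed subset of the path $D(v)$ and is therefore a sub-path. By assumption $b \notin D_a(v)$, so $D_a(v)$ is a proper sub-path of $D(v)$; symmetrically $D_b(u) = \{x : (x,b) \in R(u,v)\}$ is a proper sub-path of $D(u)$ that excludes $a$.

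I would then complete the argument by three sub-claims, each proved by generating new pairs in $R(u,v)$ via iterated coordinatewise $\midup,\middown$ combinations of $\lambda_1$ and $\lambda_2$: (i) $a \neq b$, because if $a=b \in D(u)\cap D(v)$ one can drive both coordinates of iterated midpoint combinations toward the common target $a$, producing an optimum with value $a$ at both $u$ and $v$; (ii) $a \in D(v)$, because if $a$ lay off the path $D(v)$ in $T$, then combining $\lambda_1$ with an optimum witnessing any prescribed $y \in D(v)$ would extend $D_a(v)$ to all of $D(v)$; and (iii) $b \in D(u)$ by the symmetric argument. Each sub-claim yields $(a,b) \in R(u,v)$, contradicting the standing hypothesis.

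The main obstacle is carrying out the iteration analysis in (i)--(iii) rigorously: each coordinatewise $\midup/\middown$ step updates $\lambda(u)$ and $\lambda(v)$ simultaneously, so the two coordinates evolve in a coupled way along the paths $D(u), D(v)$. The combinatorial content of Lemma~\ref{lemma:tree-collinear}, which characterises exactly which pairs of two-tuples in $V(T)\times V(T)$ are preserved under equality by $\angled{\midup,\middown}$, is what constrains the possible geometries of the four points $a,a',b,b'$ in $T$; ruling out its ``collinear-and-ordered'' configurations under each failing hypothesis is precisely what forces the iterated closure to land on $(a,b)$ rather than on some other limiting pair of $R(u,v)$.
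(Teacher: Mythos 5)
Your plan has a genuine gap: it relies solely on the strong tree-submodularity multimorphism $\angled{\midup,\middown}$, but that operation by itself cannot carry out the key steps. The paper's proof instead uses the \emph{weak} tree-submodularity multimorphism $\angled{\lca,\lcaskew}$ with a \emph{freely chosen root} $d$ (denoted $\omega_d$). The crucial tool this gives is the ``$d$-closest pair'': applying $\lca$ rooted at $d$ to any two pairs in $R(u,v)$ pulls \emph{both} coordinates toward $d$ simultaneously, which is exactly what is needed to force, e.g., $(p,q)\in R(u,v)$ where $p\in D(u)$ and $q\in D(v)$ are the two path-endpoints nearest each other, or $(r,r)\in R(u,v)$ for each shared value $r$. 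The midpoint operations $\midup,\middown$ cannot do this: they move coordinates toward each other's midpoint, not toward a designated target, and once you combine $\lambda_1$ (with $\lambda_1(u)=a$) with any $\lambda_2$ satisfying $\lambda_2(u)\neq a$, iterated midpoint combinations leave the fibre $\opt_a$ immediately, so you never land on $(a,b)$. Your sub-claims (i)--(iii) all implicitly assume you can hold one coordinate fixed while moving the other, which is precisely the operation $\midup,\middown$ do not provide.

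A concrete witness to the gap: let $T$ be the path $0{-}1{-}2{-}3{-}4$, $D(u)=\{0,1\}$, $D(v)=\{3,4\}$ (no shared values), and consider $R=\{(0,3),(1,4)\}$. Both projections are paths in $T$, and $R$ is closed under coordinatewise $\angled{\midup,\middown}$: $(0,3)\midup(1,4)=(0,3)$ and $(0,3)\middown(1,4)=(1,4)$ for any choice of root. So strong tree submodularity alone cannot rule out this $R$, yet the lemma asserts $R(u,v)=D(u)\times D(v)$ when there are no shared values. It is the $\omega_2$-closure that breaks this: $(0,3)\lca_2(1,4)=(1,3)$, which must also lie in $R(u,v)$. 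You cite Lemma~\ref{lemma:tree-collinear} and Corollary~\ref{corollary:edge-supports} as constraining the geometry, but those apply only to constraints actually present in the instance (edges $uv\in E$ with the $\dist_T$ cost), whereas the lemma is stated for arbitrary pairs $u,v$; what closes $R(u,v)$ for non-edges is Lemma~\ref{lemma:opt-local} together with Corollary~\ref{cor:tm-weak}, i.e., weak tree submodularity for every root, and your proof needs to invoke that explicitly.
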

\begin{proof}
  Refer to the values of $D(u) \cap D(v)$ as \emph{shared values}, and
  other values of $D(u)$ and $D(v)$ as \emph{non-shared values}. 
  Since $D(u)$ and $D(v)$ is each a path in $T$,
  $R(u,v)$ is contained in the product of these paths,
  and the shared values (if any) induce a common path. 
  Let $P_u$ be the path on vertices $D(u)$ and $P_v$ the path
  on vertices $D(v)$. Let $\omega_d$ denote the $d$-rooted weak tree
  submodularity multimorphism. Using the operation $\lca$ rooted in $d$,
  we find for any $d$ that $R(u,v)$ contains a pair $(a,b)$ where
  $a \in D(u)$ and $b \in D(v)$ are both chosen as the point closest to
  $d$. We refer to this as the \emph{$d$-closest pair}. 
  We first handle the case that there are no shared values.

  \begin{claim}
    If there are no shared values, then $R(u,v)=D(u) \times D(v)$.
  \end{claim}
  \begin{proof}
    Let $p \in P_u$ and $q \in P_v$ be the points on the respective
    path closest to the other path. By considering the
    $p$-closest pair in $R(u,v)$, we find $(p,q) \in R(u,v)$. 
    Now let $a \in D(u)$ and $b \in D(v)$ be arbitrary. 
    By considering the $a$-closest and $b$-closest pairs in $R(u,v)$,
    we find $(a,q), (p,b) \in R(u,v)$. But then $\omega_p$
    on this pair produces $(a,b) \in R(u,v)$. 
  \end{proof}

  \noindent
  Next, assume that there are shared values, and let $P_I$ be the path
  induced on these values. Note that $(r,r) \in R(u,v)$ for every $r
  \in V(P_I)$, by considering the $r$-closest pair.  
  We also have the following. 

  \begin{claim}
    Every vertex $r \in D(u) \cap D(v)$ is compatible with
    every vertex of $D(u) \symdiff D(v)$, i.e., 
    $(a,r) \in R(u,v)$ for every $a \in D(u) \setminus D(v)$,
    and $(r,b) \in R(u,v)$ for every $b \in D(v) \setminus D(u)$. 
  \end{claim}
  \begin{proof}
    Let $r_0$ and $r_n$ be the endpoints of $P_I$. The vertices of
    $D(u) \symdiff D(v)$ are split into those closest to $r_0$
    and those closest to $r_n$. We first note that every 
    vertex $a \in D(u) \symdiff D(v)$ is compatible with that
    endpoint it is closest to, by considering the $a$-closest pair in
    $R(u,v)$. Next, let $a \in D(u) \setminus D(v)$ be a point closer
    to $r_0$ than $r_n$, and write $P_I=r_0r_1 \ldots r_n$. 
    We claim by induction that $(a, r_i) \in R(u,v)$
    for every $i=0, \ldots, n$. 
    As a base case, we already have $(a, r_0) \in R(u,v)$. 
    The inductive step then goes as follows:
    \begin{enumerate}
    \item Assume $(a, r_i) \in R(u,v)$ for some $i<n$. 
      Then $(a, r_i)$ and $(r_{i+1}, r_{i+1})$ via $\omega_{r_i}$
      produces $(a', r_{i+1})$ where $a'$ is the vertex following $a$
      on $P_u$ and may be $a'=r_0$
    \item $(a,r_i), (a', r_{i+1})$ via $\omega_{r_i}$ produces $(a,r_{i+1})$
    \end{enumerate}
    By induction we then get $(a,r) \in R(u,v)$ for every $r \in V(P_I)$,
    and the rest of the cases of the claim follow by symmetry. 
  \end{proof}
  
  \noindent
  The rest of the proof is now easy. Let $a \in D(u) \setminus D(v)$
  and $b \in D(v)$; the other case is symmetric. If $b$ is a shared
  value, then $(a,b) \in R(u,v)$ by the above. Otherwise, let $r$ be
  an arbitrary shared value; then $(a,r), (r,b) \in R(u,v)$ and
  $\omega_r$ produces $(a,b) \in R(u,v)$. 
  Hence the only excluded pairs are on shared values. 
  Finally, $(r,r) \in R(u,v)$ for every shared value $r$
  by considering the $r$-closest pair. 
\end{proof}

\noindent
We also need the following standard result.

\begin{lemma} \label{lemma:majority}
  Let $\lambda_0$ be a partial labelling on a set of vertices $U
  \subseteq V$. Then there exists a labelling $\lambda \in \opt$
  that extends $\lambda_0$ if and only if 
  $(\lambda_0(u), \lambda_0(v)) \in R(u,v)$ 
  for every pair $u, v \in U$. 
\end{lemma}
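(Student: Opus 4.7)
The forward direction is immediate: if $\lambda \in \opt$ extends $\lambda_0$, then $(\lambda_0(u), \lambda_0(v)) = (\lambda(u), \lambda(v)) \in R(u,v)$ for every pair $u, v \in U$ by the definition of $R$.

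For the backward direction I plan to induct on $|U|$. The base cases $|U| \leq 2$ fall out of the definitions of $D$ and $R$: when $|U|=2$ the hypothesis $(\lambda_0(u), \lambda_0(v)) \in R(u,v)$ is literally the assertion that some $\lambda \in \opt$ agrees with $\lambda_0$ on $\{u,v\}$, and $|U|=1$ is the special case $u=v$ which forces $\lambda_0(u) \in D(u)$. For the inductive step with $|U| = k \geq 3$, I apply the inductive hypothesis to the three sub-assignments $\lambda_0|_{U \setminus \{u_i\}}$ for $i \in \{1,2,3\}$ (each is still pairwise consistent) to obtain $\lambda^1, \lambda^2, \lambda^3 \in \opt$, each extending $\lambda_0$ on the corresponding subset of $U$. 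I then define $\bar\lambda(v) := m(\lambda^1(v), \lambda^2(v), \lambda^3(v))$ coordinate-wise, where $m(a,b,c)$ denotes the unique node of $T$ lying on all three pairwise geodesics (the tree median). The extension property $\bar\lambda|_U = \lambda_0$ then follows from the majority identity $m(a,a,b)=a$: for each $u_j \in U$, at least two of the three labellings already coincide with $\lambda_0$ at $u_j$ (trivially all three when $j > 3$, and the two labellings $\lambda^i$ with $i \in \{1,2,3\} \setminus \{j\}$ when $j \leq 3$), so the median takes value $\lambda_0(u_j)$.

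The main obstacle is showing $\bar\lambda \in \opt$, since the binary multimorphism $\langle \midup, \middown \rangle$ from Lemma~\ref{lemma:tm-strong} does not by itself produce the ternary median. My plan is to argue via the cut decomposition of tree-metric cost implicit in Section~\ref{section:leaf-metric}: for every edge $e = xy \in E(T)$, let $T_x$ be the component of $T - e$ containing $x$, and let $L_e(\lambda) := \{v \in V : \lambda(v) \in T_x\}$. Then $\cost_\mu(\lambda, G) = \sum_{e \in E(T)} |\delta(L_e(\lambda))|$, and by the Hirai--Pap duality, $\lambda \in \opt$ if and only if $L_e(\lambda)$ is a min-cut in $G$ (with boundary imposed by $\tau$) for every $e$. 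A standard geometric property of tree medians is that $\bar\lambda(v) \in T_x$ precisely when at least two of $\lambda^1(v), \lambda^2(v), \lambda^3(v)$ lie in $T_x$, giving $L_e(\bar\lambda) = \maj(L_e(\lambda^1), L_e(\lambda^2), L_e(\lambda^3))$. Two iterated applications of submodularity of the graph cut function (first to the three pairwise intersections, which are themselves min-cuts, then to their union) yield that the majority of three min-cuts with the same boundary is itself a min-cut. Hence $L_e(\bar\lambda)$ is a min-cut for every $e \in E(T)$, and therefore $\bar\lambda \in \opt$, completing the inductive step.
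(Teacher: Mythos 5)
Your proof is correct, but it takes a genuinely different route for the crucial step of showing $\opt$ is closed under the tree median $m$. The paper's proof builds $m$ as an explicit composition $m(a,b,c)=((a \lca b) \lcaskew (a \lca c)) \lcaskew (b \lca c)$ of the two weak tree-submodularity operations; closure of $\opt$ under $m$ is then automatic from Lemma~\ref{lemma:opt-local} (each of $\lca,\lcaskew$ preserves $\opt$, so any term built from them does), and the conclusion follows from the standard fact that a relation closed under a majority operation is determined by its binary projections (Baker--Pixley). You instead prove median-closure by a concrete combinatorial argument: decompose $\cost_{\hat\mu}$ edge-by-edge over $T$ via Lemma~\ref{lemma:cost-by-cuts}, observe that $L_e(\bar\lambda)=\maj(L_e(\lambda^1),L_e(\lambda^2),L_e(\lambda^3))$ because the tree median votes coordinatewise over each tree edge, and use closure of the min-cut lattice under $\cap,\cup$ (hence under $\maj$) to conclude each $L_e(\bar\lambda)$ is a min-cut. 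You then make the Baker--Pixley induction explicit. Both arguments are valid, and yours is more elementary and self-contained, avoiding the claim that the composite operation is ``readily verified'' to be a majority and the implicit appeal to a universal-algebra theorem.

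The trade-off is generality: the paper's algebraic route transfers verbatim to the augmented VCSP instance $I'$ of Theorem~\ref{theorem:unary-support-suffices}, which adds further weakly tree-submodular cost functions that need not be expressible as graph cuts. Since those functions also admit the $\angled{\lca,\lcaskew}$ multimorphism, $\opt(I')$ is still closed under $m$ by the same one-line argument. Your cut-decomposition characterisation of $\opt$ (``$\lambda\in\opt$ iff every $L_e(\lambda)$ is a min-cut'') is specific to the pure \textsc{Zero Extension} objective and does not carry over to $I'$, so a reader following your proof would need to supplement it with the paper's algebraic observation (or a separate argument) before reusing Lemma~\ref{lemma:majority} in the proof of Theorem~\ref{theorem:unary-support-suffices}. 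For the lemma as literally stated, though, your proof is complete and correct.
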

\begin{proof}
  Define the operation $m(a,b,c)=((a \lca b) \lcaskew (a \lca c))
  \lcaskew (b \lca c)$ (for an arbitrary choice of root). Then $m$
  preserves $\opt$, and it is readily verified that $m$ is a
  \emph{majority operation}, i.e., $m(x,x,y)=m(x,y,x)=m(y,x,x)x$. 
  It follows that $\opt$ is characterized by its binary projections. 
\end{proof}

\noindent
This gives us the following algorithmic consequence.

\begin{lemma} \label{lemma:unary-support-suffices}  
  There is a labelling $\lambda \in \opt$ such that
  for every variable $v$ with $D_I(v)$ non-empty,
  we have $\lambda(v) \in D_I$.   
\end{lemma}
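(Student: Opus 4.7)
The plan is to construct a partial labelling $\lambda_0$ on the set $U=\{v\in V : D_I(v) \neq \emptyset\}$ such that $\lambda_0(v) \in D_I(v)$ for every $v \in U$ and such that the pairwise constraints of $\opt$ are respected, meaning $(\lambda_0(u), \lambda_0(v)) \in R(u,v)$ for every $u, v \in U$. Once that is in hand, Lemma~\ref{lemma:majority} directly extends $\lambda_0$ to a full labelling $\lambda \in \opt$, which by construction takes integral values on every variable in $U$; this is exactly the statement of the lemma.

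The definition of $\lambda_0$ is the key ingredient. We fix, once and for all, an arbitrary rooting of $T$ and a total order on $V(T)$. For each $v \in U$ we let $\lambda_0(v)$ be the element of $D_I(v)$ that comes first in the lexicographic ordering by (depth in $T$, then the total order). This is well defined because $D_I(v)$ is nonempty. The reason we insist on a \emph{global} canonical rule rather than choosing inside each $D_I(v)$ independently is that an independent choice could pick two distinct shared values $a \in D_I(u) \cap D_I(v)$ and $b \in D_I(u) \cap D_I(v)$, which is precisely the failure mode that Lemma~\ref{lemma:all-non-shared} leaves open.

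To verify pairwise consistency, fix $u, v \in U$ and set $a = \lambda_0(u)$, $b = \lambda_0(v)$. If $a = b$ then $a \in D(u) \cap D(v)$, and the contrapositive of Lemma~\ref{lemma:all-non-shared} immediately yields $(a,a) \in R(u,v)$. If instead $a \neq b$ and we had $(a,b) \notin R(u,v)$, then Lemma~\ref{lemma:all-non-shared} forces both $a$ and $b$ into $D(u) \cap D(v)$; combined with $a, b \in D_I$ this puts both values in $D_I(u) \cap D_I(v)$. But then the canonical choice on $u$-side gives $a \preceq b$ (since $a$ is lex-minimal in $D_I(u) \ni b$), and the canonical choice on $v$-side gives $b \preceq a$, a contradiction.

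Given the pairwise consistency of $\lambda_0$, Lemma~\ref{lemma:majority} extends it to a labelling $\lambda \in \opt$, and since $\lambda_0(v) \in D_I$ for every $v \in U$, the same holds for $\lambda$, finishing the argument. The only mildly subtle point is the design of the canonical rule in paragraph two — the lexicographic tiebreak is what neutralizes the ``double shared value'' obstruction allowed by Lemma~\ref{lemma:all-non-shared}; everything else is a direct combination of that lemma with the majority/binary-projection argument of Lemma~\ref{lemma:majority}.
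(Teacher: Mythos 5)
Your proposal is correct and follows essentially the same approach as the paper: fix a global total order, assign each $v\in U$ the order-minimal element of $D_I(v)$, use Lemma~\ref{lemma:all-non-shared} to deduce pairwise support in $R(u,v)$ (the ``both shared and distinct'' failure mode is ruled out by minimality on both sides), and then extend via Lemma~\ref{lemma:majority}. The one cosmetic difference is that your depth-then-lex ordering on $V(T)$ is more elaborate than needed---the paper simply takes an arbitrary total order on $D_I$, since the contradiction you derive never actually uses the depth component.
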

\begin{proof}
  Let $U = \{v \in V \mid D_I(v) \neq \emptyset\}$ be the set of
  vertices with at least one integral value in the support.
  By Lemma~\ref{lemma:majority}, it suffices to produce a partial
  labelling $\lambda_0: U \to D_I$ such that 
  every binary projection of $\lambda_0$ is supported by $\opt$, i.e.,
  for every pair of distinct vertices $u, v \in U$
  we have $(\lambda_0(u), \lambda_0(v)) \in R(u,v)$.
  For this, define an arbitrary total order on
  $D_I$, and define $\lambda_0$ by selecting for every $v \in U$
  the value of $D_I(v)$ that is earliest according to this order. 
  Then for every pair $u, v \in U$ either $\lambda_0(u)=\lambda_0(v)$
  or one of the values $\lambda_0(u), \lambda_0(v)$ is not shared in
  $R(u,v)$. In both cases, by Lemma~\ref{lemma:all-non-shared}
  we have $(\lambda_0(u), \lambda_0(v)) \in R(u,v)$. 
\end{proof}

\noindent
Let us for reusability spell out the explicit assumptions and
requirements made until now. 

\begin{theorem} \label{theorem:unary-support-suffices}
  Let $I=(G=(V,E), \tau, \mu, q)$ be an instance of \textsc{Zero Extension}
  with no isolated vertices and 
  where every connected component of $G$ contains at least two
  terminals, and where $\mu$ is an induced tree metric for some tree
  $T$ and integral nodes $D_I \subseteq T$. Additionally, assume a
  collection of cost functions ${\mathcal F}=(f_i(\bar v_i))_{i=1}^m$
  has been given, where for every $f_i$ the scope is contained in $V$  
  and where $f_i$ is weakly tree submodular for every rooted version
  of $T$. Let $I'$ be the VCSP instance created from the sum of the 
  cost functions of $I$ and $\mathcal{F}$. Then $I'$ has the domain
  consistency property, i.e., there is an integral relaxed optimum if
  and only if every vertex $v$ is integral in at least one relaxed
  optimum of $I'$.  
\end{theorem}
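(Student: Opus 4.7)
The plan is to follow the argument of Lemma~\ref{lemma:unary-support-suffices} essentially verbatim, after verifying that each of its ingredients transfers from the pure \textsc{Zero Extension} instance $I$ to the combined VCSP instance $I'$. Throughout, write $\opt$, $D(v)$, $D_I(v)$ and $R(u,v)$ with respect to $I'$. The direction ``an integral optimum exists $\Rightarrow$ every vertex is integral in some optimum'' is immediate, so I focus on the converse.

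The key algebraic observation is that every valued constraint of $I'$ admits the weak tree-submodularity multimorphism $\omega_d = \angled{\lca, \lcaskew}$ for every rooted version of $T$: for the binary $\dist_T$-constraints inherited from $I$ this follows from Corollary~\ref{cor:tm-weak}, and for the constraints of $\mathcal{F}$ it is assumed. Consequently $\opt$ is closed under $\omega_d$ for every root $d$ and under the median polymorphism $m$ constructed in the proof of Lemma~\ref{lemma:majority}. Moreover, the binary $\dist_T$-constraints of $I$ are still present as constraints of $I'$, so applying Lemma~\ref{lemma:opt-local} to any such constraint yields that $\dist_T$ is preserved under equality in $R(u,v)$ for every edge $uv \in E$, which in turn makes Corollary~\ref{corollary:edge-supports} applicable along such edges.

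With these observations in hand I would re-derive the three intermediate statements in $I'$. Lemma~\ref{lemma:path-support-everywhere} carries over essentially verbatim: for each vertex $v$, the assumption that its $G$-component contains at least two terminals lets us walk along non-crossing edges from $v$ until some edge $uu' \in F$ is reached, whereupon Corollary~\ref{corollary:edge-supports} forces $R(u,u')$, and hence $D(v) = D(u)$, to lie on a path of $T$. Lemma~\ref{lemma:all-non-shared} then carries over unchanged, since its proof uses only that $R(u,v)$ is closed under $\omega_d$ for every root and that $D(u), D(v)$ lie on paths of $T$. Finally, Lemma~\ref{lemma:majority} carries over because the median polymorphism still preserves $\opt$ in $I'$.

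The conclusion then mirrors Lemma~\ref{lemma:unary-support-suffices} exactly: fix an arbitrary total order on $D_I$ and, for every $v$, let $\lambda_0(v)$ be the minimum element of $D_I(v)$ under this order. For any pair $u, v$, either $\lambda_0(u) = \lambda_0(v)$ or, by minimality of the choice, at least one of the two values lies outside $D(u) \cap D(v)$; either way the lifted Lemma~\ref{lemma:all-non-shared} gives $(\lambda_0(u), \lambda_0(v)) \in R(u,v)$, and the lifted Lemma~\ref{lemma:majority} then extends $\lambda_0$ to a full $\lambda \in \opt$ of the desired integral form. The main obstacle I anticipate is careful bookkeeping of where \emph{strong} rather than \emph{weak} tree submodularity is needed: Corollary~\ref{corollary:edge-supports} depends on the strong version via Lemma~\ref{lemma:tree-collinear}, but it is invoked only through the binary $\dist_T$-constraints, which are untouched by the addition of $\mathcal{F}$; every other step uses only weak tree submodularity, which holds uniformly across $I'$.
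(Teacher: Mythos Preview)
Your proposal is correct and follows essentially the same approach as the paper's own proof: verify that the ingredients of Lemmas~\ref{lemma:path-support-everywhere}, \ref{lemma:all-non-shared}, and \ref{lemma:majority} transfer from $\opt(I)$ to $\opt(I')$, then invoke the argument of Lemma~\ref{lemma:unary-support-suffices} verbatim. You in fact spell out more detail than the paper does, and your explicit flagging of the strong-versus-weak tree-submodularity issue (Corollary~\ref{corollary:edge-supports} needing the strong multimorphism, which is available for the $\dist_T$-constraints but not assumed for $\mathcal{F}$) is a point the paper's proof passes over silently.
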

\begin{proof}
  As noted in Lemma~\ref{lemma:path-support-everywhere}, due to the
  instance $I$ every vertex $v$ is either incident with at least one
  edge of $E$ that has non-zero length in at least one optimum,
  or it holds that $u=v$ in every optimal assignment, where 
  $u$ is such a vertex. Hence $D(v)$ forms a path in $T$,
  and for every pair of variables $u, v \in V$ 
  the conclusion of Lemma~\ref{lemma:all-non-shared} applies,
  even for the projection $R(u,v)$ of $\opt(I')$ (as opposed to just
  $\opt(I)$). The result follows as in
  Lemma~\ref{lemma:unary-support-suffices}. 
\end{proof}

\subsection{Gap algorithms for general induced tree metrics}

\noindent
We now use the results of Theorem~\ref{theorem:unary-support-suffices} to
provide $\FPT$ algorithms parameterized by the gap parameter $k-\rho$.


\thmtrees*
\begin{proof}
  This algorithm is similar to the algorithm for a leaf metric, except
  that we are not as easily able to test whether every variable has an
  integral value in $\opt$. By the results of 
  Section~\ref{section:leaf-metric}, the value of $\opt$ is witnessed
  by the collection of min-cuts for edges in $T$; we will use this as
  a value oracle for $I$. We initially compute a max-flow across every
  edge of $T$, then for every assignment made we can compute the new
  value of $\opt$ using $O(|T|)$ calls to augmenting path algorithms. 
  This allows us to test for optimality of an assignment 
  in $O(|T|m)$ time. The branching step then in general iterates over
  at most $n$ variables, testing at most $|D|$ assigned values for
  each, and testing for optimality each time. Hence the local work in
  a single node of the branching tree is $O(|T||D|nm)$. 
  This either produces a variable for branching on or (by
  Theorem~\ref{theorem:unary-support-suffices}) produces an integral
  assignment, and in each branching step the value of $\rho$ increases
  but $q$ does not. The time for the initial max-flow computation is
  eaten by the factor $|T|nm$. The result follows. 
\end{proof}

\noindent
For \textsc{Metric Labelling}, we first need to restrict the unary
cost functions to be weakly tree-submodular. 

\begin{lemma}
  Let $f \colon V(T) \to \R$ be a unary function on a tree $T$. 
  Then $f$ is weakly tree submodular on $T$ for every choice of 
  root $r \in V_T$ if and only if it observes the following
  \emph{interpolation property}: for any nodes $u, v \in V(T)$,
  at distance $\dist_T(u,v)=d$, and every $i \in [d-1]$,
  let $w_i$ be the node on $T[u,v]$ satisfying $d_T(u,w_i)=i$.
  Then for any such choice of $u$, $v$ and $i$, 
  it holds that $f(w_i) \leq ((d-i)/d)f(u) + (i/d)f(v)$.
\end{lemma}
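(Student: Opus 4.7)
The plan is to reduce the multimorphism condition to a one-dimensional statement along a single path, using the fact that $x \lca y$ and $x \lcaskew y$ always lie on the unique path $T[x,y] = w_0 w_1 \ldots w_d$ at complementary distances from $x$. Concretely, for any root $r$ and pair $(x,y)$ with $d = \dist_T(x,y)$, the LCA of $x$ and $y$ lies on $T[x,y]$; writing $x \lca y = w_i$ with $i = \dist_T(x, x \lca y)$, the definition of $\lcaskew$ forces $\dist_T(x, x \lcaskew y) = \dist_T(y, w_i) = d - i$, so $x \lcaskew y = w_{d-i}$. As the root $r$ varies over $V(T)$, the index $i$ ranges over $\{0, 1, \ldots, d\}$, and so the multimorphism inequality for $(x,y)$ is equivalent to $f(w_i) + f(w_{d-i}) \leq f(w_0) + f(w_d)$ for every such $i$.

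For the forward direction I would assume interpolation and apply the bound separately to $w_i$ and to $w_{d-i}$. The $x$-coefficient on the right side sums to $(d-i)/d + i/d = 1$, and likewise for the $y$-coefficient, so adding the two bounds gives $f(w_i) + f(w_{d-i}) \leq f(x) + f(y)$, which is exactly the multimorphism inequality. The endpoint cases $i \in \{0, d\}$ reduce to trivial identities.

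For the backward direction I would specialise the quantifier over roots so as to force midpoint convexity along every path. Fix a path $w_0 \ldots w_d$ and, for each interior $k \in \{1, \ldots, d-1\}$, root $T$ at $w_k$ and apply the multimorphism to the pair $(w_{k-1}, w_{k+1})$. These two vertices lie in distinct subtrees of $w_k$, so $w_{k-1} \lca w_{k+1} = w_k$; and since $\dist_T(w_{k+1}, w_k) = 1$ we also get $w_{k-1} \lcaskew w_{k+1} = w_k$. The multimorphism therefore yields $g(k-1) + g(k+1) \geq 2\, g(k)$ for the path restriction $g(k) := f(w_k)$. Hence the forward differences of $g$ are non-decreasing on $\{0, \ldots, d\}$, and a standard discrete averaging argument then gives $g(i) \leq \frac{d-i}{d} g(0) + \frac{i}{d} g(d)$, which is the interpolation inequality for $u = w_0$, $v = w_d$.

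The main point requiring care is the choice of specialisation in the converse direction: most natural choices of root and pair yield only the trivial inequality $f(x) + f(y) \geq f(x) + f(y)$. Rooting at the middle vertex $w_k$ while applying the multimorphism to its two path-neighbours is what simultaneously collapses both $\lca$ and $\lcaskew$ onto $w_k$, producing exactly the midpoint-convexity inequality needed to run the averaging argument.
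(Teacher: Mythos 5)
Your proof is correct. In the forward direction you make explicit the step the paper only asserts (``which clearly holds''), namely that the multimorphism inequality $f(u)+f(v) \geq f(w_i) + f(w_{d-i})$ follows by adding the interpolation bound at $w_i$ to the one at $w_{d-i}$, using that $\lca$ and $\lcaskew$ land at symmetric positions $w_i, w_{d-i}$ on $T[u,v]$. You also correctly note that roots \emph{off} the path $T[u,v]$ still produce a median on the path, so no case is genuinely vacuous; this actually patches a small imprecision in the paper, which dismisses the case $r \notin T[u,v]$ as ``vacuous'' even though the $r$-median can be an interior node of $T[u,v]$ (e.g.\ in a star).

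In the backward direction your route differs from the paper's. The paper argues by contrapositive: assuming the interpolation bound fails at some $w_i$, it takes the nearest indices $a<i<b$ where the bound holds, roots at $w_i$, and shows that the pair $(w_a, w_b)$ violates the multimorphism (using that, with $a<i<b$, both $w_i$ and $w_{a+b-i}$ lie strictly between $w_a$ and $w_b$ and hence strictly above the interpolant, while $f(w_a)+f(w_b)$ sits below the linear combination). You instead specialise the multimorphism to the pair $(w_{k-1}, w_{k+1})$ rooted at $w_k$, which collapses both $\lca$ and $\lcaskew$ onto $w_k$ and gives discrete midpoint convexity $g(k-1)+g(k+1) \geq 2g(k)$ along the path, after which the interpolation bound is the standard chord estimate for a function with non-decreasing forward differences. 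Your version is more elementary and more self-contained (no maximal-violating-interval bookkeeping), and the insight you flag as the ``main point requiring care'' --- that you must root at the middle vertex and pair its two neighbours, rather than taking $(u,v)$ as the pair, since the latter only constrains the sums $f(w_i)+f(w_{d-i})$ --- is exactly right and worth stating explicitly. Both proofs are valid; yours is cleaner.
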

\begin{proof}
  On the one hand, assume that $f$ has the gradient property,
  and let $u, v \in V(T)$ and let $r \in V(T)$ be a choice of root. 
  We need to show that $f$ is preserved by the $r$-rooted weakly tree
  submodular multimorphism. If $r$ does not lie on $T[u,v]$,
  or $r \in \{u,v\}$, then this is vacuous. Otherwise, let $\lca(u,v)$
  be the $i$:th node of $T[u,v]$, $i \in [d-1]$. 
  Then we need to show
  \[
  f(u) + f(v) \geq f(w_i) + f(w_{d-i}),
  \]
  which clearly holds. On the other hand, let $u, v \in V(T)$ 
  and $i \in [\dist_T(u,v)-t]$ be such that the interpolation
  property does not hold. Let $w_a$, $a<i$ be the last vertex 
  before $w_i$ on $T[u,v]$ such that the interpolation equality
  holds for $u$, $v$ and $a$, and let $w_b$, $b>i$ be the first 
  vertex after $w_i$ for which it holds. These vertices clearly 
  exist, possibly with choices $w_a=u$ and $w_b=v$. 
  Then $w_a$ and $w_b$ are a witness that the $w_i$-rooted weak tree
  submodularity multimorphism is not a multimorphism of $f$.
\end{proof}

\noindent
In particular, let $f_0 \colon U \to \Z^+$ be a non-negative function
defined on a subset $U$ of the nodes of a tree $T$, and say that $f_0$ 
\emph{admits an interpolation on $T$} if there is an extension 
$f \colon V(T) \to \Z^+$ with the interpolation property
such that $f(v)=f_0(v)$ for every $v \in U$. 
In particular, if $U$ is the set of leaves of $T$, then every 
function $f_0$ admits an interpolation by simply padding with zero
values (although stronger interpolations are in general both possible
and desirable). 

\thmtreesmetric*
\begin{proof}
  Assume that $G$ is connected, or else repeat the below for every
  connected component of $G$. Select two arbitrary vertices $u, v \in V$
  and exhaustively guess their labels; in the case that you guess them
  to have the same label, identify the vertices in $I$ (adding up
  their costs in $\sigma$) and select a new pair to guess on. 
  Note that this takes at most $O(|D|^2n)$ time, terminating 
  whenever you have guessed more than one label in a branch or
  when you have guessed that all vertices are to be identical.
  This guessing phase can only increase the value of $\rho$. 
  We may now treat $u$ and $v$ as terminals, and the instance
  $I$ as the sum of a \textsc{Zero Extension} instance on those two
  terminals and a collection of additional unary cost functions
  $\sigma(v',\cdot)$, as in Theorem~\ref{theorem:unary-support-suffices}.
  Note that the resulting VCSP is tractable, i.e., the value of an
  optimal solution can be computed in polynomial time. 
  The running time from this point on consists of iterating through
  all variables verifying whether each one has an integral value in
  some optimal assignment, and branching exhaustively on its value if
  not.  
\end{proof}

\noindent
In particular, as noted, for a leaf metric $\mu$ the algorithm applies
without any assumptions on $\sigma$ (and without $T$ being explicitly
provided).

\section{Conclusions}
\label{section:conclusions}
\noindent
We have given a range of algorithmic results for the
\textsc{Zero Extension} and \textsc{Metric Labelling} problems from 
a perspective of parameterized complexity.

Most generally, we showed that \textsc{Zero Extension} is $\FPT$
parameterized by the number of \emph{crossing edges} of an optimal
solution, \ie the number of edges whose endpoints receive distinct
labels, for a very general class of cost functions $\mu$ that need not
even be metrics. 
This is a relatively straight-forward application of the technique of
recursive understanding~\cite{RandomContraction}.

For the more reasonable case that $\mu$ is a metric, \ie observing the
triangle inequality, we gave two stronger results for the same
parameter. First, we showed a linear-time $\FPT$ algorithm, with a better 
parameter dependency, using an important separators-based
algorithm. Second, and highly surprisingly, we show that 
every graph $G$ with a terminal set $S$
admits a polynomial-time computable, polynomial-sized 
\emph{metric sparsifier} $G'$, with $O(k^{s+1})$ edges, 
such that $(G',S)$ mimics the behaviour of $(G,S)$ over \emph{any}
metric on at most $s$ labels (up to solutions with crossing number
$k$). This is a direct and seemingly far-reaching generalization of
the polynomial kernel for \textsc{$s$-Multiway Cut}~\cite{KratschW12FOCS}, 
which corresponds to the special case of the uniform metric. 

Finally, we further developed the toolkit of discrete relaxations to
design $\FPT$ algorithms under a \emph{gap parameter}
for \Problem{Zero Extension} and \Problem{Metric Labelling}
where the metric is an induced tree metric (\ie a restriction
of a tree metric to a subset of the values).
This in particular involves a more general $\FPT$ algorithm approach,
supported by an applicability condition of \emph{domain consistency},
relaxing the previously used persistence condition. 

Let us highlight some questions. 
First, is there a lower bound on the size of a metric sparsifier 
for $s$ labels for \textsc{Zero Extension}? This is particularly 
relevant since the existence of a polynomial kernel for
\textsc{$s$-Multiway Cut} whose degree does not scale with $s$
is an important open problem, and since the metric sparsifier is a
more general result. 

Second, can the $\FPT$ algorithms for induced tree metrics parameterized
by the relaxation gap be generalised to restrictions of other tractable metrics, 
such as graph metrics for  median graphs or the most general tractable
class of orientable modular graphs~\cite{Hirai13SODA}? 
Complementing this, what are the strongest possible gap parameters
that allow $\FPT$ algorithms for metrics that are either arbitrary,
or do not explicitly provide their relaxation?

More broadly, we also ask how far the method of discrete relaxations 
stretches in general. 
Let \textsc{Subset VCSP} be the class of problems defined 
by a (presumably tractable) integer-valued language $\Gamma$ 
on a domain $D$, and a subset $D_I \subseteq D$ of integral values in the domain,
where the problem is equivalent to the VCSP on $\Gamma$ restricted to
the domain $D_I$, but where we ask for an $\FPT$ algorithm 
parameterized by the corresponding relaxation gap. 
Can it be characterized for which $\Gamma$ and $D_I$ 
this problem is $\FPT$, and/or for which cases  the domain consistency
property holds?

\bibliographystyle{abbrv}
\bibliography{biblio}

\begin{thebibliography}{10}

\bibitem{DirtyPictures}
J.~Besag.
\newblock On the statistical analysis of dirty pictures.
\newblock {\em J. Royal Stat. Soc. Ser. B}, pages 259--302, 1986.

\bibitem{ApplHypertextClassifier}
S.~Chakrabarti, B.~Dom, and P.~Indyk.
\newblock Enhanced hypertext categorization using hyperlinks.
\newblock In {\em ACM SIGMOD Record}, volume~27, pages 307--318. ACM, 1998.

\bibitem{ChenLL09MWC}
J.~Chen, Y.~Liu, and S.~Lu.
\newblock An improved parameterized algorithm for the minimum node multiway cut
  problem.
\newblock {\em Algorithmica}, 55(1):1--13, 2009.

\bibitem{RandomContraction}
R.~Chitnis, M.~Cygan, M.~Hajiaghayi, M.~Pilipczuk, and M.~Pilipczuk.
\newblock Designing {FPT} algorithms for cut problems using randomized
  contractions.
\newblock {\em SIAM J. Sci. Comput.}, 45(4):1171--1229, 2016.

\bibitem{MetricLabellingHardness}
J.~Chuzhoy and J.~Naor.
\newblock The hardness of metric labeling.
\newblock {\em {SIAM} J. Sci. Comput.}, 36(5):1376--1386, 2007.

\bibitem{TreeMetricConstruction}
J.~Culberson and P.~Rudnicki.
\newblock A fast algorithm for constructing trees from distance matrices.
\newblock {\em Inf. Process. Lett.}, 30(4):215--220, 1989.

\bibitem{ZeroExtApprox}
J.~Fakcharoenphol, C.~Harrelson, S.~Rao, and K.~Talwar.
\newblock An improved approximation algorithm for the 0-extension problem.
\newblock In {\em Proc. of 14th {SODA}}, pages 257--265. SIAM, 2003.

\bibitem{TreeMetricApprox}
J.~Fakcharoenphol, S.~Rao, and K.~Talwar.
\newblock A tight bound on approximating arbitrary metrics by tree metrics.
\newblock {\em J. Comput. Syst. Sci.}, 69(3):485--497, 2004.

\bibitem{FominLPS16}
F.~V. Fomin, D.~Lokshtanov, F.~Panolan, and S.~Saurabh.
\newblock Efficient {C}omputation of {R}epresentative {F}amilies with
  {A}pplications in {P}arameterized and {E}xact {A}lgorithms.
\newblock {\em J. {ACM}}, 63(4):29:1--29:60, 2016.

\bibitem{Hirai14MOR}
H.~Hirai.
\newblock The maximum multiflow problems with bounded fractionality.
\newblock {\em Math. Oper. Res.}, 39(1):60--104, 2014.

\bibitem{Hirai13SODA}
H.~Hirai.
\newblock Discrete convexity and polynomial solvability in minimum 0-extension
  problems.
\newblock {\em Math. Program.}, 155(1-2):1--55, 2016.

\bibitem{SPaths}
H.~Hirai and G.~Pap.
\newblock Tree metrics and edge-disjoint {$S$}-paths.
\newblock {\em Math. Program.}, 147(1-2):81--123, 2014.

\bibitem{IwataWY16}
Y.~Iwata, M.~Wahlstr{\"{o}}m, and Y.~Yoshida.
\newblock Half-integrality, {LP}-branching, and {FPT} algorithms.
\newblock {\em {SIAM} J. Comput.}, 45(4):1377--1411, 2016.

\bibitem{IwataYY17}
Y.~Iwata, Y.~Yamaguchi, and Y.~Yoshida.
\newblock Linear-time {FPT} algorithms via half-integral non-returning {A}-path
  packing.
\newblock {\em CoRR}, abs/1704.02700, 2017.

\bibitem{ZeroExtHardness}
H.~Karloff, S.~Khot, A.~Mehta, and Y.~Rabani.
\newblock On earthmover distance, metric labeling, and 0-extension.
\newblock {\em {SIAM} J. Sci. Comput.}, 39(2):371--387, 2009.

\bibitem{Karzanov980Ext}
A.~Karzanov.
\newblock Minimum 0-extensions of graph metrics.
\newblock {\em Eur.\,J.\,Comb.}, 19(1):71--101, 1998.

\bibitem{TreeSubmodular}
V.~Kolmogorov.
\newblock Submodularity on a tree: Unifying {L}${}^\natural$natural-convex and
  bisubmodular functions.
\newblock In {\em Proc. of 36th MFCS}, pages 400--411. Springer, 2011.

\bibitem{KolmogorovKR17}
V.~Kolmogorov, A.~A. Krokhin, and M.~Rol{\'{\i}}nek.
\newblock The complexity of general-valued {CSP}s.
\newblock {\em {SIAM} J. Comput.}, 46(3):1087--1110, 2017.

\bibitem{KratschW12FOCS}
S.~Kratsch and M.~Wahlstr{\"{o}}m.
\newblock Representative sets and irrelevant vertices: New tools for
  kernelization.
\newblock In {\em Proc. of 53rd {FOCS}}, pages 450--459. {IEEE} Computer
  Society, 2012.

\bibitem{ApplInpainting}
V.~Kumar, J.~Mukherjee, and S.~Mandal.
\newblock Image inpainting through metric labeling via guided patch mixing.
\newblock {\em {IEEE} Trans. Image Process.}, 25(11):5212--5226, 2016.

\bibitem{Lovasz1977}
L.~Lov\'asz.
\newblock Flats in matroids and geometric graphs.
\newblock In {\em Proc. of 6th Br. Comb. Conf.}, Combinatorial Surveys, pages
  45--86, 1977.

\bibitem{ManokaranNRS08STOC}
R.~Manokaran, J.~Naor, P.~Raghavendra, and R.~Schwartz.
\newblock {SDP} gaps and {UGC} hardness for multiway cut, 0-extension, and
  metric labeling.
\newblock In {\em Proc. of 40th {STOC}}, pages 11--20, 2008.

\bibitem{Marx06cuts}
D.~Marx.
\newblock Parameterized graph separation problems.
\newblock {\em Theor.\,Comput.\,Sci.}, 351(3):394--406, 2006.

\bibitem{Marx09-matroid}
D.~Marx.
\newblock A parameterized view on matroid optimization problems.
\newblock {\em Theor. Comput. Sci.}, 410(44):4471--4479, 2009.

\bibitem{MarxR14MCalg}
D.~Marx and I.~Razgon.
\newblock Fixed-parameter tractability of multicut parameterized by the size of
  the cutset.
\newblock {\em {SIAM} J. Comput.}, 43(2):355--388, 2014.

\bibitem{Splitters}
M.~Naor, L.J.Schulman, and A.~Srinivasan.
\newblock Splitters and near-optimal derandomization.
\newblock In {\em Proc. of 36th FOCS}, pages 182--191. IEEE, 1995.

\bibitem{ApplSocNetClassifier}
M.~{\'O}skarsd{\'o}ttir, C.~Bravo, W.~Verbeke, C.~Sarraute, B.~Baesens, and
  J.~Vanthienen.
\newblock A comparative study of social network classifiers for predicting
  churn in the telecommunication industry.
\newblock In {\em Proc. of 8th ASONAM}, pages 1151--1158. IEEE, 2016.

\bibitem{OxleyBook}
J.~Oxley.
\newblock {\em Matroid Theory}.
\newblock Oxford University Press, 2011.

\bibitem{ApplRatingInference}
B.~Pang and L.~Lee.
\newblock Seeing stars: Exploiting class relationships for sentiment
  categorization with respect to rating scales.
\newblock In {\em Proc. of 43rd {ACL}}, pages 115--124. Association for
  Computational Linguistics, 2005.

\bibitem{ZeroExtOnTrees}
J.~Picard and H.~Ratliff.
\newblock A cut approach to the rectilinear distance facility location problem.
\newblock {\em Oper. Res.}, 26(3):422--433, 1978.

\bibitem{SchrijverBook}
A.~Schrijver.
\newblock {\em Combinatorial Optimization: Polyhedra and Efficiency}.
\newblock Springer, 2003.

\bibitem{MetricLabellingApprox}
\textls[ 2]{J. Kleinberg and E. Tardos}.
\newblock Approximation algorithms for classification problems with pairwise
  relationships: Metric labeling and markov random fields.
\newblock {\em J.\,ACM}, 49(5):616--639, 2002.

\bibitem{ThapperZivnyCSP}
J.~Thapper and S.~{\v{Z}}ivn{\'y}.
\newblock The complexity of finite-valued {CSP}s.
\newblock {\em J.\,ACM}, 63(4):37, 2016.

\bibitem{RetractNPHard}
N.~Vikas.
\newblock A complete and equal computational complexity classification of
  compaction and retraction to all graphs with at most four vertices and some
  general results.
\newblock {\em J. Comput. Syst. Sci.}, 71(4):406--439, 2005.

\bibitem{Wahlstrom17bias}
M.~Wahlstr{\"{o}}m.
\newblock {LP}-branching algorithms based on biased graphs.
\newblock In {\em Proc. of 28th {SODA}}, pages 1559--1570. {SIAM}, 2017.

\end{thebibliography}

\end{document}